\documentclass[
 reprint,
 amsmath,amssymb,
 aps,nofootinbib]{revtex4-2}

\usepackage{graphicx}
\usepackage{subcaption}
\usepackage{dcolumn}
\usepackage{bm}
\usepackage{hyperref}
\usepackage{braket}
\usepackage{xcolor}
\usepackage{comment}
\usepackage{amsthm}
\usepackage{tikz}
\usetikzlibrary{decorations.pathmorphing,positioning,arrows.meta,calc}
\usepackage{algorithm}
\usepackage{algpseudocode}
\newtheorem{theorem}{Theorem}
\newtheorem{lemma}{Lemma}
\newtheorem{remark}{Remark}
\newtheorem{proposition}{Proposition}
\newtheorem{corollary}{Corollary}
\newtheorem{problem}{Problem}
\newtheorem{definition}{Definition}

\newcommand{\francisco}[1]{}

\begin{document}

\preprint{APS/123-QED}

\title{Entanglement improves coordination in distributed systems}

\author{Francisco Ferreira da Silva}
    \thanks{Corresponding author: \href{mailto:francisco@delftnetworks.com}{francisco@delftnetworks.com}}
    \affiliation{Delft Networks, Lorentzweg 1, 2628 CJ Delft, The Netherlands}
    \affiliation{QuTech, Delft University of Technology, Lorentzweg 1, 2628 CJ Delft, The Netherlands}
\author{Stephanie Wehner}
    \affiliation{Delft Networks, Lorentzweg 1, 2628 CJ Delft, The Netherlands}
    \affiliation{QuTech, Delft University of Technology, Lorentzweg 1, 2628 CJ Delft, The Netherlands}
    \affiliation{Quantum Computer Science, EEMCS, Delft University of Technology, Lorentzweg 1, 2628 CJ Delft, The Netherlands}
    \affiliation{Kavli Institute of Nanoscience, Delft University of Technology, Lorentzweg 1, 2628 CJ Delft, The Netherlands}
\begin{abstract}
Coordination in distributed systems is often hampered by communication latency, which degrades performance.
Quantum entanglement offers fundamentally stronger correlations than classically achievable without communication.
Crucially, these correlations manifest instantaneously upon measurement, irrespective of the physical distance separating the systems.
We investigate the application of shared entanglement to a dual-work optimization problem in a distributed system comprising two servers.
The system must process both a continuously available, preemptible baseline task and incoming customer requests arriving in pairs.
System performance is characterized by the trade-off between baseline task throughput and customer waiting time.
We present a rigorous analytical model demonstrating that when the baseline task throughput function is strictly convex, rewarding longer uninterrupted processing periods, entanglement-assisted routing strategies achieve Pareto-superior performance compared to optimal communication-free classical strategies.
We prove this advantage through queueing-theoretic analysis, non-local game formulation, and computational certification of classical bounds.
We complement these formal results with simulations indicating that the advantage persists when arrivals are independent across routers, and grows under bursty traffic.
Our results suggest distributed scheduling and coordination as a candidate application domain for near-term entanglement-based quantum networks.
\end{abstract}
\maketitle
\section{Introduction}
Coordination enables efficient operation in distributed systems.
An important application lies in scheduling, where incoming requests must be assigned across multiple servers~\cite{casavant1988taxonomy}.
Optimal scheduling and load balancing often rely on global state information, such as current server loads or queue lengths.
However, acquiring this information via classical communication introduces latency.
In latency-sensitive scenarios, this delay can render state information obsolete, leading to suboptimal decisions based on outdated data and consequently degrading overall system performance~\cite{wang1985load,jiang2015survey}.
For instance, routing incoming user requests without real-time knowledge of server availability can cause load imbalance and increase user wait times.
More generally, when routing or scheduling decisions must be made on timescales that are short compared to the communication delay between routers and servers, any attempt to gather fresh state before each decision either introduces unacceptable delay or relies on information that is already stale by the time it is used.
For example, in a wide-area deployment with inter-site separations of order $100$ km, classical round-trip latencies are typically in the sub-millisecond to millisecond range, while local routing or scheduling decisions inside a high-speed service may need to be taken on timescales of tens of microseconds or less.
In such regimes, global coordination based on classical communication becomes fundamentally limited.

Entanglement offers a fundamentally new approach to coordination.
It provides a mechanism for establishing correlations between spatially separate systems that are stronger than any achievable classically without communication~\cite{bell1964einstein,brunner2014bell}.
This mechanism can be implemented as follows.
Initially, the coordinating parties share an entangled quantum state.
At a later time, upon receiving local information relevant to their coordination task, each party performs a measurement on their component of the entangled state, which can be conditioned on the local information they received.
The outcomes of these local measurements will exhibit strong non-local correlations and can be used to guide the parties' decisions, thus enabling coordination without communication.

In this work, we investigate the application of entanglement-assisted coordination to a routing problem in a distributed system with two servers.
The system must process both a continuously available, preemptible baseline task and incoming customer requests.
System performance is characterized by the trade-off between baseline task throughput and customer waiting time.
The challenge lies in coordinating the assignment of incoming requests to the servers based only on local information, namely the processing time required by the local request, under latency constraints that preclude effective real-time communication.
In this work we therefore take strictly non-communicating classical routing policies, where each server's decision depends only on its own local information, as the baseline for comparison.
We emphasize that this restriction is imposed symmetrically on quantum and classical strategies, so that any performance gap isolates the contribution of entanglement at a fixed (zero) communication budget; comparing entanglement-assisted strategies against classical protocols that use delayed or coarse-grained state information is a separate question, which we leave to future work.

We show that when the cumulative baseline output function $T(t)$ is strictly convex, entanglement-assisted routing achieves Pareto-superior performance compared to optimal classical strategies without communication.
Heralded entanglement generation between physically separated systems has been demonstrated in multiple qubit platforms~\cite{covey2023quantum,ruf2021quantum,krutyanskiy2023entanglement}, including over deployed fiber~\cite{stolk2024metropolitan}.
This makes entanglement-assisted coordination an attractive near-term application of quantum networks.

\section{Related Work}
This work provides a complete analytical treatment of the routing problem introduced in~\cite{da2025entanglement}, where we first demonstrated quantum advantages in distributed scheduling.
The present manuscript extends that work with full queueing-theoretic proofs, a rigorous mapping to a weighted non-local game, certified classical bounds and numerical robustness checks against the simplifying assumptions on arrival processes.

The underlying principle behind leveraging entanglement for coordination traces back to Bell's theorem, which established that quantum mechanics predicts correlations stronger than any classical theory permits~\cite{bell1964einstein,hensen2015loophole}.
This phenomenon is formalized through non-local games, where non-communicating players cooperate to maximize a payoff; for some games, quantum strategies outperform classical ones~\cite{clauser1969proposed,brunner2014bell}.
Related but orthogonal to our focus,~\cite{ding2025quantum} extends the non-local game framework to settings where parties can communicate subject to timing constraints; our model corresponds to the zero-communication extreme.

Several works have explored translating quantum advantages in abstract non-local games into practical benefits by mapping coordination problems onto game structures.
Examples include market making in high-frequency trading~\cite{ding2024coordinating}, load balancing in ad-hoc networks~\cite{hasanpour2017quantum}, rendezvous tasks~\cite{mironowicz2023entangled,viola2024quantum,tucker2024quantum}, and broader networked systems~\cite{arun2025faster}.

From a different perspective, the problem studied in this work falls within the classical domains of load balancing, scheduling theory, and queueing theory; for foundational concepts, see~\cite{wang1985load,casavant1988taxonomy,gross2011fundamentals}.
Our work differs from classical approaches by introducing entanglement as a coordination mechanism.

Classical randomized load-balancing schemes, such as join-the-shortest-queue and the power-of-two-choices family, can dramatically improve performance compared to naive routing by using a small amount of communication to obtain partial state information~\cite{mitzenmacher2002power,gupta2007analysis}.
Our setting is different in that we explicitly target regimes where the routing decision must be made on timescales short compared to the round-trip communication delay between servers, so even this would either introduce unacceptable delay or rely on stale information.
Accordingly, we take strictly non-communicating classical policies as our baseline, and compare them to entanglement-assisted strategies that achieve stronger non-local correlations without real-time communication.


\section{System Description}
\label{sec:system_description}

We now introduce the distributed system we consider, depicted in Figure~\ref{fig:system}.
It consists of two identical servers that process work at rate $\mu$.
Each server maintains a queue of unlimited capacity and follows a first-come, first-served (FCFS) discipline.

\begin{figure}[!htpb]
\centering
\includegraphics[width=\columnwidth]{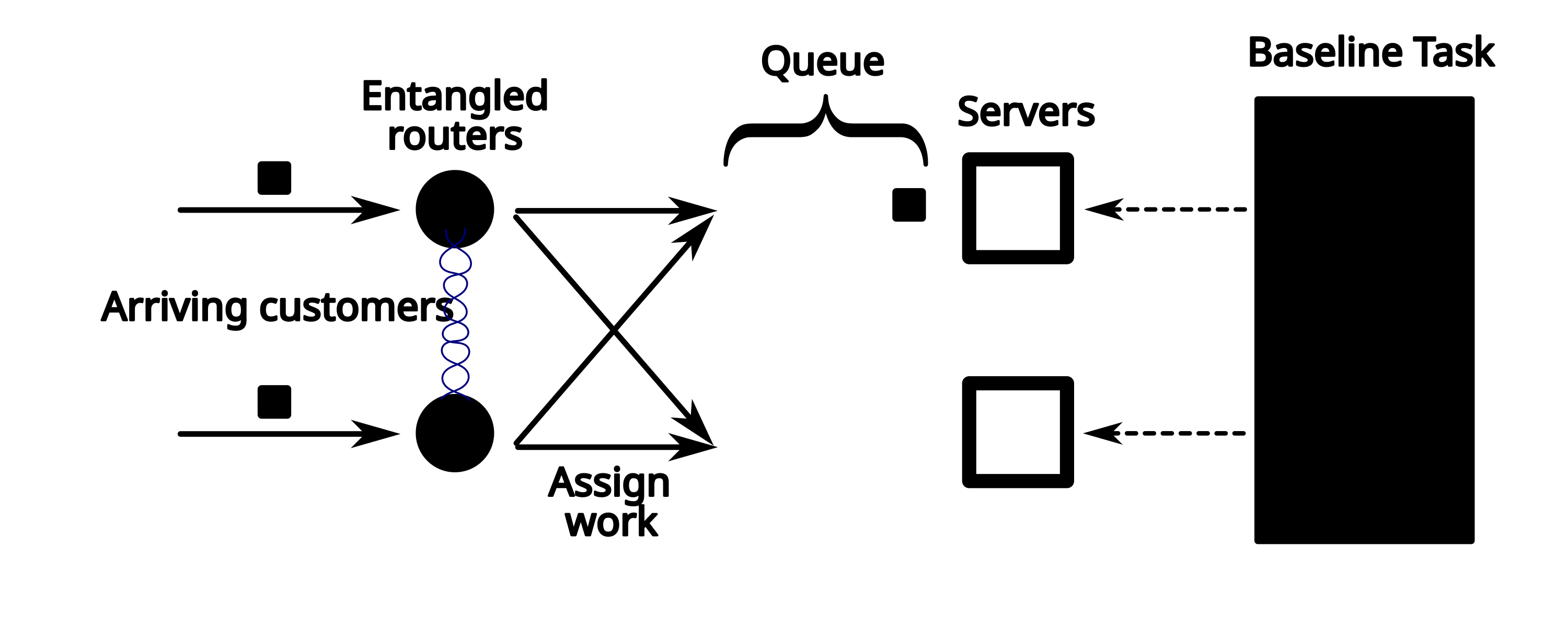}
\caption{Distributed system studied in this work.
Two servers, depicted by hollow squares, handle two distinct types of work.
On the right, a baseline task that is preemptible and always available.
On the left, customer requests that dynamically arrive at the system via the routers, depicted as circles.
The servers are endowed with queues of unlimited size in which customers wait.
The routers may share entanglement with each other, which they can use to better coordinate their routing decisions.
Entanglement is depicted here by wavy lines connecting the two routers.}
\label{fig:system}
\end{figure}

The system handles two distinct types of work.
First, a continuously available baseline task that is always present and can be processed by either server.
This baseline task is preemptible: a server working on it immediately switches to servicing a customer upon assignment.
Servers are never idle, as when a server's queue is empty, it resumes processing the baseline task.

Second, customer requests that arrive dynamically in pairs.
Specifically, we assume that pairs of customers arrive simultaneously, with one customer arriving at each of two routers (depicted as circles in Figure~\ref{fig:system}).
Customer pairs arrive according to a Poisson process with rate $\lambda$.
The service time $X_i$ required by customer $i$ is drawn independently from an exponential distribution $\text{Exp}(\mu)$ with mean $1/\mu$.
This yields a two-server FCFS queueing system with Poisson pair arrivals and exponential service times, with per-server utilization $\rho = \lambda/\mu < 1$~\cite{gross2011fundamentals}.

Upon arrival of a customer pair $(X_1, X_2)$, the routers must assign their respective customers to one of the two servers.
Each router observes only the service time of its local customer.
A key constraint is that the routers cannot communicate in real time to coordinate this assignment.
This models scenarios where physical distance introduces communication latencies that are significant compared to the decision timescale, so exchanging useful real-time state is infeasible.

Each routing strategy induces a splitting probability $p$, the long-run fraction of customer pairs sent to different servers.
When customers are bunched (probability $1-p$), one is selected uniformly at random to precede the other in the queue.

We evaluate system performance along two dimensions.
First, $W_q$, the average time customers spend waiting in queue before service begins.

Second, baseline throughput, which measures the system's productivity on the continuously available baseline task.
Recall that servers work on this baseline task whenever their queue is empty, switching to customer service when assignments arrive.
Let $T(t)$ denote the output produced when a server works uninterrupted on the baseline task for duration $t$.
We assume $T$ is differentiable, increasing, with $T(0) = 0$.
The long-run average baseline throughput per server, which we denote $\mathcal{T}$, quantifies how much baseline work the system completes over time.

\section{The Coordination Challenge}
\label{sec:coordination}
We show that the optimal routing policy for the routing problem we defined requires both routers to know both service times $(X_1, X_2)$, making it unattainable for routers restricted to local observations and pre-shared resources.
This establishes that the routers are faced with a coordination challenge.

A rigorous queueing-theoretic analysis of the model introduced in the previous section, including proofs for statements made in this section, is given in Appendix~\ref{sec:delta_threshold}.
We start by noting that $\mathcal{T}$ depends only on the overall splitting probability $p$, not on which specific pairs are split.
This is because throughput depends only on the frequency and duration of idle periods, which are determined by the splitting probability.

In contrast, customer waiting time depends on which pairs are split.
Splitting pairs with high service times is disproportionately helpful in reducing waiting time.
This is due to the dependence of the waiting time on the second moment of the service time distribution~\cite{gross2011fundamentals}.

Given that baseline throughput depends only on $p$, all routing strategies resulting in the same splitting probability will achieve the same throughput.
Hence, among this group of strategies, we wish to find the one that minimizes waiting time.
Finding the optimal routing strategy for a fixed splitting probability $p$ is equivalent to Problem~\ref{prob:fixed_p_main}.
By solving this for all $p \in [0,1]$, we can characterize the optimal trade-off between the two objectives.

\begin{problem}[Optimal routing at fixed splitting probability]\label{prob:fixed_p_main}
Given a target splitting probability $p \in [0,1]$, find a routing policy $r: \mathbb{R}_+^2 \to [0,1]$ that solves
\begin{equation}
\begin{aligned}
\text{maximize} \quad &\mathbb{E}[r(X_1,X_2) \cdot w(X_1, X_2)] \\
\text{subject to} \quad & \mathbb{E}[r(X_1,X_2)] = p, \\
& 0 \leq r(x_1,x_2) \leq 1 \text{ for all } (x_1,x_2).
\end{aligned}
\end{equation}
\end{problem}
Here $r(x_1, x_2)$ denotes the probability of splitting a specific pair with realized service times $(x_1, x_2)$, under load-balanced server assignment: when splitting, each server is equally likely to receive either customer; when bunching, the destination server is chosen uniformly at random.\footnote{Throughout this manuscript, we use uppercase letters (e.g., $X_1$) to denote random variables and lowercase letters (e.g., $x_1$) to denote their realizations or dummy variables in function definitions.}
This restriction is without loss of generality, as load-balanced policies achieve weakly lower waiting time at the same throughput (Appendix~\ref{sec:symmetrization}).
The splitting benefit $w(x_1,x_2)$ is given by
\begin{equation}
w(x_1, x_2) = c_1 x_1 x_2 + c_2(x_1 + x_2),
\end{equation}
where $c_1 = \frac{\lambda}{2(1-\rho)}$ and $c_2 = \frac{1}{4}$.
This function quantifies the waiting time reduction from splitting rather than bunching a pair with service times $(x_1, x_2)$; higher values indicate pairs where splitting provides greater benefit.
The first term arises from the Pollaczek--Khinchine formula~\cite{gross2011fundamentals} for the $M^X/G/1$ queue, and captures the queueing benefit of reducing workload variance by sending the two customers to different servers.
The second captures the within-batch delay: when a pair is bunched behind a single server, one customer waits for the other to finish service.
A full derivation is given in Appendix~\ref{sec:delta_threshold}.
We show in Appendix~\ref{sec:optimal_policy} that the solution to this optimization problem has a simple structure: it is a threshold policy that splits pairs when $w(X_1, X_2) > \tau_p$ and bunches pairs when $w(X_1, X_2) < \tau_p$, where threshold $\tau_p$ is chosen such that $\Pr[w(X_1, X_2) > \tau_p] = p$.
We call this the $w$-threshold policy.

\begin{definition}[Pareto Optimality]
A routing policy is Pareto optimal with respect to customer waiting time $W_q$ and baseline throughput $\mathcal{T}$ if there exists no other feasible policy that can strictly improve one objective (decrease $W_q$ or increase $\mathcal{T}$) without worsening the other. The set of all such policies constitutes the Pareto frontier.
\end{definition}

\begin{theorem}[Pareto Optimality]
\label{thm:pareto}
Assume the baseline throughput function $T(t)$ is strictly convex.
Then the family of $w$-threshold policies, parametrized by $p \in [0,1]$, traces out the complete Pareto frontier between customer waiting time and baseline throughput.
\end{theorem}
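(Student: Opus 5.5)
The argument reduces the two-objective problem to a one-parameter family, indexed by the splitting probability $p$, and then shows that the two objectives move in opposite directions along that family. I will rely on three facts stated earlier in the paper. First, $\mathcal{T}$ depends on a policy only through its splitting probability $p$; write this as $\mathcal{T}(p)$. Second, after load-balanced symmetrization (Appendix~\ref{sec:symmetrization}), the waiting time of a policy with splitting function $r$ has the form $W_q = W_0 - \mathbb{E}[r\, w]$. Here $W_0$ is the all-bunching waiting time, which is independent of $r$. Third, the $w$-threshold policy solves Problem~\ref{prob:fixed_p_main}. The fourth and last ingredient needs to be proved: $\mathcal{T}(p)$ is strictly decreasing in $p$ whenever $T$ is strictly convex.

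\textbf{Step 1 (waiting time).} Let $W^*(p) = W_0 - \max_r \mathbb{E}[rw]$, the optimum being attained by the $w$-threshold policy. I will show $W^*$ is strictly decreasing in $p$. Since $w>0$ almost surely, raising $p$ to $p'>p$ adds a region of positive $w$-mass to the split set. This strictly increases $\mathbb{E}[rw]$, because the maximum is attained at the threshold with $\Pr[w>\tau_p]=p$. The distribution of $w$ is continuous, so every $p$ is exactly attainable and no randomization at the threshold is needed.

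\textbf{Step 2 (throughput), the main obstacle.} I will compute $\mathcal{T}(p)$ via renewal--reward. Each server alternates between busy periods and idle periods during which it works on the baseline task. A server is hit by an arrival batch at Poisson rate $\lambda(1-p/2)$: each pair is either split, hitting both servers, or bunched, hitting the chosen server with probability $1/2$. Idle periods are exponential with this rate $\nu(p) = \lambda(1-p/2)$, and the long-run fraction of idle time is fixed at $1-\rho$ by work conservation. Hence
\[
\mathcal{T}(p) = (1-\rho)\,\frac{\mathbb{E}[T(I_p)]}{\mathbb{E}[I_p]}, \qquad I_p \sim \mathrm{Exp}(\nu(p)).
\]
With $\mathbb{E}[I_p] = 1/\nu$, this becomes $\mathcal{T} = (1-\rho)\,\nu\,\mathbb{E}[T(I_p)]$. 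Substituting $I_p = E/\nu$ with $E\sim\mathrm{Exp}(1)$ gives $\mathcal{T} = (1-\rho)\,\mathbb{E}\!\left[T(E/\nu)/(1/\nu)\right]$. Because $T$ is strictly convex with $T(0)=0$, the ratio $T(s)/s$ is strictly increasing in $s$. Therefore $\mathcal{T}$ is strictly increasing in $1/\nu$, that is, strictly decreasing in $p$.

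The care needed is in justifying that idle periods are exactly exponential with rate $\nu(p)$, independent of which pairs are split. This holds because arrivals are Poisson and the routing decision is independent of arrival times. The fact that $\mathcal{T}$ depends only on $p$ is already stated earlier in the paper, so I will invoke it and add only the strict monotonicity.

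\textbf{Step 3 (frontier).} Take any feasible policy $\pi$ with splitting probability $p$. By the first fact $\mathcal{T}(\pi) = \mathcal{T}(p)$, and by Step 1 $W_q(\pi) \ge W^*(p)$, so $\pi$ is weakly dominated by the $w$-threshold policy at the same $p$. Hence every Pareto-optimal point lies in the threshold family. Conversely, suppose another policy with splitting probability $p'$ weakly dominated the threshold policy at $p$. If $p'>p$, then $\mathcal{T}(p') < \mathcal{T}(p)$ by Step 2, so its throughput is strictly worse. If $p'<p$, then its waiting time is at least $W^*(p') > W^*(p)$ by Step 1, so its waiting time is strictly worse. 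If $p'=p$, it cannot beat $W^*(p)$ and has equal throughput, so it cannot strictly improve either objective. Therefore each threshold policy is Pareto optimal, and the family, parametrized by $p\in[0,1]$, traces out the complete Pareto frontier.
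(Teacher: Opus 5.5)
\textbf{Step 2 contains an error that breaks the monotonicity argument as written.} Your verbal description of the routing is right, but the arithmetic is not. The per-server batch rate is
\[
\lambda\bigl[p\cdot 1+(1-p)\cdot\tfrac12\bigr]=\lambda(1+p)/2,
\]
which is the paper's $\Lambda$, not $\lambda(1-p/2)$. Your formula is what you would get if $p$ were the \emph{bunching} probability. Check $p=0$: every pair is bunched, so each server is hit at rate $\lambda/2$, not $\lambda$.

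This matters for the direction of the conclusion. With your $\nu(p)=\lambda(1-p/2)$, $\nu$ decreases in $p$. Your correct observation that $\mathcal{T}$ is strictly increasing in $1/\nu$ would then give $\mathcal{T}$ strictly \emph{increasing} in $p$. So the closing ``that is, strictly decreasing in $p$'' does not follow from what you wrote. Replacing $\nu$ by $\lambda(1+p)/2$, which is strictly increasing in $p$, repairs the step.

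\textbf{With that fix the proof is correct, and it partly differs from the paper's.} Steps 1 and 3 follow the paper's proof of Theorem~\ref{thm:pareto_optimality}. Your three-case analysis ($p'>p$, $p'<p$, $p'=p$) is a little more explicit than the paper's. It makes clear that for $p'<p$ you need $W_q(\pi)\ge W^*(p')>W^*(p)$ for an \emph{arbitrary} policy $\pi$, not only for the threshold one.

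The real difference is in the throughput monotonicity (Proposition~\ref{prop:baseline-monotonicity}).
\begin{itemize}
\item The paper writes $\mathcal{T}=\frac{1+p}{C}\mathbb{E}[T(I)]$ and integrates by parts. It then shows $\frac{d}{d\Lambda}\bigl(\Lambda\,\mathbb{E}[T(I)]\bigr)=-\mathrm{Cov}(U,\phi(U))<0$ by co-monotonicity.
\item You effectively write $\mathcal{T}=(1-\rho)\,\mathbb{E}\bigl[E\,h(E/\Lambda)\bigr]$ with $E\sim\mathrm{Exp}(1)$ and $h(s)=T(s)/s$. You then use only that $h$ is strictly increasing.
\end{itemize}
Your scaling argument needs no differentiability of $T$. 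It also avoids integration by parts, with its tacit boundary condition $T(t)e^{-\Lambda t}\to 0$, and avoids differentiating under the integral sign. Its hypothesis, that $T(s)/s$ is strictly increasing, is strictly weaker than strict convexity with $T(0)=0$. It also covers the linear and concave cases just as uniformly, with $h$ constant or decreasing. The paper's route, in exchange, gives an explicit formula for $d\mathcal{T}/dp$.
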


The strict convexity assumption on $T(t)$ captures scenarios where longer uninterrupted processing periods yield disproportionately more output, for example, tasks with setup costs, learning curves, or context-switching penalties.
Under this assumption, both baseline throughput and customer waiting time strictly decrease with $p$, creating a trade-off between the two objectives.

The $w$-threshold policy requires evaluating $w(X_1, X_2)$, which depends on both service times.
Without real-time communication, however, each router observes only its local service time: router $A$ sees $X_1$, and router $B$ sees $X_2$.
Neither router can directly compute $w(X_1, X_2)$.

This establishes the coordination challenge: optimal routing decisions are determined by a global quantity $w(X_1, X_2)$, but routers are constrained to make decisions based on local observations and any pre-shared resources.

In the quantum information literature, such coordination problems are naturally formulated as non-local games, where non-communicating players receive inputs and must produce correlated outputs to maximize a payoff.
Bell's theorem established that quantum entanglement enables stronger correlations than classically achievable without communication, suggesting entanglement might help routers better approximate the $w$-threshold policy.

\section{Entanglement as a Coordination Resource}
\label{sec:entanglement}

Having established that optimal routing requires non-local information $w(X_1, X_2)$, we now formulate the coordination problem as a non-local game and show how entanglement enables better approximation of the $w$-threshold policy.

A non-local game is a cooperative game between spatially separated players who cannot communicate~\cite{brunner2014bell}.
Each player receives an input, produces an output, and the players' joint performance is evaluated by a payoff function depending on all inputs and outputs.
Formally:
\begin{definition}[Non-local game]
\label{def:nonlocal-game}
A non-local game $G = (\mathcal{X}, \mathcal{Y}, \mathcal{A}, \mathcal{B}, \pi, V)$ consists of input sets $\mathcal{X}, \mathcal{Y}$, output sets $\mathcal{A}, \mathcal{B}$, a probability distribution $\pi$ over $\mathcal{X} \times \mathcal{Y}$, and a payoff function $V: \mathcal{X} \times \mathcal{Y} \times \mathcal{A} \times \mathcal{B} \to \mathbb{R}$.
The game proceeds as follows: inputs $(x, y) \sim \pi$ are distributed to players A and B respectively; each player produces an output ($a \in \mathcal{A}$, $b \in \mathcal{B}$) without communication; the payoff $V(x, y, a, b)$ is evaluated.
Players may share pre-distributed resources (classical shared randomness or quantum entanglement) but cannot communicate during the game.
\end{definition}

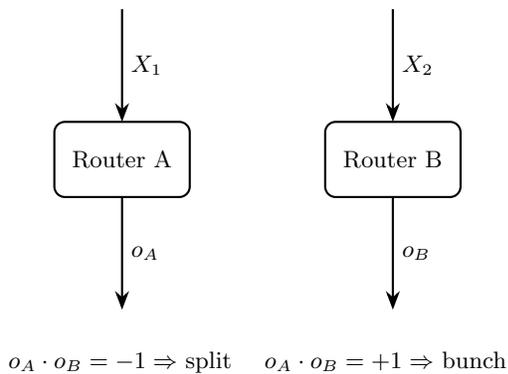
\begin{figure}[t]
\centering
\begin{tikzpicture}[
    router/.style={draw, rounded corners, minimum width=1.8cm, minimum height=1.0cm, thick},
    arrow/.style={-{Stealth[length=2.5mm]}, thick}
]

\node[router] (A) at (-1.8, 0) {Router A};
\node[router] (B) at (1.8, 0) {Router B};

\draw[arrow] (-1.8, 2.0) -- (-1.8, 0.5) node[midway, right] {$X_1$};
\draw[arrow] (1.8, 2.0) -- (1.8, 0.5) node[midway, right] {$X_2$};

\draw[arrow] (-1.8, -0.5) -- (-1.8, -2.0) node[midway, right] {$o_A$};
\draw[arrow] (1.8, -0.5) -- (1.8, -2.0) node[midway, right] {$o_B$};

\node[align=center] at (0, -2.7) {\small $o_A \cdot o_B = -1 \Rightarrow$ split \quad $o_A \cdot o_B = +1 \Rightarrow$ bunch};

\end{tikzpicture}
\caption{The routing problem as a non-local game.
Routers A and B receive inputs (service times $X_1$, $X_2$) and produce outputs ($o_A, o_B \in \{+1,-1\}$) without communication.
The product $o_A \cdot o_B$ determines whether the customer pair is split across servers or bunched to the same server.}
\label{fig:nonlocal-game}
\end{figure}

We map the routing problem to a non-local game (Figure~\ref{fig:nonlocal-game}) in which the routers act as players A and B.
The inputs to the players are the observed service times $X_1$ and $X_2$.
Each player's output specifies which server to route their customer to, labelled $\pm 1$.
Note that the splitting probability from Section~\ref{sec:coordination} relates to the outputs by $r(x_1, x_2) = \Pr[o_A(x_1) \cdot o_B(x_2) = -1]$.

Unlike standard non-local games where performance is measured by winning probability, we weight decisions by $w(X_1, X_2)$, reflecting the disproportionate impact of misrouting high-service-time pairs on waiting time~\cite{gross2011fundamentals}.
The expected payoff under strategies $(o_A, o_B)$ is $\mathbb{E}_{(X_1, X_2) \sim \pi}[V(X_1, X_2, o_A(X_1), o_B(X_2))]$; we define this as
\begin{equation}
A(o_A, o_B) = -\mathbb{E}[o_A(X_1) o_B(X_2) \, w(X_1, X_2)],
\end{equation}
where $o_A: \mathbb{R}_+ \to \{+1, -1\}$ and $o_B: \mathbb{R}_+ \to \{+1, -1\}$ denote the decision functions mapping observed service times to output choices.
Since splitting corresponds to $o_A o_B = -1$, higher $A$ rewards splitting high-$w$ pairs, precisely what the optimal policy does, translating directly into reduced excess waiting time.

The $w$-threshold policy corresponds to $o_A \cdot o_B = \sigma^*(X_1, X_2)$, where $\sigma^*(x_1, x_2) = \mathrm{sign}(\tau_p - w(x_1, x_2))$.

\begin{lemma}[Payoff $\leftrightarrow$ Waiting Time Gap]
\label{lem:agreement_queue_main}
For any routing strategy with game payoff $A$, the excess customer waiting time relative to the $w$-threshold policy satisfies
\begin{equation}
\Delta W_q = \frac{A^* - A}{2},
\end{equation}
where $A^* = -\mathbb{E}[\sigma^*(X_1, X_2) \, w(X_1, X_2)]$ is the $w$-threshold policy's payoff.
\end{lemma}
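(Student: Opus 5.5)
The plan is to express both the waiting time and the game payoff as linear functionals of the splitting probability $r(x_1,x_2)=\Pr[o_A(x_1)\,o_B(x_2)=-1]$, and then compare them. The comparison is made at a fixed splitting probability $p$, since throughput is then matched and $\Delta W_q$ is the relevant quantity.

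\textbf{Waiting time in terms of $r$.} From the queueing analysis in Appendix~\ref{sec:delta_threshold}, which defines the splitting benefit $w$, a load-balanced routing policy that splits a pair with service times $(x_1,x_2)$ with probability $r(x_1,x_2)$ has
\begin{equation}
W_q = W_q^{\mathrm{bunch}} - \mathbb{E}[r(X_1,X_2)\,w(X_1,X_2)],
\end{equation}
where $W_q^{\mathrm{bunch}}$ is the waiting time of the always-bunch policy. Here $w$ is exactly the per-pair reduction in waiting time obtained by splitting instead of bunching, so the waiting time is affine in $r$. Applying this identity to an arbitrary strategy and to the $w$-threshold policy gives
\begin{equation}
\Delta W_q = \mathbb{E}[(r^*-r)\,w],
\end{equation}
where $r^*=\mathbf{1}[w>\tau_p]$ is the splitting rule of the $w$-threshold policy. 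If the strategy's outputs are not load-balanced, I would first replace it by its load-balanced symmetrization, as in Appendix~\ref{sec:symmetrization}. This symmetrization leaves the product $o_A o_B$ unchanged, so it leaves both $r$ and the payoff $A$ unchanged.

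\textbf{Payoff in terms of $r$.} Since $o_Ao_B\in\{\pm1\}$, conditioning on $(X_1,X_2)$ gives $\mathbb{E}[o_Ao_B\mid X_1,X_2]=1-2r(X_1,X_2)$. This holds for deterministic, shared-randomness and entanglement-assisted strategies alike, because all of them induce a conditional correlator of this form. Therefore
\begin{equation}
A = -\mathbb{E}[(1-2r)w] = 2\,\mathbb{E}[rw]-\mathbb{E}[w].
\end{equation}
By the same computation with $\sigma^* = 1-2r^*$,
\begin{equation}
A^* = 2\,\mathbb{E}[r^*w]-\mathbb{E}[w].
\end{equation}
Subtracting yields $A^*-A = 2\,\mathbb{E}[(r^*-r)w] = 2\,\Delta W_q$, which is the claim.

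\textbf{Main obstacle.} The delicate point is the first step: justifying that $W_q$ is exactly affine in $r$, with slope $w$ and no dependence on anything else. This relies on the $M^X/G/1$ Pollaczek--Khinchine decomposition. In that decomposition the second moment of the per-server batch workload is linear in $r$ through the cross term $x_1x_2$, while the mean workload and $\rho$ are independent of $r$ under load balancing, which keeps $c_1$ constant. I would take this from the derivation of $w$ in Appendix~\ref{sec:delta_threshold} rather than redo it.

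I would also remark on the role of the splitting probability. The identity $A^*-A = 2\,\Delta W_q$ does not itself require $\mathbb{E}[r]=p$. However, interpreting $\Delta W_q$ as the excess waiting time at equal throughput uses that the strategy's splitting probability matches the threshold $\tau_p$, so the comparison should be stated at matched $p$.
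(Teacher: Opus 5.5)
Your proposal is correct and follows essentially the paper's route: the paper also starts from the affine formula $\mathbb{E}[W_q] = C - \mathbb{E}[s\,w]$ of Appendix~\ref{sec:form_fixed_prob} and rewrites the split indicators as $s = (1 - o_A o_B)/2$ and $s^* = (1-\sigma^*)/2$, exactly as you do with $r$ and $r^*$. Where you differ, it is a small but useful refinement: you use the conditional correlator $\mathbb{E}[o_A o_B \mid X_1,X_2] = 1-2r$ instead of the realized indicator $s$, which makes explicit that the identity covers shared-randomness and entanglement-assisted strategies, not only deterministic pairs $(o_A,o_B)$.
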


The proof is given in Appendix~\ref{sec:game-mapping}.
Since all strategies at the same splitting probability $p$ achieve identical baseline throughput, a strategy with higher $A$ provides a Pareto improvement in the waiting time--throughput trade-off.

We now describe how entanglement serves as a coordination resource.
The routers pre-share an entangled state, e.g., $\ket{\psi^-} = (\ket{01} - \ket{10})/\sqrt{2}$.
Upon observing local service times $X_1$ and $X_2$, router A performs a measurement on its qubit in a basis determined by angle $\theta_A(X_1)$, obtaining outcome $o_A \in \{+1, -1\}$; router B acts analogously with angle $\theta_B(X_2)$.
The routers then route to server $o_A$ and $o_B$ respectively, resulting in bunching when $o_A \cdot o_B = +1$ and splitting when $o_A \cdot o_B = -1$.

The quantum advantage question is now precise: can entanglement-assisted strategies achieve $A_{\text{quantum}} > A_{\text{classical}}$ for this weighted non-local game?
In the next section, we establish this affirmatively for a range of system parameters.

\section{Main Results}
\label{sec:main_results}

We now present the main results of this work.
First, we establish that quantum advantage in the routing game implies superior routing performance.
Second, we prove that optimal classical strategies take a simple threshold form.
Third, we demonstrate that quantum advantage manifests across a substantial range of system parameters.

\begin{theorem}[Quantum Advantage in Routing]
\label{thm:main}
Assume the baseline throughput function $T(t)$ is strictly convex.
Let $\mathcal{P} = \{p \in [0,1] : A_{\mathrm{qu}}(p) > A_{\mathrm{cl,SR}}^*(p)\}$ denote the set of splitting probabilities at which quantum strategies achieve higher game payoff than classical strategies with shared randomness.
Then quantum strategies Pareto-dominate classical strategies over $\mathcal{P}$: for all $p \in \mathcal{P}$,
\begin{equation}
\Delta W_q^{\mathrm{qu}}(p) < \Delta W_q^{\mathrm{cl}}(p)
\end{equation}
at identical baseline throughput $\mathcal{T}(p)$.
\end{theorem}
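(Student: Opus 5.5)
The argument combines three facts established earlier: throughput depends only on $p$, Lemma~\ref{lem:agreement_queue_main} converts payoff gaps into waiting-time gaps, and both quantities are compared at a common splitting probability.

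\textbf{Step 1: fix the comparison class and equalize throughput.} Fix $p\in\mathcal{P}$. Let $A_{\mathrm{qu}}(p)$ be the payoff of the quantum strategy (or the supremum over quantum strategies) whose induced splitting probability is $\mathbb{E}[r(X_1,X_2)]=p$. Let $A_{\mathrm{cl,SR}}^*(p)$ be the optimal payoff over classical shared-randomness strategies with the same splitting probability. As stated in Section~\ref{sec:coordination} and proved in Appendix~\ref{sec:delta_threshold}, $\mathcal{T}$ depends on the routing strategy only through $p$. Hence every strategy in both classes attains the same $\mathcal{T}(p)$; this holds whether or not $T$ is convex.

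If $A_{\mathrm{qu}}(p)$ is defined as a supremum, I would take a quantum strategy whose payoff lies strictly between $A_{\mathrm{cl,SR}}^*(p)$ and the supremum. Such a strategy exists because the inequality defining $\mathcal{P}$ is strict.

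\textbf{Step 2: translate payoffs into waiting times.} Apply Lemma~\ref{lem:agreement_queue_main} to each strategy, with the same reference value $A^*=A^*(p)$ given by the $w$-threshold policy at this $p$:
\[
\Delta W_q^{\mathrm{qu}}(p)=\frac{A^*-A_{\mathrm{qu}}(p)}{2},\qquad
\Delta W_q^{\mathrm{cl}}(p)=\frac{A^*-A_{\mathrm{cl,SR}}^*(p)}{2}.
\]
Here $\Delta W_q^{\mathrm{cl}}(p)$ is the best excess waiting time any classical strategy achieves at splitting probability $p$. Taking the infimum of waiting time over classical strategies is the same as taking the supremum of payoff, because the map from payoff to waiting time is affine and decreasing. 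Subtracting the two identities gives $\Delta W_q^{\mathrm{cl}}-\Delta W_q^{\mathrm{qu}}=(A_{\mathrm{qu}}-A_{\mathrm{cl,SR}}^*)/2>0$.

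\textbf{Step 3: conclude Pareto dominance.} The quantum strategy ties every classical strategy at this $p$ on throughput and strictly beats it on waiting time. It remains to rule out a classical strategy with a different splitting probability $p'$ that matches $\mathcal{T}(p)$. Strict convexity of $T$ enters here: by the monotonicity stated after Theorem~\ref{thm:pareto}, strict convexity makes $\mathcal{T}(p)$ strictly decreasing in $p$. So $\mathcal{T}(p')=\mathcal{T}(p)$ forces $p'=p$, and at equal throughput the only competitors are the strategies already compared in Step 2. Consequently no classical strategy dominates the quantum point, and the quantum point strictly improves on the classical optimum at identical $\mathcal{T}(p)$.

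\textbf{Main obstacle.} I expect the delicate step to be Step 2. Lemma~\ref{lem:agreement_queue_main} is stated for load-balanced strategies at a fixed $p$, so I must check that the quantum strategy's correlations satisfy its hypotheses. Specifically, the marginals of $o_A$ and $o_B$ must be uniform, which holds for measurements on $\ket{\psi^-}$. Otherwise I would appeal to the symmetrization of Appendix~\ref{sec:symmetrization}, which does not decrease payoff, to restore uniform marginals. I also need the classical optimum to be attained, or else handled by a supremum argument.
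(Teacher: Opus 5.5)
Your proof is correct and follows the paper's argument: strict convexity gives strict monotonicity of $\mathcal{T}$ (Proposition~\ref{prop:baseline-monotonicity}), so equal throughput forces equal $p$, and Lemma~\ref{lem:agreement_queue_main} turns $A_{\mathrm{qu}}(p) > A_{\mathrm{cl,SR}}^*(p)$ into the strict waiting-time gap. Your extra care about suprema and load balancing is sound. The paper handles load balancing via Remark~\ref{rem:load-balanced}, where a shared global output flip preserves $o_A o_B$. Your observation that singlet correlations already have conditionally uniform marginals settles the quantum case directly.
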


\begin{proof}
Under strict convexity, baseline throughput $\mathcal{T}(p)$ is strictly monotonic in $p$ (Proposition~\ref{prop:baseline-monotonicity}).
Equal throughput therefore implies equal splitting probability.
By Lemma~\ref{lem:agreement_queue_main}, $\Delta W_q = (A^* - A)/2$.
Hence $A_{\mathrm{qu}} > A_{\mathrm{cl,SR}}^*$ implies $\Delta W_q^{\mathrm{qu}} < \Delta W_q^{\mathrm{cl}}$ at the same $p$, and thus at the same throughput $\mathcal{T}(p)$.
\end{proof}

Theorem~\ref{thm:main} reduces the question of whether entanglement improves routing performance to determining the advantage region $\mathcal{P}$.
To establish that $\mathcal{P}$ is nonempty, we need lower bounds on quantum performance and upper bounds on classical performance.
Lower bounds are straightforward: we construct explicit quantum strategies and evaluate their payoffs.
Upper bounds are in principle more challenging, since we must optimize over all possible local decision functions, i.e., an infinite-dimensional space.

The following theorem provides the structural result that makes classical certification tractable.

\begin{theorem}[Classical Strategies are Threshold Strategies]
\label{thm:classical-threshold-main}
For the routing game, any optimal deterministic classical strategy $(o_A^*, o_B^*)$ consists of threshold functions: there exist $\theta_A, \theta_B \geq 0$ such that
\begin{equation}
o_A^*(x) = \begin{cases} +1 & x < \theta_A \\ -1 & x \geq \theta_A \end{cases}, \qquad
o_B^*(x) = \begin{cases} +1 & x < \theta_B \\ -1 & x \geq \theta_B \end{cases}.
\end{equation}
\end{theorem}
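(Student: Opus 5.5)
The plan is to argue through best responses. Any optimal deterministic pair $(o_A^*, o_B^*)$ must be a mutual best response: $o_A^*$ maximizes the payoff against $o_B^*$, and vice versa. So it suffices to show two things. First, every best response to an arbitrary fixed $o_B$ is (up to a null set) a one-sided threshold function of $x$. Second, the two thresholds can be oriented as in the statement. If the comparison is at fixed splitting probability $p$, as in $A^*_{\mathrm{cl,SR}}(p)$, I first absorb the constraint $\mathbb{E}[o_A o_B] = 1-2p$ with a Lagrange multiplier $\nu$. This replaces $w$ by $w-\nu$. The argument below only uses that the weight is affine in each variable separately, which $w-\nu$ still is, so nothing changes.

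\emph{Step 1 (conditional reduction).} Fix $o_B$. Since $X_1,X_2$ are independent,
\begin{equation}
A(o_A,o_B) = -\mathbb{E}_{X_1}\!\left[o_A(X_1)\, g_B(X_1)\right], \qquad g_B(x) := \mathbb{E}_{X_2}\!\left[o_B(X_2)\,(w(x,X_2)-\nu)\right].
\end{equation}
Because $w(x,y) = c_1 xy + c_2(x+y)$, the function $g_B$ is affine in $x$:
\begin{equation}
g_B(x) = \alpha_B x + \beta_B, \qquad \alpha_B = c_1 m_1 + c_2 m_0, \quad \beta_B = c_2 m_1 - \nu m_0,
\end{equation}
where $m_0 = \mathbb{E}[o_B(X_2)]$ and $m_1 = \mathbb{E}[o_B(X_2) X_2]$. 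Both are finite since $X_2 \sim \mathrm{Exp}(\mu)$. The pointwise maximizer is $o_A(x) = -\mathrm{sign}(g_B(x))$. An affine function changes sign at most once on $\mathbb{R}_+$, so $o_A$ is constant on each side of a single point $-\beta_B/\alpha_B$. The values of $o_A$ on the zero set of $g_B$ are irrelevant unless $g_B \equiv 0$; in that degenerate case every $o_A$ is optimal, and in particular a threshold one is. Constant strategies are included as the thresholds $\theta = 0$ or $\theta = \infty$. The same argument with the roles exchanged shows that $o_B$ is a threshold function.

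\emph{Step 2 (orientation).} This is the step I expect to be the main obstacle. Step 1 only yields that each $o_i$ is monotone, either $+1$-then-$-1$ or $-1$-then-$+1$. The global relabelling $(o_A,o_B)\mapsto(-o_A,-o_B)$ preserves every product $o_Ao_B$, hence both the payoff and $p$. So it suffices to exclude the mixed case, where one router sends large inputs to $-1$ and the other to $+1$.

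In the mixed case the largest-$w$ pairs are bunched. I would show this is suboptimal by reading off the sign of $\alpha_B$. Suppose $o_B$ is $+1$-then-$-1$ with threshold $\theta_B$. Then $m_1 = \mathbb{E}[X_2\,\mathrm{sign}(\theta_B - X_2)]$, and I would compute $\alpha_B$ explicitly for the exponential distribution. The aim is to show that whenever $\alpha_B < 0$, which would force A's best response to be $+1$-then-$-1$ (matching orientation), the mixed alternative is strictly worse. If $\alpha_B \ge 0$ for some $\theta_B$, I would instead compare directly: evaluate the payoff of the mixed pair and of the pair obtained by flipping one router's orientation while adjusting thresholds to keep $p$ fixed. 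Using that $w$ is increasing in both arguments, I would argue by a rearrangement/coupling inequality that splitting the upper tail beats splitting a mixed region.

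Combining the two steps: any optimal deterministic strategy is, after the global relabelling, a pair of thresholds of the stated form. This reduces the classical optimization to the two parameters $(\theta_A,\theta_B)$.
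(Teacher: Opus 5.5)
\paragraph{Step 1 matches the paper; Step 2 is the real issue.} Your Step 1 is the paper's argument. Its optimal-response lemma fixes $o_B$, obtains the same affine weight $\tilde w(x)=(c_1E_1+c_2E_0)x+c_2E_1$, and handles $\mathbb{E}[o_A]=(1-2p)/E_0$ with a multiplier chosen per best response. Your $\nu$ must be read the same way, since the nonconvex joint problem need not admit a single multiplier. For orientation the paper only invokes $(o_A,o_B)\mapsto(-o_A,-o_B)$. As you correctly observe, that identifies (reversed, reversed) with (standard, standard) and says nothing about mixed pairs. So everything rests on your Step 2, and it does not work as sketched.

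\paragraph{Why your Step 2 fails.}
\begin{itemize}
\item \textbf{The geometry is backwards.} Take $o_A=-1$ exactly on $[\theta_A,\infty)$ and $o_B=-1$ exactly on $[0,\theta_B)$. Then $o_Ao_B=-1$ iff both inputs lie on the same side of their thresholds, so the both-large corner is \emph{split}. It is the matching pair that bunches it, since that pair splits iff exactly one input exceeds its threshold.
\item \textbf{The sign is reversed.} For $o_B$ of the form $+1$-then-$-1$, it is $\alpha_B>0$ that makes A's best response match. When $\alpha_B<0$ the mixed response \emph{is} the best response, so no best-response or sign argument can exclude it.
\item \textbf{Mixed mutual best responses exist.} Take $\lambda=0.8$, $\mu=1$ (so $c_1=2$, $c_2=1/4$), $p=0.3$, $e^{-\theta_A}=1/4$, $e^{-\theta_B}=9/10$. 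Then $(1-\mathbb{E}[o_A]\mathbb{E}[o_B])/2=0.3$, $c_1\mathbb{E}[o_AX]+c_2\mathbb{E}[o_A]\approx-0.26$, and $c_1\mathbb{E}[o_BX]+c_2\mathbb{E}[o_B]\approx 2.18$.
\item \textbf{Flipping one router at matched masses goes the wrong way.} At $p=0.3$ with $e^{-\theta_A}=0.28$ and B's $(-1)$-set of mass $1/22$, the mixed pair gives $A\approx 0.50$, against $A\approx 0.34$ for the matching pair.
\end{itemize}

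\paragraph{The conclusion can fail for $p>1/2$.} So no Step 2 can cover all $p$. With the same parameters and $p=0.7$, the mixed pair with $e^{-\theta_A}=3/4$ and the same $o_B$ has $A\approx 2.154$. On the standard family $D_0(\theta_A)D_0(\theta_B)=-0.4$, the payoff is maximized only in the limit $e^{-\theta_A}=0.7$, $\theta_B\to\infty$ (or its mirror image), where $A\approx 2.124$. The paper's flip argument misses exactly this.

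\paragraph{How to close the mixed case for $p<1/2$.} This is the regime the paper certifies, and the right competitor is the pair with one constant router.
\begin{itemize}
\item After a global flip and an exchange of players, a mixed pair has $o_A=-1$ on $[\theta_A,\infty)$ and $o_B=-1$ on $[0,\theta_B)$, both sets of mass at most $p$. Its split region is $R=([\theta_A,\infty)\times[\theta_B,\infty))\cup([0,\theta_A)\times[0,\theta_B))$.
\item Let $R'=[\theta',\infty)\times\mathbb{R}_+$ with $e^{-\mu\theta'}=p$, realized by $(f_{\theta'},+1)$. Since $\theta'\le\theta_A$, one has $R\setminus R'=[0,\theta')\times[0,\theta_B)$ and $R'\setminus R=([\theta_A,\infty)\times[0,\theta_B))\cup([\theta',\theta_A)\times[\theta_B,\infty))$, and these have equal mass.
\item On a product rectangle the mean of $w$ is $c_1\bar x_1\bar x_2+c_2(\bar x_1+\bar x_2)$ in the coordinatewise conditional means. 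On both pieces of $R'\setminus R$ the mean of $x_1$ is at least $\theta'$, hence strictly larger than on $R\setminus R'$, and the mean of $x_2$ is no smaller.
\item Hence $\mathbb{E}[w\mathbf{1}_{R'}]>\mathbb{E}[w\mathbf{1}_R]$ unless $\theta_B=0$.
\end{itemize}
Together with your Step 1 and the global flip, this gives the theorem for $p<1/2$ in existence form, with thresholds allowed in $[0,\infty]$ (the winning pair has $\theta_B=\infty$).
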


The proof is given in Appendix~\ref{app:classical_strategies}.
This result reduces classical certification to a tractable finite-dimensional optimization over threshold pairs $(\theta_A, \theta_B)$.
Classical players may also use shared randomness to correlate their strategies; the optimal shared-randomness payoff is $A_{\mathrm{cl,SR}}^*(p) = \mathrm{conc}(A_{\mathrm{cl}}^*)(p)$, the concave envelope of the deterministic value.
Our certified bounds compare against this benchmark; see Appendix~\ref{app:classical_strategies} for details.

\begin{theorem}[Quantum Advantage Region]
\label{thm:advantage-region}
For exponential service times with $\mu = 1$ and arrival rate $\lambda = 0.8$, quantum strategies achieve certified advantage over classical strategies with shared randomness for splitting probabilities $p \in [0.075, 0.325]$.
The maximum reduction in waiting time gap ($\Delta W_q^{\mathrm{cl}} - \Delta W_q^{\mathrm{qu}}$) is approximately $0.073$ (in units of $1/\mu$), occurring near $p \approx 0.20$.
\end{theorem}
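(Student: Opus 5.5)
The plan has two parts. For each $p$ in a grid covering $[0.075,0.325]$ we compute a rigorous upper bound on $A_{\mathrm{cl,SR}}^*(p)$ and an explicit lower bound on $A_{\mathrm{qu}}(p)$. We then show the quantum bound exceeds the classical one throughout the interval. The statement then follows from Theorem~\ref{thm:main} together with Lemma~\ref{lem:agreement_queue_main}, which converts the payoff gap into the waiting-time gap $(A_{\mathrm{qu}}-A_{\mathrm{cl,SR}}^*)/2$. Throughout we fix $\mu=1$ and $\lambda=0.8$, so $\rho=0.8$, $c_1=\lambda/(2(1-\rho))=2$ and $c_2=1/4$. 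The weight is then $w(x_1,x_2)=2x_1x_2+(x_1+x_2)/4$, with $X_1,X_2$ i.i.d.\ $\mathrm{Exp}(1)$.

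\emph{Classical side.} By Theorem~\ref{thm:classical-threshold-main}, optimal deterministic strategies are threshold pairs $(\theta_A,\theta_B)$. For such a pair we use closed forms for the exponential integrals of $w$ over the four quadrants cut out by the thresholds. This gives the splitting probability
\[
p(\theta_A,\theta_B)=e^{-\theta_A}(1-e^{-\theta_B})+e^{-\theta_B}(1-e^{-\theta_A})
\]
and a closed-form expression for the payoff $A(\theta_A,\theta_B)$. The deterministic value $A_{\mathrm{cl}}^*(p)$ is the maximum of $A$ over the level curve $p(\theta_A,\theta_B)=p$, and the shared-randomness benchmark is its concave envelope.

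To \emph{certify} an upper bound, we cover the $(\theta_A,\theta_B)$ domain by boxes. Large thresholds are handled by a tail argument: beyond a cutoff of about $\theta\approx 15$, the expected contribution of $w$ is bounded analytically by an exponentially small quantity. On each box we bound $A$ above using interval arithmetic or Lipschitz constants. This yields a certified piecewise upper bound $U(p)\ge A_{\mathrm{cl}}^*(p)$. We take its concave envelope (equivalently its upper concave majorant on a fine $p$-grid, with a Lipschitz correction between grid points), which upper-bounds $A_{\mathrm{cl,SR}}^*(p)$.

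\emph{Quantum side.} We use the singlet strategy from Section~\ref{sec:entanglement}. Router A measures at angle $\theta_A(x)$ and router B at $\theta_B(x)$, which gives $\mathbb{E}[o_Ao_B\mid x_1,x_2]=-\cos(\theta_A(x_1)-\theta_B(x_2))$. The payoff is therefore $A=\mathbb{E}[\cos(\theta_A(X_1)-\theta_B(X_2))\,w(X_1,X_2)]$, and the splitting probability is $p=\mathbb{E}[\cos^2((\theta_A-\theta_B)/2)]$.

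We parametrize the angle functions as piecewise constant on a few intervals of $x$. For example, we take a symmetric three-level step function with low, intermediate and high values, chosen so that pairs with both service times large are anticorrelated. We optimize the few parameters numerically for each target $p$. The resulting payoff is then evaluated exactly in closed form, since each piece is an exponential integral, or with rigorous quadrature error bounds. This gives a certified lower bound $L(p)\le A_{\mathrm{qu}}(p)$. An exact value of $p$ is obtained by tuning one parameter by bisection, or by noting that the interval claim only needs continuity in $p$.

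\emph{Conclusion and main obstacle.} We verify $L(p)>U(p)$ on the grid and extend the inequality to all of $[0.075,0.325]$ by monotonicity or Lipschitz bounds in $p$. We report the maximal $(L-U)/2\approx0.073$ near $p\approx0.20$; the ``approximately'' in the statement allows this to be numerical. The main obstacle is the classical certification. It requires a rigorous, not merely sampled, global maximization over an unbounded threshold domain, followed by passage to the concave envelope while keeping the error below the quantum gap. The edges of the interval are especially delicate, where the gap shrinks to zero and the box refinement must be fine. I would attempt this with adaptive branch-and-bound on the closed-form $A(\theta_A,\theta_B)$, using explicit derivative bounds.
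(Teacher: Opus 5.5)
\paragraph{The gap: the quantum lower bound.} Your architecture matches the paper's: threshold reduction via Theorem~\ref{thm:classical-threshold-main}, a certified classical upper bound, concavification, an explicit entangled lower bound, and conversion through Lemma~\ref{lem:agreement_queue_main}, in which $A^*$ cancels. The problem is on the quantum side. The interval $[0.075,0.325]$ and the value $0.073$ are not properties of the game alone. They are outputs of the paper's specific lower-bound family: degree-$2$ polynomial angles $\theta_A(x)=\sum_{k\le 2}a_kx^k$ and $\theta_B(x)=\sum_{k\le 2}b_kx^k$, with correlator $\cos(2(\theta_A-\theta_B))$, optimised by SLSQP with restarts (Table~\ref{tab:numerical-params}). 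Presumably the edges of the region are where that family stops beating the classical bound.

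You replace this with piecewise-constant angles with a few levels. With $k$ levels, your quantum correlator depends only on which bins $x_1,x_2$ fall in. You are then playing a $k\times k$ XOR game against classical thresholds placed at arbitrary resolution. Continuous angles let the split probability vary smoothly along level sets of $\theta_A(x_1)-\theta_B(x_2)$, which can track the curved boundary $w(x_1,x_2)=\tau_p$. Classical threshold pairs can only produce rectangles. A three-bin staircase keeps only a coarse version of that smooth variation.

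Nothing in your plan shows the step family is as strong as the polynomial one. A weaker lower bound would lose certification first near $p=0.075$ and $p=0.325$. Your ``we report $\approx 0.073$'' is taken from the statement rather than produced by your construction.

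To prove the statement as written, you have two options:
\begin{itemize}
\item Use the paper's polynomial angles directly. Your rigorous evaluation still works, because the payoff and $p$ factorise into products of one-dimensional integrals $\mathbb{E}[\cos(2\theta_A(X))X^k]$ and $\mathbb{E}[\sin(2\theta_A(X))X^k]$, $k\in\{0,1\}$, plus their $B$ analogues. Validated 1D quadrature therefore suffices.
\item Refine your steps. Since $w\in L^1$, dominated convergence lets fine enough steps approximate the polynomial strategies' payoff and splitting probability to within the margin.
\end{itemize}

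\paragraph{The classical side.} This is a sound alternative to the paper's approach and in places more rigorous. The paper walks the curve $D_0(\theta_A)D_0(\theta_B)=1-2p$ with a uniform grid in $\theta_A$. It estimates the Lipschitz constant on a finer grid, and near $\theta_{\min}$, where $d\tilde A/d\theta_A$ diverges, it checks only numerically. Your 2D branch-and-bound with interval bounds and an analytic tail gives a bound $U(p)$ valid uniformly in $p$. That includes values between $p$-grid points and for $p>1/2$. The concave envelope formally depends on those values, but the paper's grid over $(0,1/2)$ does not include them.

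One caveat applies to both you and the paper. The global sign flip in the proof of Theorem~\ref{thm:classical-threshold-main} does not turn a mixed-orientation pair $(f_{\theta_A},-f_{\theta_B})$ into a standard one. Such pairs do arise as best responses. For example, the best response to $f_{\theta_B}$ with small $\theta_B$ is reversed, since then $c_1E_1+c_2E_0<0$. A complete certificate should also bound this family. Its payoff is $-A(\theta_A,\theta_B)$ at splitting probability $1-p(\theta_A,\theta_B)$, so it adds essentially nothing to the cost of your box framework.
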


The specific values reported here correspond to a stylized parameter regime; they should be interpreted as a certified existence proof of quantum advantage rather than as a quantitative prediction for any deployed system.
The dependence of the magnitude of the advantage on the arrival process is explored numerically in Section~\ref{sec:independent_numerical}.

Both the quantum and classical bounds are obtained numerically.
Quantum lower bounds come from explicit strategies parametrised by polynomial measurement angles, optimised by a Gauss-Laguerre quadrature scheme.
Classical upper bounds, on the other hand, exploit Theorem~\ref{thm:classical-threshold-main} to reduce the certification problem to a grid search over threshold pairs $(\theta_A, \theta_B)$, with the residual gap between grid points controlled by an explicit Lipschitz bound on the payoff.
Full details are given in Appendix~\ref{sec:numerical_methods}.

\subsection{Illustrative Example: Warm-up Costs}
\label{sec:warmup_example}

Theorem~\ref{thm:main} establishes quantum advantage under the assumption that cumulative baseline output $T(t)$ is strictly convex.
We now construct a physically motivated example satisfying this condition and derive the resulting throughput--splitting probability relationship.

For tasks with warm-up dynamics, such as cache population, JIT compilation, or context acquisition, instantaneous productivity $\phi(t)$ increases over uninterrupted work periods~\cite{hennessy2011computer}.
A natural model is exponential saturation:
\begin{equation}
\phi(t) = \phi_{\max}(1 - e^{-\alpha t}),
\end{equation}
corresponding to first-order relaxation toward steady-state productivity $\phi_{\max}$, with warm-up rate $\alpha$.
Integrating with $T(0) = 0$ yields
\begin{equation}
T(t) = \phi_{\max}\left[t - \frac{1}{\alpha}(1 - e^{-\alpha t})\right],
\end{equation}
which is strictly convex since $T''(t) = \phi_{\max}\alpha e^{-\alpha t} > 0$.

To obtain long-run average throughput, recall that each server alternates between idle periods $I \sim \mathrm{Exp}(\Lambda)$, during which it processes baseline work, and busy periods $B$ serving customers.
The batch arrival rate is $\Lambda = \lambda(1+p)/2$.
By the renewal reward theorem~\cite{grimmett2020probability},
\begin{equation}
\mathcal{T}(p) = \frac{\mathbb{E}[T(I)]}{\mathbb{E}[I + B]}.
\end{equation}
Evaluating the numerator using integration by parts (Appendix~\ref{sec:baseline_throughput}) and substituting the mean cycle length gives
\begin{equation}
\label{eq:throughput_model}
\mathcal{T}(p) = \phi_{\max}(1-\rho) \cdot \frac{2\alpha}{\lambda(1+p) + 2\alpha}.
\end{equation}
This expression decreases in $p$, as required: more frequent splitting interrupts baseline work before reaching high-productivity steady state.
Normalizing by the maximum achievable throughput $\phi_{\max}(1-\rho)$, corresponding to instantaneous warm-up ($\alpha \to \infty$), yields the dimensionless throughput $\mathcal{T}(p)/[\phi_{\max}(1-\rho)] \in (0,1]$.

Figure~\ref{fig:tradeoff} shows the waiting time--throughput trade-off using this model.
Quantum strategies achieve lower waiting time than classical strategies throughout the certified advantage region, with a maximum reduction in the waiting time gap (relative to the $w$-threshold policy) of approximately 21\% near $p \approx 0.20$.

\begin{figure}[t]
\centering
\includegraphics[width=0.8\columnwidth]{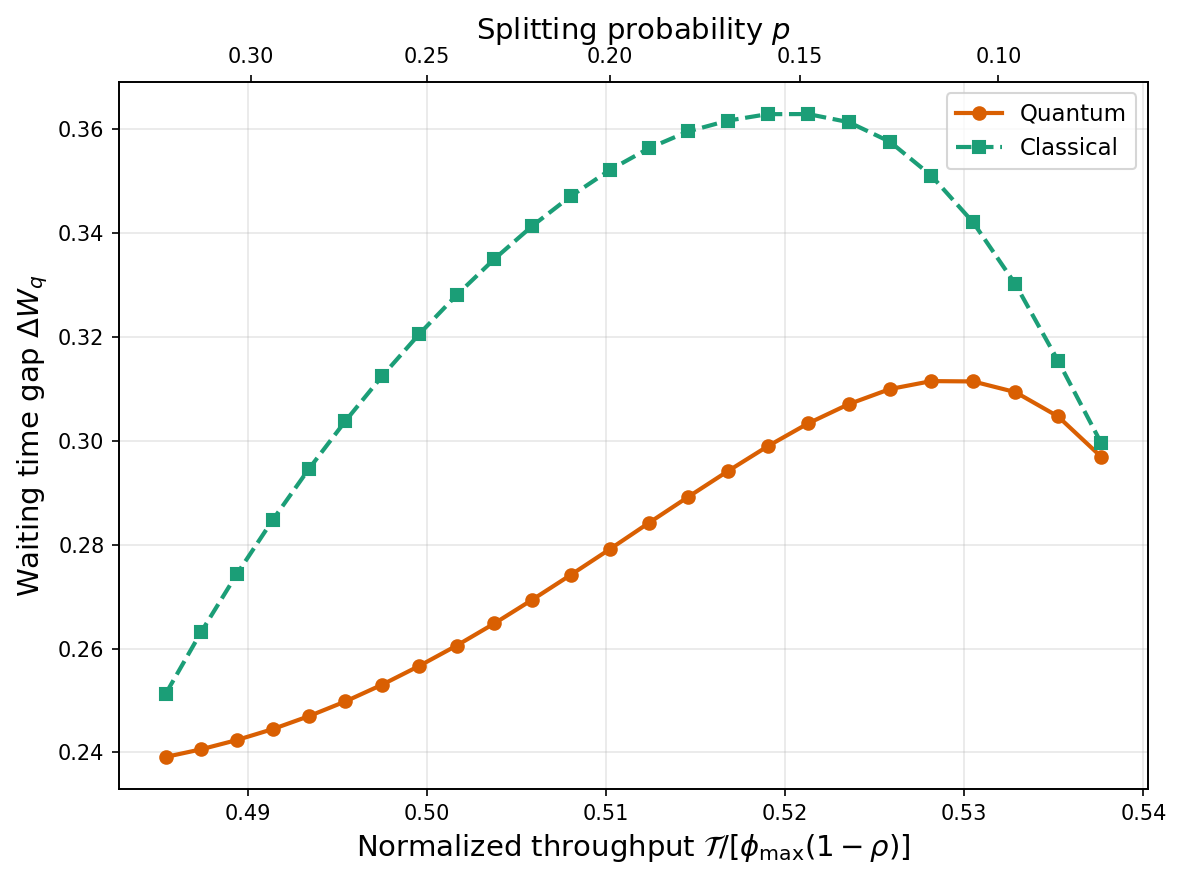}
\caption{Waiting time gap $\Delta W_q = (A^* - A)/2$ versus normalized throughput for quantum and classical strategies.
Classical values are certified upper bounds; quantum values are achieved payoffs.
Bottom axis: normalized throughput $\mathcal{T}/[\phi_{\max}(1-\rho)]$; top axis: corresponding splitting probability $p$.
System parameters: $\lambda = 0.8$, $\mu = 1$, $\alpha = 0.5$.}
\label{fig:tradeoff}
\end{figure}

\subsection{Independent arrivals and other arrival processes}
\label{sec:independent_numerical}

Our analytical results rely on two simplifying assumptions: customers arrive in pairs (one at each router simultaneously), and according to a Poisson process.
We now show numerically that the quantum advantage persists when both assumptions are relaxed.

When customers arrive independently at the two routers, there is no longer a shared arrival event to align the consumption of entangled pairs.
We therefore introduce a clock at rate $\nu$: at each tick, both routers measure their half of the next entangled pair regardless of buffer state, so pair indices stay synchronised.
If a buffer is empty the local outcome is discarded; if both have a customer at the head, the joint outcome routes them.

The clock rate $\nu$ trades coordination against latency.
Stability requires $\nu > \lambda$.
Low $\nu$ lets buffers fill, raising the probability both routers have a customer at a tick (so entanglement is consumed productively), at the cost of longer in-buffer waits.
As $\nu \to \infty$, ticks almost always find at least one buffer empty and the coordination advantage vanishes.
Together with the splitting probability $p$, this defines a two-dimensional operating space $(\nu, p)$, which we sweep numerically.
At each operating point, we simulate the quantum strategy (degree-2 polynomial Bell-measurement angles, optimised on Poisson arrivals and Exp service) and a classical-threshold-clock rule tuned to the same target $p$, using a purpose-built discrete-event simulator.
Classical strategies do not strictly require a clock, as a classical router can decide on arrival, but introducing one lets us trace the same $(\mathcal{T}, W_q)$ trade-off curve as the quantum strategy, enabling a like-for-like comparison.

Beyond independent Poisson, we consider two arrival processes chosen to span very different arrival statistics: at one extreme, fully regular arrivals; at the other, bursty traffic.
Under deterministic arrivals, customers arrive at fixed intervals of $1/\lambda$, with no randomness in the inter-arrival times.
For the bursty regime we use a two-state Markov-Modulated Poisson Process (MMPP-2)~\cite{fischer1993markov,heffes1986markov}, a standard model for correlated traffic in communication networks.
The process alternates between an active state, in which customers arrive as a Poisson process of rate $2\lambda$, and a silent state, in which no customers arrive.
Transitions between the two states happen at a symmetric rate $q$, so each state has mean duration $1/q$; smaller $q$ corresponds to longer bursts and longer silences.
The two states are equiprobable in steady-state, so the long-run mean arrival rate is $\lambda$, matching the Poisson and deterministic processes.
Note that the quantum strategies used in this sweep are the same Poisson + Exp-optimised angles applied unchanged: this is a robustness check, not a per-distribution re-optimisation.

\begin{figure*}[!ht]
  \centering
  \begin{subfigure}[t]{0.48\textwidth}
    \includegraphics[width=\linewidth]{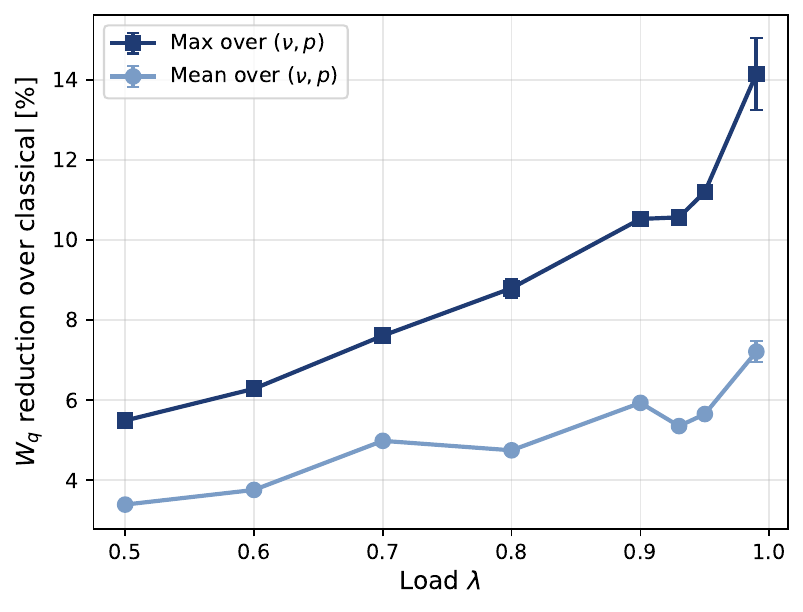}
    \caption{Max and mean waiting-time reduction across the 30 operating points $(\nu, p)$ as a function of load $\lambda$, with independent Poisson arrivals and exponential service.
    The quantum advantage broadly grows with $\lambda$, with max savings ranging from $\sim\!5\%$ at $\lambda=0.5$ to $\sim\!14\%$ at $\lambda=0.99$, and mean savings from $\sim\!3\%$ to $\sim\!7\%$.}
    \label{fig:panel_a}
  \end{subfigure}\hfill
  \begin{subfigure}[t]{0.48\textwidth}
    \includegraphics[width=\linewidth]{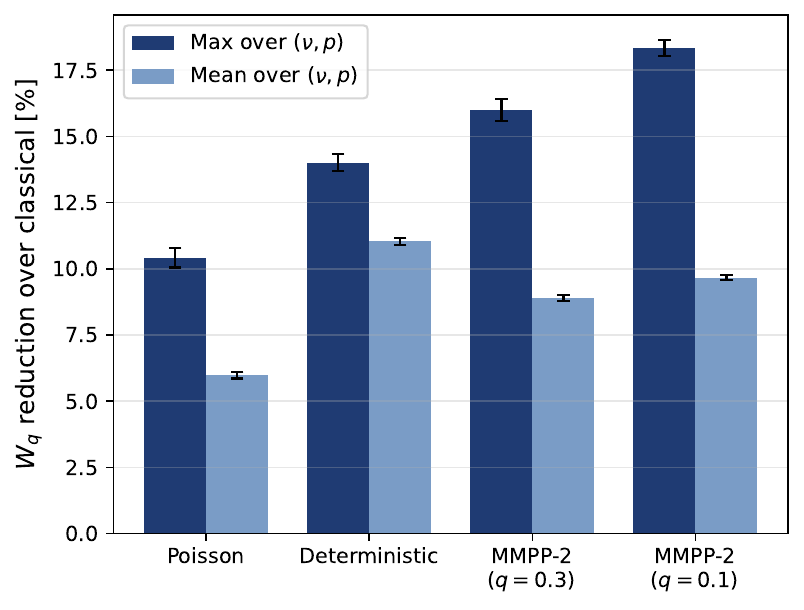}
    \caption{Max and mean waiting-time reduction across the 30 operating points $(\nu, p)$ at $\lambda = 0.9$, for four arrival processes.
    Burstier processes yield larger advantage: max ranges from $\sim\!10\%$ (Poisson) to $\sim\!18\%$ (MMPP-2 with $q=0.1$).}
    \label{fig:panel_b}
  \end{subfigure}
  \caption{Quantum advantage with independent arrivals.
  (Left) The advantage holds across loads and broadly grows with $\lambda$.
  (Right) It holds across arrival processes and grows with arrival burstiness.
  The 30 operating points are $\nu/\lambda \in \{1.2, 1.5, 2, 3, 5\}$ and $p \in \{0.10, 0.15, 0.20, 0.25, 0.30, 0.40\}$.
  Error bars are paired-bootstrap standard errors over simulation seeds.}
  \label{fig:numerical}
\end{figure*}

Figure~\ref{fig:numerical} summarises the results.
The advantage holds with independent rather than paired arrivals, across the full range of $\lambda$ tested, and broadly grows with load (Figure~\ref{fig:panel_a}).
Further, it is largest for bursty arrivals, which we expect are more representative of real network traffic (Figure~\ref{fig:panel_b}).


\section{Discussion and Outlook}
\label{sec:discussion}
Advantages in non-local games are typically modest in magnitude.
The canonical example is the CHSH game, where quantum strategies achieve a winning probability of approximately $85\%$ versus $75\%$ for classical strategies, a relative improvement of roughly $13\%$~\cite{clauser1969proposed}.
In our weighted routing game, the maximum payoff gap represents a comparable relative advantage.
Such gaps might suggest that quantum coordination offers only marginal practical benefit.

However, the relevant metric for system performance is not the probability of the players' decisions agreeing with the winning condition, but rather operational quantities, such as waiting time and throughput, that may depend non-linearly on coordination quality.
This non-linearity can amplify modest coordination improvements into substantial operational gains.
In our model, this happens through two distinct mechanisms.

First, the waiting time depends on the second moment of the service time distribution, making misrouting of high-service-time pairs disproportionately costly.
Our routing game captures this directly: unlike standard non-local games where all disagreements with the winning condition contribute equally, the payoff weights each routing decision by its operational impact $w(x_1, x_2)$.

Second, when the baseline throughput function $T(t)$ is strictly convex, throughput grows super-linearly with uninterrupted processing time.
Better coordination at fixed waiting time enables operation at lower splitting probability, with the curvature of $T(t)$ amplifying small reductions in $p$ into larger throughput gains.
The magnitude of this amplification depends on how severely the baseline task penalizes interruption.

This suggests a general principle: quantum coordination may offer meaningful practical advantage in settings where operational metrics depend super-linearly on coordination quality.
In queueing and scheduling contexts, this corresponds to tasks with significant warm-up costs, batch processing requirements, or context-switching penalties.

The quantum resources required to demonstrate this advantage are modest.
The protocol uses bipartite entanglement shared between the two decision makers and local measurements conditioned on the observed service times.
Each routing decision consumes at most one shared entangled pair, so in steady state the entanglement consumption rate is of the same order as the routing decision rate in the target system.
Heralded entanglement generation and storage have been demonstrated in several physical platforms~\cite{covey2023quantum,ruf2021quantum,krutyanskiy2023entanglement}, including over deployed metropolitan fiber~\cite{stolk2024metropolitan}.
Our analysis relies on two assumptions: (i) an entangled pair is present when a routing decision is made, and (ii) memory readout is deterministic.
Both hold for heralded entanglement with near-unit efficiency memory.
Unheralded schemes (e.g., SPDC-based) do not guarantee (i), and low-efficiency memories do not guarantee (ii).

In this work we assumed perfect, always-available entanglement.
In practice, non-unit fidelity reduces the achievable quantum payoff, while limited entanglement availability (due to generation rate or memory lifetime) causes the effective policy to interpolate between quantum and classical performance.
Quantifying the fidelity and availability thresholds required to retain meaningful advantage is an important direction for future work.

For our proofs, we assumed paired simultaneous Poisson arrivals for ease of analysis.
Our numerical results indicate that quantum advantage persists for independent arrivals, and is larger for burstier arrivals, which are more representative of network traffic.

Our analysis targets settings where routing decisions must be taken faster than classical round-trip communication allows, so our baseline explicitly rules out real-time state exchange.
Comparing against latency-constrained classical protocols that use delayed or coarse-grained state information would require separate analysis.
Promising application domains include wide-area content-delivery networks and wireless medium-access control, where uninterrupted background processing is disproportionately valuable.
\section{Conclusion}
\label{sec:conclusion}
We have shown that entanglement-assisted coordination yields rigorously certified performance gains in a concrete distributed routing model.
Specifically, we characterized the optimal full-information routing policy as a weighted threshold rule, formulated the coordination problem as a weighted non-local game with payoff directly determining waiting time, proved that optimal classical strategies take a threshold form enabling efficient certification, and demonstrated quantum advantage across a substantial parameter range.

These results suggest distributed scheduling and load balancing as a candidate application domain for near-term quantum networks, offering a starting point for translating non-local correlations into operational improvements in latency-constrained distributed systems.

\section*{Code Availability}
The code used to perform the numerical optimizations and generate the figures in this paper is available at \url{https://gitlab.com/FranciscoHS/entanglement-routing-advantage}.

\section*{Acknowledgements}
We thank Jeroen Grimbergen and John Gardiner for useful discussions on the mathematical analysis.
We thank Conor E. Bradley, Janice van Dam and Scarlett Gauthier for critical reading of earlier versions of the manuscript.
We acknowledge funding from the Dutch Research Council (NWO) through the project “QuTech Part III Application based research - Demonstrators” (project number 601.QT.001), and from the European Union under project InQubate-101213305.
Views and opinions expressed are however those of the author(s) only and do not necessarily reflect those of the European Union or the European Innovation Council.
Neither the European Union nor the granting authority can be held responsible for them.

\section*{Competing Interests}
The authors declare no competing financial or non-financial interests.

\section*{Author Contributions}
F.F.S. and S.W. conceptualized the project.
F.F.S. developed the queueing-theoretic analysis, proved the theoretical results, performed the numerical analysis, and wrote the manuscript.
All authors revised the manuscript.

%
 
\onecolumngrid
\appendix*
\onecolumngrid
\appendix
\section{The $w$-threshold policy is Pareto optimal}
\label{sec:delta_threshold}
In this appendix we present a rigorous analysis of the queueing model introduced in the main text, culminating in a proof that the $w$-threshold policy achieves Pareto optimality: no other policy can improve one objective (reduce waiting time or increase throughput) without worsening the other.

A central quantity in our analysis is the \emph{splitting benefit function}
\begin{equation}
w(x_1, x_2) = c_1 x_1 x_2 + c_2(x_1 + x_2),
\end{equation}
where $c_1 = \frac{\lambda}{2(1-\rho)}$ and $c_2 = \frac{1}{4}$.
This function quantifies the reduction in expected customer waiting time achieved by splitting a pair with service times $(x_1, x_2)$ rather than bunching them.
The term $c_1 x_1 x_2$ captures the queueing benefit from reducing workload variance, while $c_2(x_1 + x_2)$ captures the delay avoided when customers are sent to separate servers rather than queued behind one another.
The optimal routing policy, as we shall prove, takes a simple threshold form: split pairs for which $w(x_1, x_2)$ exceeds a threshold $\tau$, and bunch pairs for which it falls below.

We structure the appendix as follows.
First, we formalize the system model and establish key symmetry properties.
We then reduce the system to a single-server perspective, define the core quantities and decompose customer waiting time into interpretable components.
With this foundation, we formulate the routing problem as a constrained optimization over splitting decisions and prove that the $w$-threshold policy is optimal at any fixed splitting probability.

\subsection{Model}
\label{app:queueing_model}
We consider a distributed system consisting of two identical servers and two routers.
Each server has a queue of unlimited capacity to hold incoming requests and follows a first-come, first-served (FCFS) discipline.
The system handles two types of work.
First, a continuously available, preemptible baseline task.
Second, servicing customers.
When a customer is assigned to a server processing the baseline task, the server immediately preempts the baseline work and begins servicing the customer.
Customers arrive in pairs, one at each router, according to a Poisson process with rate $\lambda$.
Each customer's service time is drawn independently from an exponential distribution Exp($\mu$) (rate $\mu$, mean $1/\mu$).
We denote a customer pair's service times as $(X_1,X_2)$.

Servers are never idle.
When a server completes service for all customers in its queue and finds the queue empty, it immediately resumes processing the baseline task until the next customer arrival.
Note that baseline work does not affect customer waiting-time calculations.
It only fills customer-less periods.
However, the baseline throughput depends on the distribution of the lengths of these `idle' periods (e.g., due to setup costs or non-linear returns on uninterrupted baseline time), creating a trade-off with customer waiting times.

Upon arrival of a customer pair, each router decides to which server to route its incoming customer.
We assume each router observes the service time of its customer, but has no other information.
They do not know the service time of the other customer in the pair, nor do they have information about the state of the server and its queue.
Under these constraints, an implementable routing policy is a map from a single observed service time to a choice of server.
To provide a fundamental bound on performance, we analyze \emph{oracle} routing policies that have access to both service times $(X_1, X_2)$ when making routing decisions.
These policies provide a lower bound on waiting time achievable by any implementable policy, which observes only a single service time.
An oracle policy maps each pair $(x_1, x_2)$ to a distribution over four routing actions: split with $\min(x_1,x_2)$ to Server~1, split with $\max(x_1,x_2)$ to Server~1, bunch to Server~1, or bunch to Server~2.
We parameterize policies by the splitting probability $r(x_1, x_2) \in [0,1]$, the probability that a pair with service times $(x_1, x_2)$ is split.
The overall splitting probability is $p := \mathbb{E}[r(X_1,X_2)]$.
When writing expectations involving the routing policy, we use the shorthand $\mathbb{E}[r \cdot f(X_1, X_2)]$ to denote $\mathbb{E}[r(X_1, X_2) \cdot f(X_1, X_2)]$.

When two customers are sent to the same server, one customer is selected uniformly at random to precede the other.
Both are then processed FCFS.

We restrict attention to \emph{load-balanced} policies, which treat the two servers identically: when splitting, each server is equally likely to receive the larger job; when bunching, the destination server is chosen uniformly at random.
This restriction is without loss of generality.

\begin{proposition}[Load Balancing]\label{prop:symmetrization}
For any oracle routing policy, there exists a load-balanced policy achieving weakly lower waiting time at the same baseline throughput.
\end{proposition}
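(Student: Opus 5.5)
Given an arbitrary oracle policy $\pi$, we construct its symmetrization $\bar\pi$. On each arrival, flip an independent fair coin $S\in\{0,1\}$. If $S=0$, follow $\pi$. If $S=1$, follow $\pi$ and then swap the server labels 1 and 2. Equivalently, $\bar\pi(\cdot\mid x_1,x_2)=\tfrac12\pi(\cdot\mid x_1,x_2)+\tfrac12\,\pi^{\mathrm{swap}}(\cdot\mid x_1,x_2)$. The splitting probability $r(x_1,x_2)$ is unchanged by the relabelling, so $\bar\pi$ has the same $r$, and therefore the same $p$, as $\pi$. By construction $\bar\pi$ is load-balanced: when splitting, each server receives the larger job with probability $1/2$, and when bunching, the destination is uniform.

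\textbf{Throughput.} I will argue that $\mathcal{T}$ depends on the policy only through the arrival process seen by each server. Under any policy, server $j$ receives a batch at each pair arrival except when the pair is bunched to the other server. To avoid relying on Section~\ref{sec:coordination}'s claim that $\mathcal{T}$ depends only on $p$, I will write the system throughput as the average $\tfrac12(\mathcal{T}_1+\mathcal{T}_2)$ of per-server rates. Under $\bar\pi$, server 1 sees, with the coin randomizing the labels, an arrival stream that mixes the streams of servers 1 and 2 under $\pi$. The step I expect to be the main obstacle is making this precise, because the coin is drawn per arrival rather than once. The cleanest route is to note that the coin flips are i.i.d. and independent of the service times. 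Each server's arrival process under $\bar\pi$ is then a Poisson thinning with rate $\lambda(1+p)/2$, with i.i.d. batch contents, so $\mathcal{T}$ is given by the renewal formula of the warm-up example, which depends only on $p$. Under $\pi$ the thinning of server $j$ is also Poisson, since routing is i.i.d. per arrival, with rate $\lambda(1+p_j)/2$ where $p_1+p_2=2p$ after accounting for bunch destinations. If the per-server throughput $g(p_j)$ is concave in the arrival rate, Jensen gives $\tfrac12(g(p_1)+g(p_2))\le g(p)$. If the weaker goal of equal throughput is what is meant, I would instead use the paper's statement that $\mathcal{T}$ depends only on $p$ and conclude immediately.

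\textbf{Waiting time.} Each server is an $M^X/G/1$ queue with i.i.d. batches, so the Pollaczek--Khinchine formula applies. The server-$j$ mean wait is $W_j=\frac{\Lambda_j\mathbb{E}[S_j^2]}{2(1-\rho_j)}+(\text{within-batch term})$, where $S_j$ is the batch workload. The within-batch term is linear in the routing distribution. The first term, viewed as a function of the server's batch arrival measure (rate times batch-size distribution), is jointly convex. It is a perspective-type ratio of a linear quantity $\Lambda_j\mathbb{E}[S_j^2]$ to the affine positive quantity $1-\Lambda_j\mathbb{E}[S_j]$; I will verify that the relevant function is $a/(1-b)$ with $a\ge0$ linear and $b<1$ linear, which makes it convex. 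The customer-averaged waiting time is a rate-weighted average over the servers. Writing $\bar\pi$ as the midpoint of $\pi$ and $\pi^{\mathrm{swap}}$, each server's arrival measure under $\bar\pi$ is the average of those of servers 1 and 2 under $\pi$. Convexity then gives $W_q(\bar\pi)\le\tfrac12\bigl(W_q(\pi)+W_q(\pi^{\mathrm{swap}})\bigr)=W_q(\pi)$, where the last equality holds by symmetry of the servers.

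I will need to check the weighting carefully, since the customer-averaged $W_q$ is $\sum_j \Lambda_j\mathbb{E}[N_j]W_j/\text{total}$ and the weights themselves vary with the policy. Multiplying through, the total waiting-time rate $\sum_j(\text{arrival rate of customers})\times W_j$ has the form "linear $+$ linear$^2$/affine", which is still convex, and the total customer rate $2\lambda$ is constant. This verification is the key technical step.
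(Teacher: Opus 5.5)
\textbf{The gap: the convexity claim in your waiting-time step is false, and the inequality it is meant to give fails for the policies you allow.} The map $(a,b)\mapsto a/(1-b)$ is linear-fractional, not convex: its Hessian has determinant $-(1-b)^{-4}<0$.

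Concretely, let $K_j,\rho_j$ be the second-moment flux and load of server $j$ under $\pi$, and set $h_j=K_j/(1-\rho_j)$, $\bar K=(K_1+K_2)/2$, $\bar\rho=(\rho_1+\rho_2)/2$. Then exactly
\[
\frac{\bar K}{1-\bar\rho}-\frac{h_1+h_2}{2}=\frac{(\rho_1-\rho_2)(h_2-h_1)}{4(1-\bar\rho)},
\]
so symmetrizing helps only if the more heavily loaded server also has the larger $h_j$.

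Once server assignment may depend on $(x_1,x_2)$, as in your $\bar\pi(\cdot\mid x_1,x_2)$, this can fail. Take $\mu=1$, $\lambda=1/2$, always split, and send the larger job to server 1 iff $X_1+X_2<3$. Since $\max-\min$ and $2\min$ are i.i.d.\ $\mathrm{Exp}(1)$, one gets
\[
\mathbb{E}[Z_1]-\mathbb{E}[Z_2]=1-17e^{-3}\approx0.154,\qquad \mathbb{E}[Z_1^2]-\mathbb{E}[Z_2^2]=3-78e^{-3}\approx-0.883.
\]
Hence $\rho_1\approx0.538>\rho_2\approx0.462$ while $K_1\approx0.779<K_2\approx1.221$.

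Each server is an $M/G/1$ queue at rate $1/2$, so the mean wait is $\tfrac12(0.844+1.134)\approx0.989$. Every load-balanced always-split policy gives $\lambda\mathbb{E}[X^2]/(2(1-\rho))=1$. Throughput is unaffected: both servers have batch rate $\lambda$, and per-server throughput $(1-\rho_j)\lambda\,\mathbb{E}[T(I)]$ is affine in $\rho_j$. Under strictly convex $T$, only load-balanced policies with $p=1$ attain that throughput, and all of them wait $1$.

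So $W_q(\bar\pi)\le W_q(\pi)$ is false in your class, and the proposition itself needs a restriction on $\pi$. Your weighting fix also does not hold. The customer rate to server $j$ and $K_j$ are different linear functionals, so the total is (linear)$\times$(linear)/(affine), not linear$^2$/affine. The counterexample has equal customer rates, though, so the failure is already in the unweighted term.

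\textbf{How the paper's route differs.} The paper never needs joint convexity. It writes $\rho_1,K_1$ from the unconditional moments of $\min$ and $\max$, i.e.\ it tacitly takes $(\alpha,\beta,\gamma,\delta)$ independent of the service times. In that class one checks $K_1-K_2=3(\rho_1-\rho_2)/\mu$, so load and second moment are co-ordered, which is exactly what your argument lacks. It then argues through the one-variable convex function $g(\rho)=K^*(\rho)/(2(1-\rho))$. That argument uses an equal-weight average $\tfrac12(g(\rho_1)+g(\rho_2))$, so it does not address the customer weighting you correctly flagged. Your symmetrization could only be salvaged under a similar restriction.

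\textbf{Throughput.} Your Jensen route points the wrong way. Per-server throughput is $(1-\rho_j)\Lambda_j\mathbb{E}[T(I_j)]$, which depends on $\rho_j$, not only on $p_j$. In the warm-up model $\Lambda\,\mathbb{E}[T(I_\Lambda)]=\phi_{\max}\alpha/(\Lambda+\alpha)$ is convex, not concave.

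Your fallback of citing Proposition~\ref{prop:baseline-monotonicity}, which is what the paper does, is valid only when bunches are split evenly between servers on average. Then $\Lambda_1=\Lambda_2$ and total throughput is affine in the $\rho_j$. That proposition was derived for load-balanced policies. With uneven bunching, throughput can exceed that of every load-balanced policy: always bunching to server 1 with $\lambda=0.4$, $\mu=1$, $\alpha=0.5$ gives about $0.56$ per server versus at most $0.43$.
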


We prove this in Appendix~\ref{sec:symmetrization}.
The key insight is that baseline throughput depends only on the splitting probability $p$, not on the server assignment when splitting or bunching (Proposition~\ref{prop:baseline-monotonicity}), while load imbalance strictly increases waiting time.
Thus load-balanced policies weakly dominate all other policies in the joint (waiting time, throughput) space, and restricting attention to them is without loss of generality for characterizing the Pareto frontier.

Under load-balanced policies, both servers see statistically identical arrival processes: each receives customers at rate $\lambda$, with identical batch size distributions and identical conditional service time distributions.
We may therefore analyze the system from the perspective of a single server.
This yields per-server utilization $\rho = \lambda/\mu$, independent of the splitting probability.
We assume $\rho < 1$ throughout, ensuring stability.

From this single-server perspective, customers arrive according to a compound Poisson process with batch arrival rate
\begin{equation}
\Lambda = \lambda\frac{1+p}{2}
\end{equation}
and batch size distribution (for $K\in\{1,2\}$)
\begin{equation}
\Pr\{K=1\} = \frac{2p}{1+p}, \qquad \Pr\{K=2\} = \frac{1-p}{1+p}.
\end{equation}
These expressions follow from the splitting/bunching structure.
The batch arrival rate $\Lambda = \lambda(1+p)/2$ is obtained by averaging $\lambda(1+r(x_1,x_2))/2$ (the rate at which batches are generated at this server) over $(X_1, X_2)$.
The batch size probabilities are then obtained by normalizing the rates of size-1 arrivals (from split pairs, rate $\lambda p$) and size-2 arrivals (from bunched pairs sent here, rate $\lambda(1-p)/2$).
Note that the customer arrival rate at a single server is $\Lambda \mathbb{E}[K] = \lambda$, as required by symmetry.
Because the routing policy depends on $(X_1,X_2)$, the per-customer service time distribution at a server is generally non-exponential; hence the single-server model is $M^X/G/1$ (compound Poisson arrivals with general service times).
Note that for batches of size $K=2$, the service times $(X_1, X_2)$ are generally not independent, as their presence in the same batch is conditioned on the routing policy $r(x_1, x_2)$.

\subsection{Single-server primitives and batch workload}
\label{app:single_server}
We now derive the second moment of the batch workload seen by a single server, which is the key input to the Pollaczek--Khinchine formula~\cite{gross2011fundamentals} for computing expected waiting time.

We apply the Pollaczek--Khinchine (PK) formula for an $M^X/G/1$ queue, which expresses the expected virtual waiting time, i.e., the time until a batch begins service, as:
\begin{equation}
\mathbb{E}[W_q^{\text{virtual}}] = \frac{\Lambda \mathbb{E}[B_s^2]}{2(1-\rho)},
\end{equation}
where $\Lambda$ is the batch arrival rate and $\mathbb{E}[B_s^2]$ is the second moment of the batch workload.
The full customer waiting time additionally includes within-batch delay, i.e., the time one customer spends waiting for its partner in a batch, which we derive in Appendix~\ref{sec:decomposition}.
We have already established that $\Lambda = \lambda (1+p)/2$.
In this section, we derive the product $\Lambda \mathbb{E}[B_s^2]$.

For a customer pair with service times $(X_1, X_2)$, define the total service time $S = X_1 + X_2.$
We require only that $S$ has finite first and second moments, which is satisfied under individual customer service times Exp($\mu$).

Rather than computing $\mathbb{E}[B_s^2]$ directly (which would require conditioning on a batch arriving at this server), we compute the expected squared-workload contribution from each arriving pair.
For a pair $(x_1, x_2)$ with splitting probability $r = r(x_1, x_2)$, define $Y$ to be the squared workload contributed to this server. The possible outcomes are:
\begin{itemize}
    \item With probability $(1-r)/2$, the pair is bunched to this server, contributing $(x_1 + x_2)^2$.
    \item With probability $r/2$, the pair is split and $x_1$ is routed here, contributing $x_1^2$.
    \item With probability $r/2$, the pair is split and $x_2$ is routed here, contributing $x_2^2$.
    \item With probability $(1-r)/2$, the pair is bunched to the other server, contributing $0$.
\end{itemize}
Thus
\begin{equation}
    \mathbb{E}[Y | x_1, x_2] = \frac{1-r}{2}(x_1+x_2)^2 + \frac{r}{2}x_1^2 + \frac{r}{2}x_2^2 = \frac{1}{2}\left[(1-r)(x_1+x_2)^2 + r(x_1^2+x_2^2)\right].
\end{equation}
Since pairs arrive at rate $\lambda$, the expected sum of squared batch workloads per unit time is $\lambda \mathbb{E}[Y]$, which equals $\Lambda \mathbb{E}[B_s^2]$. Therefore
\begin{equation}
    \Lambda\,\mathbb{E}[B_s^2] = \frac{\lambda}{2}\,\mathbb{E}\!\left[(1-r)(X_1+X_2)^2+r\,(X_1^2+X_2^2)\right].
\end{equation}
Rearranging,
\begin{align}
    \Lambda\,\mathbb{E}[B_s^2]
    &=\frac{\lambda}{2}\,\mathbb{E}\!\left[(X_1+X_2)^2 - r\left((X_1+X_2)^2 - X_1^2-X_2^2\right)\right]\\
    &=\frac{\lambda}{2}\Big(\mathbb{E}[S^2]-2\mathbb{E}[r\,X_1X_2]\Big).
\end{align}

\subsection{Waiting time decomposition}
\label{sec:decomposition}
We now characterize the mean customer waiting time under a given routing policy.
From the single-server perspective established in Appendix~\ref{app:queueing_model}, we have an $M^X/G/1$ queue with batch arrival rate $\Lambda = \lambda(1+p)/2$ and utilization $\rho = \lambda/\mu$.

The customer waiting time can be decomposed into two components.
First, the virtual wait: the time a customer waits for work already in the system when its batch arrives.
Second, the within-batch delay: the additional time a customer waits if it is the second customer in a size-2 batch, in which case it must wait for its partner's service to complete.

The virtual wait component is given by the Pollaczek--Khinchin formula for $M^X/G/1$ queues:
\begin{equation}
\mathbb{E}[W_q^{\text{virtual}}] = \frac{\Lambda \mathbb{E}[B_s^2]}{2(1-\rho)}.
\end{equation}
From Appendix~\ref{app:single_server}, we have $\Lambda \mathbb{E}[B_s^2] = \frac{\lambda}{2}(\mathbb{E}[S^2] - 2\mathbb{E}[rX_1X_2])$, so
\begin{equation}
\mathbb{E}[W_q^{\text{virtual}}] = \frac{\lambda(\mathbb{E}[S^2] - 2\mathbb{E}[rX_1X_2])}{4(1-\rho)}.
\end{equation}
Note that $\rho = \lambda/\mu$ does not depend on the policy.

For the within-batch delay, consider a given pair $(x_1, x_2)$.
If the pair is bunched (probability $1-r$), one of the two customers waits for the other's service.
The expected within-batch delay for a randomly selected customer from this pair is $(x_1 + x_2)/4$: with probability $1/2$ the customer is second and waits for an expected time of $(x_1 + x_2)/2$.
Therefore, the expected within-batch delay contribution from a pair $(x_1, x_2)$ is $(1-r)(x_1+x_2)/4$.
Averaging over all pairs yields
\begin{equation}
\mathbb{E}[\text{within-batch delay}]
= \tfrac{1}{4}\,\mathbb{E}[(1-r)S],
\end{equation}
where $S = X_1 + X_2$.

The total mean waiting time is therefore
\begin{equation}
\mathbb{E}[W_q] = \mathbb{E}[W_q^{\text{virtual}}] + \mathbb{E}[\text{within-batch delay}] = \frac{\lambda(\mathbb{E}[S^2] - 2\mathbb{E}[rX_1X_2])}{4(1-\rho)} + \frac{1}{4}\mathbb{E}[(1-r)S].
\end{equation}

\subsection{Optimization formulation at fixed splitting probability}
\label{sec:form_fixed_prob}
We now formulate the routing problem as a constrained optimization.
Recall from Appendix~\ref{sec:decomposition} that
\begin{equation}
\mathbb{E}[W_q] = \frac{\lambda(\mathbb{E}[S^2] - 2\mathbb{E}[rX_1X_2])}{4(1-\rho)} + \frac{1}{4}\mathbb{E}[(1-r)S].
\end{equation}
Expanding the within-batch term as $\mathbb{E}[(1-r)S] = \mathbb{E}[S] - \mathbb{E}[rS]$, we obtain
\begin{equation}
\mathbb{E}[W_q] = \frac{\lambda\mathbb{E}[S^2]}{4(1-\rho)} + \frac{\mathbb{E}[S]}{4} - \frac{\lambda}{2(1-\rho)}\mathbb{E}[rX_1X_2] - \frac{1}{4}\mathbb{E}[rS].
\end{equation}

Define constants
\begin{equation}
c_1 = \frac{\lambda}{2(1-\rho)}, \qquad c_2 = \frac{1}{4}.
\end{equation}
Then we can write
\begin{equation}
\mathbb{E}[W_q] = \text{const} - \big(c_1\,\mathbb{E}[r X_1 X_2] + c_2\,\mathbb{E}[rS]\big),
\end{equation}
where $\text{const} = \frac{\lambda\mathbb{E}[S^2]}{4(1-\rho)} + \frac{\mathbb{E}[S]}{4}$ is independent of the routing policy.

Since minimizing $\mathbb{E}[W_q]$ is equivalent to maximizing the term in parentheses, we can formulate the routing problem at fixed splitting probability $p \in [0,1]$ as:

\begin{problem}[Optimal routing at fixed splitting probability; restates Problem~\ref{prob:fixed_p_main}]\label{prob:fixed_p}
Given a target splitting probability $p \in [0,1]$, find a routing policy $r: \mathbb{R}_+^2 \to [0,1]$ that solves
\begin{equation}
\begin{aligned}
\text{maximize} \quad &c_1\,\mathbb{E}[r X_1 X_2] + c_2\,\mathbb{E}[rS] \\
\text{subject to} \quad & \mathbb{E}[r] = p, \\
& 0 \leq r(x_1,x_2) \leq 1 \text{ for all } (x_1,x_2).
\end{aligned}
\end{equation}
\end{problem}

 \subsection{Optimal policy}
\label{sec:optimal_policy}

We now characterize the solution to Problem~\ref{prob:fixed_p}.
Recall the splitting benefit function $w: \mathbb{R}_+^2 \to \mathbb{R}_+$ defined at the beginning of this appendix:
\begin{equation}
w(x_1, x_2) = c_1 x_1 x_2 + c_2(x_1 + x_2),
\end{equation}
where $c_1 = \frac{\lambda}{2(1-\rho)}$ and $c_2 = \frac{1}{4}$. 
The objective function can thus be written compactly as $\mathbb{E}[r \cdot w(X_1, X_2)]$.
Since $X_1, X_2 \sim \text{Exp}(\mu)$, all moments exist, and $w(X_1, X_2)$ is integrable, i.e., $w \in L^1$.

\begin{theorem}[Optimal threshold policy]\label{thm:optimal_threshold}
For any target splitting probability $p \in [0,1]$, the optimal routing policy for Problem~\ref{prob:fixed_p} is a threshold policy:
\begin{equation}
r^*(x_1,x_2) = \begin{cases}
1 & \text{if } w(x_1,x_2) > \tau_p, \\
0 & \text{if } w(x_1,x_2) < \tau_p,
\end{cases}
\end{equation}
where the threshold $\tau_p$ is chosen such that $\mathbb{E}[r^*] = p$. Under the exponential service time distribution, the random variable $W = w(X_1, X_2)$ is continuously distributed; thus $\mathbb{P}(w = \tau_p) = 0$, and the threshold $\tau_p$ uniquely determines $p$.
\end{theorem}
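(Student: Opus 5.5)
This is a fractional-knapsack / Neyman--Pearson problem, and I would prove optimality by a pointwise exchange argument rather than invoking general duality.

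\textbf{Continuity of $W$.} First I establish that $W = w(X_1,X_2)$ has a continuous distribution, so that its tail function $G(\tau) = \Pr[W > \tau]$ is continuous and nonincreasing. Conditional on $X_1 = x_1$, the map $x_2 \mapsto c_1 x_1 x_2 + c_2(x_1+x_2)$ is strictly increasing and affine, with slope $c_1 x_1 + c_2 \geq c_2 > 0$. Since $X_2$ has a density, each level set $\{x_2 : w(x_1,x_2) = \tau\}$ is a single point and has probability zero. Integrating over $x_1$ (Fubini) gives $\Pr[W = \tau] = 0$ for every $\tau$.

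\textbf{Choosing $\tau_p$.} By continuity, $G$ is continuous, with $G(\tau) \to 1$ as $\tau \downarrow 0$ (because $W > 0$ almost surely) and $G(\tau) \to 0$ as $\tau \to \infty$. The intermediate value theorem then yields $\tau_p$ with $G(\tau_p) = p$; for $p = 0$ or $p = 1$ take $\tau_p = \infty$ or $\tau_p = 0$ respectively. Because $\Pr[W = \tau_p] = 0$, the behaviour of $r^*$ on the tie set is irrelevant. Hence $r^*$ is feasible with $\mathbb{E}[r^*] = p$, and the threshold determines $p$.

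\textbf{Optimality.} Let $r$ be any feasible policy with $\mathbb{E}[r] = p$. I would verify the pointwise inequality
\begin{equation}
(r^*(x) - r(x))\,(w(x) - \tau_p) \geq 0
\end{equation}
almost everywhere, where $x = (x_1,x_2)$. If $w(x) > \tau_p$ then $r^* = 1 \geq r$, and if $w(x) < \tau_p$ then $r^* = 0 \leq r$; in both cases the product is nonnegative. Since $w \in L^1$ and $0 \leq r, r^* \leq 1$, both expectations are finite. Taking expectations therefore gives
\begin{equation}
\mathbb{E}[(r^* - r)\,w] \geq \tau_p\,\mathbb{E}[r^* - r] = 0,
\end{equation}
so the objective $\mathbb{E}[r^* w]$ is at least $\mathbb{E}[r w]$.

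\textbf{Uniqueness.} For essential uniqueness, equality forces $(r^* - r)(w - \tau_p) = 0$ almost everywhere. This means $r = r^*$ almost everywhere off the null set $\{W = \tau_p\}$. This is the sense in which \emph{the} optimal policy is the threshold policy.

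\textbf{Expected obstacle.} The only real subtlety is the continuity and tie-handling of $W$. It is settled by the monotonicity-in-$x_2$ argument above, so I expect no serious difficulty. The edge cases $p \in \{0,1\}$ should be handled explicitly with $\tau_p = \infty$ and $\tau_p = 0$.
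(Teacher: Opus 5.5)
Your proposal is correct and follows the paper's own proof: the pointwise inequality $(r^*-r)(w-\tau_p)\geq 0$, integrated and combined with the budget constraint $\mathbb{E}[r]=\mathbb{E}[r^*]=p$. Your extra steps fill in what the paper only asserts. Showing $W$ has no atoms (strict monotonicity in $x_2$ plus Fubini) and obtaining $\tau_p$ by the intermediate value theorem justify the existence of $\tau_p$, and deducing almost-everywhere uniqueness from equality in the integrated inequality justifies calling it \emph{the} optimal policy.
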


\begin{proof}
Let $r^*$ be the threshold policy defined above, and let $r$ be any other feasible policy such that $\mathbb{E}[r] = p$.
We wish to show $\mathbb{E}[r^* w] \geq \mathbb{E}[r w]$.

For any realization $(x_1, x_2)$, consider the expression $(r^*(x_1, x_2) - r(x_1, x_2))(w(x_1, x_2) - \tau_p)$. We observe the following:
\begin{itemize}
    \item If $w(x_1, x_2) > \tau_p$, then $r^* = 1$. Since $r \in [0, 1]$, $(1 - r) \geq 0$ and $(w - \tau_p) > 0$. Their product is $\geq 0$.
    \item If $w(x_1, x_2) < \tau_p$, then $r^* = 0$. Since $r \in [0, 1]$, $(0 - r) \leq 0$ and $(w - \tau_p) < 0$. Their product is $\geq 0$.
    \item If $w(x_1, x_2) = \tau_p$, the product is $0$.
\end{itemize}
Thus, $(r^* - r)(w - \tau_p) \geq 0$ holds pointwise for all $(x_1, x_2)$. Taking expectations:
\begin{equation}
\mathbb{E}[(r^* - r)(w - \tau_p)] \geq 0.
\end{equation}
Expanding the expectation:
\begin{equation}
\mathbb{E}[r^* w] - \mathbb{E}[r w] - \tau_p (\mathbb{E}[r^*] - \mathbb{E}[r]) \geq 0.
\end{equation}
By the constraint $\mathbb{E}[r^*] = \mathbb{E}[r] = p$, the last term is zero. Therefore, $\mathbb{E}[r^* w] \geq \mathbb{E}[r w]$, proving that the threshold policy $r^*$ is optimal.
\end{proof}

The threshold structure arises because the constraint $\mathbb{E}[r] = p$ acts like a budget: we may split at most a fraction $p$ of pairs on average.
Optimality requires allocating this budget to pairs with highest marginal benefit $w$.

As the target splitting probability $p$ increases, the threshold $\tau_p$ decreases, allowing more pairs to be split.
This family of policies, indexed by $p$, will be shown to trace the Pareto frontier between customer waiting time and the baseline throughput (which is a function of $p$).

\subsection{Baseline task throughput}
\label{sec:baseline_throughput}
Having characterized the policy minimizing waiting time at fixed splitting probability, we now analyze the second objective---baseline throughput---to establish the Pareto trade-off.

Following the single-server perspective established in Appendix~\ref{app:queueing_model}, we analyze one server's baseline throughput.
Recall that each server processes a continuously available baseline task when it has no customers to service, switching between customer and baseline work instantly with no overhead.

The server alternates between idle periods, when no customers are present, and busy periods, when at least one is.
Let $I$ denote the duration of an idle period and $B$ denote the duration of a busy period.

During an idle period, the server processes baseline work until the next batch arrival.
Since batches arrive according to a Poisson process with rate $\Lambda = \lambda(1+p)/2$, the duration of an idle period is exponentially distributed:
\begin{equation}
I \sim \text{Exp}(\Lambda), \qquad \mathbb{E}[I] = \frac{1}{\Lambda} = \frac{2}{\lambda(1+p)}.
\end{equation}

For the busy period, we derive the mean duration using a rate-balance argument.
By PASTA (Poisson Arrivals See Time Averages)~\cite{wolff1982poisson}, a batch arriving according to a Poisson process finds the system in steady state: idle with probability $1-\rho$ and busy with probability $\rho$.
Busy periods are therefore initiated at rate $\Lambda(1-\rho)$---the batch arrival rate times the probability of finding an idle server.
Since the long-run fraction of time spent busy equals the rate of entering the busy state times the mean time spent there, we have $\rho = \Lambda(1-\rho) \cdot \mathbb{E}[B]$, giving $\mathbb{E}[B] = \rho / [\Lambda(1-\rho)]$.
Substituting $\rho=\lambda/\mu$ and $\Lambda=\lambda(1+p)/2$,
\begin{equation}
\mathbb{E}[B] = \frac{\rho}{\Lambda(1-\rho)} = \frac{2}{(1+p)(\mu-\lambda)},
\end{equation}
which decreases with $p$, implying that higher splitting probability leads to shorter busy periods.

The system forms a renewal process: since the system returns to the same empty state at the end of each busy period, and Poisson arrivals are memoryless, successive cycles are independent.
The mean cycle length is
\begin{equation}
\mathbb{E}[I+B] = \frac{2}{\lambda(1+p)} + \frac{2}{(1+p)(\mu-\lambda)} = \frac{2}{1+p}\left(\frac{1}{\lambda} + \frac{1}{\mu-\lambda}\right).
\end{equation}

Let $T(t)$ denote the amount of baseline work completed during an idle period of length $t$.
We assume $T$ is increasing (longer idle periods complete more work), differentiable with $T(0) = 0$, and denote its derivative by $\phi = T'$.

To compute the long-run average baseline throughput, we apply the renewal reward theorem~\cite{grimmett2020probability}:
\emph{the long-run average reward per unit time equals the expected reward per cycle divided by the expected cycle length.}
Here, the ``reward'' in each cycle is the baseline work $T(I)$ completed during the idle period.
Therefore, the long-run average baseline throughput per server is:
\begin{equation}
\mathcal{T}(p) = \frac{\mathbb{E}[T(I)]}{\mathbb{E}[I + B]},
\end{equation}
where
\begin{equation}
\mathbb{E}[T(I)] = \int_0^\infty T(t) \Lambda e^{-\Lambda t}\,dt.
\end{equation}

Note that baseline throughput depends only on the splitting probability $p$, not on which specific pairs are split.
\begin{proposition}[Monotonicity of baseline throughput in $p$]\label{prop:baseline-monotonicity}
Let $\Lambda=\lambda(1+p)/2$ and assume $T$ is differentiable with derivative $\phi=T'$ and $T(0) = 0$, with $T$ increasing. Then:
\begin{enumerate}
\item If $T$ is strictly convex (so $\phi$ is increasing), then $\mathcal{T}(p)$ strictly decreases in $p$.
\item If $T$ is linear (so $\phi$ is constant), then $\mathcal{T}(p)$ is independent of $p$.
\item If $T$ is strictly concave (so $\phi$ is decreasing), then $\mathcal{T}(p)$ strictly increases in $p$.
\end{enumerate}
\end{proposition}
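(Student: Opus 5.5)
We will reduce everything to an explicit formula for $\mathcal{T}$ as a function of $\Lambda$. First, compute $\mathbb{E}[T(I)]$ by integration by parts, using $T(0)=0$:
\[
\mathbb{E}[T(I)]=\int_0^\infty T(t)\Lambda e^{-\Lambda t}\,dt=\bigl[-T(t)e^{-\Lambda t}\bigr]_0^\infty+\int_0^\infty \phi(t)e^{-\Lambda t}\,dt=\int_0^\infty \phi(t)e^{-\Lambda t}\,dt .
\]
The boundary term at infinity must vanish. We will assume, as is implicit in the finiteness of $\mathcal{T}$, that $T$ grows subexponentially so that $T(t)e^{-\Lambda t}\to 0$. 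This is the one technical point to flag, since it holds in the warm-up example where $T$ grows linearly.

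Next, use the mean cycle length derived above, $\mathbb{E}[I+B]=\frac{1}{\Lambda}\bigl(1+\frac{\lambda}{\mu-\lambda}\bigr)=\frac{1}{\Lambda(1-\rho)}$. This gives
\[
\mathcal{T}(p)=(1-\rho)\,\Lambda\int_0^\infty \phi(t)e^{-\Lambda t}\,dt=(1-\rho)\,\mathbb{E}[\phi(I)],\qquad I\sim\mathrm{Exp}(\Lambda).
\]
Since $\Lambda=\lambda(1+p)/2$ is strictly increasing in $p$, and $1-\rho$ does not depend on $p$, it suffices to study the map $g(\Lambda):=\mathbb{E}[\phi(I_\Lambda)]$.

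The key step is a coupling argument. Write $I_\Lambda=E/\Lambda$ with $E\sim\mathrm{Exp}(1)$, so that $g(\Lambda)=\mathbb{E}[\phi(E/\Lambda)]$. Now take $\Lambda_1<\Lambda_2$; then $E/\Lambda_1> E/\Lambda_2$ almost surely, because $E>0$ almost surely.
\begin{itemize}
\item If $T$ is strictly convex, $\phi$ is strictly increasing, so $\phi(E/\Lambda_1)>\phi(E/\Lambda_2)$ almost surely. Taking expectations gives $g(\Lambda_1)>g(\Lambda_2)$; strictness follows because the difference is a.s.\ positive and integrable.
\item If $T$ is linear, $\phi$ is constant, and $g$ is constant.
\item If $T$ is strictly concave, $\phi$ is strictly decreasing, and the inequality reverses.
\end{itemize}
Composing with the increasing map $p\mapsto\Lambda$ yields the three claims.

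The main obstacle is the regularity needed in the integration by parts: the vanishing boundary term and the integrability of $\phi(E/\Lambda)$. We will handle this by assuming $\mathbb{E}[T(I)]<\infty$ for every $\Lambda$ in the relevant range $[\lambda/2,\lambda]$. Because $T$ is increasing and $T(0)=0$, we have $T\ge 0$. This nonnegativity lets Tonelli's theorem justify writing $\mathbb{E}[T(I)]=\int_0^\infty \phi(s)\,\Pr(I>s)\,ds$ directly, which avoids boundary terms altogether. It also gives integrability of $\phi(I)$ whenever $\mathcal{T}$ is finite.

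Finally, strict convexity of a differentiable $T$ implies that $T'$ is strictly increasing, so no additional smoothness assumption is needed for the first case.
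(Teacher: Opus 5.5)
Your proof is correct. It shares the paper's reduction but replaces its key step.

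\textbf{Shared reduction.} The paper likewise writes $\mathcal{T}(p)=\Lambda(1-\rho)\,\mathbb{E}[T(I)]$ (its prefactor $(1+p)/C$ equals $\Lambda(1-\rho)$) and integrates by parts. So its $g(\Lambda)=\Lambda\,\mathbb{E}[T(I)]$ is exactly your $\mathbb{E}[\phi(I_\Lambda)]$.

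\textbf{Different key step.} The paper then differentiates under the integral sign to get $g'(\Lambda)=-\mathrm{Cov}(U,\phi(U))$ with $U\sim\mathrm{Exp}(\Lambda)$, and reads off the sign from co-monotonicity of $U$ and $\phi(U)$. Your scale coupling $I_\Lambda=E/\Lambda$ instead gives a pointwise, first-order stochastic-dominance comparison.

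\textbf{What your route buys.} It is more elementary and more self-contained. There is no interchange of $d/d\Lambda$ with the integral, which the paper does not justify. Strictness comes directly from an a.s.\ positive integrable difference. You also treat finiteness of $\mathbb{E}[T(I)]$ and the boundary term explicitly; the paper assumes both silently, and finiteness can genuinely fail (e.g.\ $T(t)=e^{t^2}-1$). Finiteness alone already kills the boundary term, since $T(t)e^{-\Lambda t}\le\int_t^\infty T(s)\Lambda e^{-\Lambda s}\,ds\to 0$ by monotonicity of $T$. So your subexponential-growth assumption is unnecessary even for the integration-by-parts version.

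\textbf{What the paper's route buys.} It is quantitative: it yields $d\mathcal{T}/dp=-\tfrac{\lambda}{2}(1-\rho)\,\mathrm{Cov}(U,\phi(U))$. This measures how sharply throughput responds to splitting, matching the main text's remark that the curvature of $T$ amplifies changes in $p$.

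\textbf{One small correction.} Tonelli is licensed by $\phi\ge 0$ (from $T$ increasing), not by $T\ge 0$. You also need $T(t)=\int_0^t\phi(s)\,ds$. This holds here because $\phi$ is monotone in the convex and concave cases and constant in the linear one.
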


\begin{proof}
Write
\[
\mathcal{T}(p)=\frac{\mathbb{E}[T(I)]}{\mathbb{E}[I+B]}
=\frac{1+p}{C}\,\mathbb{E}[T(I)],
\qquad
C=2\left(\frac{1}{\lambda}+\frac{1}{\mu-\lambda}\right).
\]
Set
$f(\Lambda):=\mathbb{E}[T(I)]$ so that
\[
\mathcal{T}(p)=\frac{1+p}{C}\,f(\Lambda).
\]
Differentiate with respect to $p$:
\[
\frac{d\mathcal{T}}{dp}
=\frac{1}{C}\Big(f(\Lambda)+(1+p)\,f'(\Lambda)\,\frac{d\Lambda}{dp}\Big).
\]
Since $\frac{d\Lambda}{dp}=\frac{\lambda}{2}$ and $\Lambda=\frac{\lambda}{2}(1+p)$, we have
$(1+p)\,\frac{d\Lambda}{dp}=\Lambda$. Hence
\[
\frac{d\mathcal{T}}{dp}
=\frac{1}{C}\big(f(\Lambda)+\Lambda f'(\Lambda)\big)
=\frac{1}{C}\,\frac{d}{d\Lambda}\big(\Lambda f(\Lambda)\big)
=\frac{1}{C}\,\frac{d}{d\Lambda}\big(\Lambda\,\mathbb{E}[T(I)]\big).
\]
Because $C>0$, the sign of $\frac{d\mathcal{T}}{dp}$ equals the sign of
$\frac{d}{d\Lambda}\big(\Lambda\,\mathbb{E}[T(I)]\big)$.
With $I\sim\mathrm{Exp}(\Lambda)$, we use integration by parts. Since $T(0) = 0$ and $T$ is differentiable,
\[
\mathbb{E}[T(I)]=\int_{0}^{\infty} T(t)\,\Lambda e^{-\Lambda t}\,dt
= \left[T(t)(-e^{-\Lambda t})\right]_0^\infty + \int_0^\infty \phi(t) e^{-\Lambda t}\,dt
=\int_{0}^{\infty} \phi(t)\,e^{-\Lambda t}\,dt,
\]
hence
\[
g(\Lambda):=\Lambda\,\mathbb{E}[T(I)]
=\Lambda\int_{0}^{\infty}\phi(t)e^{-\Lambda t}\,dt,
\quad
g'(\Lambda)=\int_{0}^{\infty}\phi(t)e^{-\Lambda t}(1-\Lambda t)\,dt.
\]
Let $U\sim\mathrm{Exp}(\Lambda)$, so $\mathbb{E}[U] = 1/\Lambda$. 
Rewriting the integrals in terms of expectations with respect to $U$:
\[
\int_{0}^{\infty}\phi(t)e^{-\Lambda t}\,dt 
= \frac{1}{\Lambda}\int_{0}^{\infty}\phi(t)\Lambda e^{-\Lambda t}\,dt 
= \frac{1}{\Lambda}\,\mathbb{E}[\phi(U)],
\]
and
\[
\int_{0}^{\infty}\phi(t)\,t\,\Lambda e^{-\Lambda t}\,dt = \mathbb{E}[U\,\phi(U)].
\]
Therefore,
\[
g'(\Lambda)
= \frac{1}{\Lambda}\,\mathbb{E}[\phi(U)] - \mathbb{E}[U\,\phi(U)]
= \mathbb{E}[U]\,\mathbb{E}[\phi(U)] - \mathbb{E}[U\,\phi(U)]
= -\mathrm{Cov}(U,\phi(U)).
\]
Since $\Lambda > 0$, the sign of $d\mathcal{T}/dp$ equals the sign of $-\mathrm{Cov}(U, \phi(U))$.
When $\phi$ is monotonic, $U$ and $\phi(U)$ are co-monotonic, so the covariance has the same sign as $\phi'$.
Thus: (i) $\phi$ increasing (convex $T$) implies $\mathrm{Cov}(U,\phi(U)) > 0$, hence $\mathcal{T}$ is strictly decreasing in $p$; (ii) $\phi$ constant implies $\mathrm{Cov}(U,\phi(U)) = 0$, hence $\mathcal{T}$ is independent of $p$; (iii) $\phi$ decreasing implies $\mathrm{Cov}(U,\phi(U)) < 0$, hence $\mathcal{T}$ is strictly increasing in $p$.
\end{proof}

For tasks with setup costs, learning curves, or context-switching penalties, the function $T$ is convex and baseline throughput strictly decreases with splitting probability.
In what follows, we focus on the strictly convex case, which yields a nontrivial Pareto trade-off between the two objectives.

\subsection{Pareto optimality of the threshold policy}
\label{sec:pareto_optimality}

We now establish that the $w$-threshold policy characterized in Theorem~\ref{thm:optimal_threshold} achieves the Pareto frontier between customer waiting time and baseline task throughput.

Recall that any routing policy induces a splitting probability $p = \mathbb{E}[r] \in [0,1]$.
From Proposition~\ref{prop:baseline-monotonicity}, the baseline throughput $\mathcal{T}(p)$ depends only on this splitting probability, not on which specific pairs are split.
Thus characterizing the Pareto frontier reduces to minimizing customer waiting time at each fixed splitting probability.

For a fixed splitting probability $p \in [0,1]$, Theorem~\ref{thm:optimal_threshold} establishes that the threshold policy $r^*_p$ minimizes customer waiting time among all policies achieving splitting probability $p$.
We now show that as $p$ varies over $[0,1]$, these optimal policies trace out the complete Pareto frontier.

\begin{theorem}[Pareto optimality; restates Theorem~\ref{thm:pareto}]\label{thm:pareto_optimality}
Assume $T$ is strictly convex (so baseline throughput strictly decreases in $p$ by Proposition~\ref{prop:baseline-monotonicity}).
Then the family of threshold policies $\{r^*_p : p \in [0,1]\}$, where $r^*_p$ solves Problem~\ref{prob:fixed_p} for splitting probability $p$, achieves the Pareto frontier between baseline throughput and customer waiting time.
\end{theorem}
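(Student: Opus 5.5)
The argument reduces the two-objective problem to a one-parameter family, using the fact that throughput depends on a policy only through its splitting probability. Any feasible (load-balanced, by Proposition~\ref{prop:symmetrization}) oracle policy $r$ induces $p=\mathbb{E}[r]\in[0,1]$. By Proposition~\ref{prop:baseline-monotonicity}, its throughput is $\mathcal{T}(p)$, a function of $p$ alone, and under strict convexity of $T$ this function is strictly decreasing, hence injective on $[0,1]$. Define $W^*(p):=\mathbb{E}[W_q]$ under $r^*_p$. By Theorem~\ref{thm:optimal_threshold} and the decomposition $\mathbb{E}[W_q]=\text{const}-\mathbb{E}[r\,w]$, every policy with splitting probability $p$ has $\mathbb{E}[W_q]\ge W^*(p)$.

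\textbf{Key step: $W^*$ is strictly decreasing in $p$.} For $p<p'$ we have $\tau_{p'}<\tau_p$, and $r^*_{p'}-r^*_p$ is the indicator of $\{\tau_{p'}<w\le\tau_p\}$. This set has probability $p'-p>0$. On it $w\ge\tau_{p'}\ge 0$, and in fact $w>0$ almost surely, since $w(x_1,x_2)>0$ whenever $x_1+x_2>0$. Therefore
\[
W^*(p)-W^*(p')=\mathbb{E}\!\left[(r^*_{p'}-r^*_p)\,w\right]>0 .
\]
Continuity of the law of $W$ (Theorem~\ref{thm:optimal_threshold}) ensures $\tau_p$ is well defined for every $p$, with $\tau_0=\infty$ and $\tau_1=0$ understood as limits.

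\textbf{Each $r^*_p$ is Pareto optimal.} Suppose some policy $r$ with splitting probability $q$ weakly improves on $r^*_p$ in both objectives, strictly in at least one.
\begin{itemize}
\item Throughput: $\mathcal{T}(q)\ge\mathcal{T}(p)$ together with strict decrease of $\mathcal{T}$ gives $q\le p$.
\item Waiting time: $\mathbb{E}[W_q(r)]\ge W^*(q)\ge W^*(p)$, using $q\le p$ and the monotonicity of $W^*$.
\end{itemize}
So $r$ cannot be strictly better in waiting time. For it to be strictly better in throughput we would need $q<p$, but then strict monotonicity gives $\mathbb{E}[W_q(r)]\ge W^*(q)>W^*(p)$, so $r$ is strictly worse in waiting time. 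Either way we reach a contradiction.

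\textbf{Completeness.} Any Pareto-optimal policy $r$ with splitting probability $p$ must satisfy $\mathbb{E}[W_q(r)]=W^*(p)$; otherwise $r^*_p$ dominates it at equal throughput. Hence its objective pair coincides with that of $r^*_p$. (Such an $r$ also agrees with $r^*_p$ almost everywhere, by the equality case of the pointwise inequality in the proof of Theorem~\ref{thm:optimal_threshold}.) The frontier in objective space is therefore exactly $\{(\mathcal{T}(p),W^*(p)):p\in[0,1]\}$.

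\textbf{Expected obstacle.} The only delicate point is the strict monotonicity of $W^*$. It needs positivity of $w$ on the incremental set and the endpoint conventions at $p=0,1$, where $\tau_p$ is degenerate. Both follow directly from the form of $w$ and the exponential service times.
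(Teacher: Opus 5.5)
Your proposal is correct and follows essentially the same route as the paper: you reduce to fixed $p$ because throughput depends only on $p$, invoke Theorem~\ref{thm:optimal_threshold} for the lower bound $W^*(p)$, and combine strict monotonicity of $W^*$ (from $w>0$ on the incremental set $\{\tau_{p'}<w\le\tau_p\}$) with strict monotonicity of $\mathcal{T}$. Your explicit case analysis ($q\le p$ from the throughput inequality, and $q<p$ forcing strictly worse waiting time) is a more careful version of the paper's final step, which the paper states loosely as ``any policy with different splitting probability has different throughput''.
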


\begin{proof}
Consider any feasible routing policy $r$ with induced splitting probability $p_r = \mathbb{E}[r]$.
We show that $r$ is weakly dominated by the threshold policy $r^*_{p_r}$ achieving the same splitting probability.

By Proposition~\ref{prop:baseline-monotonicity}, baseline throughput depends only on the splitting probability:
\begin{equation}
\mathcal{T}(r) = \mathcal{T}(p_r) = \mathcal{T}(r^*_{p_r}).
\end{equation}

By Theorem~\ref{thm:optimal_threshold}, among all policies with splitting probability $p_r$, the threshold policy $r^*_{p_r}$ minimizes customer waiting time:
\begin{equation}
\mathbb{E}[W_q](r^*_{p_r}) \leq \mathbb{E}[W_q](r).
\end{equation}

Thus $r^*_{p_r}$ achieves the same baseline throughput as $r$ with no greater customer waiting time, so $r$ is weakly dominated by $r^*_{p_r}$.
This establishes that every point on the Pareto frontier corresponds to a threshold policy.

Conversely, every threshold policy $r^*_p$ is Pareto optimal.
Any policy with the same splitting probability $p$ has the same throughput (Proposition~\ref{prop:baseline-monotonicity}) and weakly higher waiting time (Theorem~\ref{thm:optimal_threshold}), so cannot strictly dominate $r^*_p$.
Any policy with different splitting probability has different throughput, so cannot improve both objectives simultaneously.

It remains to show that increasing $p$ strictly reduces waiting time under the threshold policy.
Recall from Appendix~\ref{sec:form_fixed_prob} that $\mathbb{E}[W_q] = \text{const} - \mathbb{E}[r \cdot w]$, where the constant is independent of the routing policy.
For the threshold policy $r^*_p$, we have $\mathbb{E}[r^*_p \cdot w] = \mathbb{E}[w \cdot \mathbf{1}\{w > \tau_p\}]$.
As $p$ increases, the threshold $\tau_p$ decreases (since $p = \Pr[w > \tau_p]$ and the distribution of $w$ is continuous).
Since $w \geq 0$, decreasing $\tau_p$ strictly increases $\mathbb{E}[w \cdot \mathbf{1}\{w > \tau_p\}]$, and hence strictly decreases $\mathbb{E}[W_q](r^*_p)$.

Since $T$ is strictly convex, by Proposition~\ref{prop:baseline-monotonicity}, $\mathcal{T}$ strictly decreases in $p$.
Therefore increasing $p$ strictly reduces waiting time while strictly decreasing throughput, and decreasing $p$ strictly increases waiting time while strictly increasing throughput.
Thus no feasible policy strictly dominates $r^*_p$.
\end{proof}

\begin{remark}
The endpoints $p=1$ and $p=0$ are attained by taking $\tau = 0$ and $\tau = \infty$, respectively.
\end{remark}

Theorem~\ref{thm:pareto_optimality} establishes that the $w$-threshold policy achieves the complete Pareto frontier.
The operating point can be selected by choosing an appropriate splitting probability $p$ (or, equivalently, threshold $\tau$): higher splitting probability reduces customer waiting time at the expense of baseline throughput.

\subsection{Optimality of Load-Balanced Server Assignment}
\label{sec:symmetrization}
We now prove Proposition~\ref{prop:symmetrization}, justifying the restriction to load-balanced policies.
For any routing policy that induces a splitting probability $p$, we show that load-balanced server assignment minimizes the average customer waiting time.

\begin{proposition}[Load-Balanced Dominance]
For any oracle routing policy $\pi$ with splitting probability $p$, there exists a load-balanced policy $\pi^{\text{bal}}$ with the same splitting probability that achieves weakly lower average customer waiting time and identical baseline throughput. A policy is load-balanced if, when splitting, each server is equally likely to receive the larger job, and when bunching, the destination server is chosen uniformly at random.
\end{proposition}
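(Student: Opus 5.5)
We construct $\pi^{\text{bal}}$ by symmetrization. Let $\bar\pi$ be the mirrored policy obtained from $\pi$ by swapping the labels of Server~1 and Server~2 in every routing action. Define $\pi^{\text{bal}}$ as the mixture that, independently for each arriving pair $(x_1,x_2)$, flips a fair coin and follows $\pi$ or $\bar\pi$ accordingly. Mixing the two preserves the splitting probability pointwise: $r^{\text{bal}}(x_1,x_2)=r(x_1,x_2)$, so $\mathbb{E}[r^{\text{bal}}]=p$. When $\pi^{\text{bal}}$ splits, the larger job goes to each server with probability $1/2$. When it bunches, the destination is uniform. So $\pi^{\text{bal}}$ is load-balanced.

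Throughput equality then follows from the idle-period structure. For any policy, server $j$ receives batches as a thinned Poisson stream. Its batch rate is $\Lambda_j=\lambda\,\mathbb{E}[\Pr(\text{server } j \text{ gets} \ge 1 \text{ customer}\mid X_1,X_2)]$, and its idle periods are $\mathrm{Exp}(\Lambda_j)$. I would verify that $\Lambda_1+\Lambda_2=\lambda(1+p)$ always, because a split pair feeds both servers and a bunched pair feeds one. For the total (sum over servers) baseline throughput, the balanced policy gives each server $\Lambda=\lambda(1+p)/2$. This reproduces the quantity $\mathcal{T}(p)$ of Proposition~\ref{prop:baseline-monotonicity}, which is the throughput the proposition refers to.

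For waiting time, I would write the system-average $\mathbb{E}[W_q]$ as a weighted sum of the two servers' waits. Each server is an $M^X/G/1$ queue, and its virtual wait is $\Lambda_j\mathbb{E}[B_{s,j}^2]/\bigl(2(1-\rho_j)\bigr)$. The within-batch delay $\tfrac14\mathbb{E}[(1-r)S]$ does not depend on server labels, so it is identical under $\pi$, $\bar\pi$ and $\pi^{\text{bal}}$. The customer-weighted total virtual wait is
\[
\sum_j \frac{\lambda_j}{2\lambda}\cdot\frac{\Lambda_j\mathbb{E}[B_{s,j}^2]}{2(1-\rho_j)},
\]
where $\lambda_j$ is the customer arrival rate at server $j$ and $\rho_j=\lambda_j/\mu$. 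Let $m_j=\Lambda_j\mathbb{E}[B_{s,j}^2]$ be the per-pair squared-workload rate, computed exactly as in Appendix~\ref{app:single_server}. Both $m_j$ and $\lambda_j$ are linear in the routing probabilities, so mixing $\pi$ and $\bar\pi$ with weight $1/2$ yields $m_j^{\text{bal}}=(m_1+m_2)/2$ and $\lambda_j^{\text{bal}}=\lambda$, the values used in the main analysis.

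The main obstacle is then a convexity inequality. I need
\[
\frac{1}{\lambda}\sum_j \frac{\lambda_j m_j}{4(1-\lambda_j/\mu)} \;\ge\; \frac{m_1+m_2}{4(1-\lambda/\mu)},
\]
with $\lambda_1+\lambda_2=2\lambda$. My plan is to define $h(\ell)=\ell/(1-\ell/\mu)$, which is convex and increasing on $[0,\mu)$, and to argue by rearrangement and Chebyshev's sum inequality. The wait is an average of $m_j h(\lambda_j)/\lambda$, and $m_j$ and $\lambda_j$ should be similarly ordered, since the server receiving more customers also carries more squared workload. If the ordering fails, I would instead exploit the fact that the total objective is jointly convex in the pair $(\lambda_j,m_j)$. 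The function $(\ell,m)\mapsto m\,h(\ell)$ is not jointly convex in general, so I would fall back on a Jensen argument over the symmetric pair $\pi,\bar\pi$. The objective is invariant under label swap, so if it is convex along the segment joining $\pi$ to $\bar\pi$, the midpoint $\pi^{\text{bal}}$ does no worse than $\pi$. Verifying that convexity, from the $1/(1-\rho_j)$ factor, is the delicate step. If any $\rho_j\ge1$, then $\pi$ has infinite wait and the inequality is trivial.
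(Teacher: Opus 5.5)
Your mirrored-mixture construction of $\pi^{\mathrm{bal}}$ is sound: it preserves $r(x_1,x_2)$ pointwise, and $m_j$, $\lambda_j$, $\rho_j$ average correctly. The proof, however, rests on the inequality you leave open, and that inequality cannot be proved, because it is false in the generality you are working in.

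First, it is mis-specified. An oracle policy may choose the server assignment from the realized service times. So the load at server $j$ is $\rho_j=\lambda\,\mathbb{E}[Z_j]$, where $Z_j$ is the work a pair brings to server $j$, not $\lambda_j/\mu$. The paper's own example (smaller job always to Server~1) has $\lambda_1=\lambda$ but $\rho_1=\bar\rho/2$. With your formula, the case $p=1$ would hold with equality for free.

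Second, with the correct loads, the less loaded server can carry the larger second-moment flux, and then unbalancing helps. Take $\mu=1$, $\lambda=1/2$, $p=1$. Send $\max(X_1,X_2)$ to Server~2 if $\max(X_1,X_2)>2.3$, and send $\min(X_1,X_2)$ to Server~2 otherwise. Each server is then an $M/G/1$ queue with arrival rate $\lambda$. Direct integration gives $\mathbb{E}[Z_2]\approx0.971$, $\mathbb{E}[Z_2^2]\approx2.560$, $\mathbb{E}[Z_1]\approx1.029$ and $\mathbb{E}[Z_1^2]\approx1.440$, so
\[
\mathbb{E}[W_q]=\frac12\sum_{j=1}^{2}\frac{\lambda\,\mathbb{E}[Z_j^2]}{2\bigl(1-\lambda\,\mathbb{E}[Z_j]\bigr)}\approx0.993.
\]
By contrast, your $\pi^{\mathrm{bal}}$ gives each server $\mathrm{Exp}(1)$ service and $\mathbb{E}[W_q]=1$. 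So does any policy that, given the pair, sends the larger job to each server with probability $1/2$. Baseline throughput is the same for both policies, since every server has $\mathrm{Exp}(\lambda)$ idle periods and $\rho_1+\rho_2=2\bar\rho$.

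So the statement itself fails for oracle policies whose assignment depends on the realized service times. This also rules out both of your fallback routes. There is no favourable ordering to feed into Chebyshev. Along the segment from $\bar\pi$ to $\pi$, the waiting time is $\approx0.993$ at both ends and $1$ at the midpoint, so it is not convex there.

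The paper's proof avoids this because it implicitly takes the action probabilities $(\alpha,\beta,\gamma,\delta)$ to be independent of $(x_1,x_2)$; it computes $\rho_1$ and $K_1$ from $\mathbb{E}[\min]$ and $\mathbb{E}[\max]$. It then uses work conservation and convexity of the lower frontier $K^*(\rho)$ and of $g(\rho)=K^*(\rho)/[2(1-\rho)]$. From these it shows that the equal-weight average $\tfrac12\bigl(g(\rho_1)+g(\rho_2)\bigr)$ is minimized at $\rho_1=\rho_2$. In that class, $K_1=(3/\mu)\rho_1-\lambda p/\mu^2$ at fixed $p$, so load and second-moment flux move together and the anti-alignment above cannot occur. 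The missing ingredient is therefore not a sharper inequality: any repair of your argument has to work inside a restricted policy class of this kind.

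Separately, your throughput step only shows that $\pi^{\mathrm{bal}}$ attains $\mathcal{T}(p)$. The identity $\Lambda_1+\Lambda_2=\lambda(1+p)$ does not give the same value for $\pi$. Each server's throughput is $\Lambda_j(1-\rho_j)\,\mathbb{E}[T(I_j)]$ with $I_j\sim\mathrm{Exp}(\Lambda_j)$, which is nonlinear in $\Lambda_j$. For instance, always bunching to Server~1 leaves Server~2 uninterrupted. The paper likewise only asserts this step.
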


\begin{proof}
Consider an arriving pair $(X_1, X_2)$ with $X_i \stackrel{\text{iid}}{\sim} \text{Exp}(\mu)$.
For each pair, an oracle policy $\pi$ selects from four possible routing actions:
\begin{itemize}
\item $A_{\min}$ (Split): Send $\min(X_1, X_2)$ to Server 1 and $\max(X_1, X_2)$ to Server 2.
\item $A_{\max}$ (Split): Send $\max(X_1, X_2)$ to Server 1 and $\min(X_1, X_2)$ to Server 2.
\item $B_1$ (Bunch): Send $X_1 + X_2$ to Server 1 and $0$ to Server 2.
\item $B_2$ (Bunch): Send $0$ to Server 1 and $X_1 + X_2$ to Server 2.
\end{itemize}

Let $(\alpha, \beta, \gamma, \delta)$ be the probabilities that the oracle selects actions $A_{\min}, A_{\max}, B_1, B_2$ respectively.
The induced splitting probability is $p = \alpha + \beta$.

\begin{lemma}[Moments of exponential order statistics]
\label{lem:moments}
For $X_1, X_2 \stackrel{\text{iid}}{\sim} \text{Exp}(\mu)$:
\begin{equation}
\mathbb{E}[\min] = \frac{1}{2\mu}, \quad \mathbb{E}[\min{}^2] = \frac{1}{2\mu^2}, \quad \mathbb{E}[\max] = \frac{3}{2\mu}, \quad \mathbb{E}[\max{}^2] = \frac{7}{2\mu^2}.
\end{equation}
\end{lemma}
\begin{proof}
Since $\min(X_1, X_2) \sim \text{Exp}(2\mu)$, we have $\mathbb{E}[\min] = 1/(2\mu)$ and $\mathbb{E}[\min^2] = 2/(2\mu)^2 = 1/(2\mu^2)$.
For the maximum, the density is $f_{\max}(t) = 2\mu(e^{-\mu t} - e^{-2\mu t})$.
Using $\int_0^\infty t^n e^{-at}\, dt = n!/a^{n+1}$, direct calculation yields $\mathbb{E}[\max] = 3/(2\mu)$ and $\mathbb{E}[\max^2] = 7/(2\mu^2)$.
\end{proof}

Let $Z_1$ denote the workload contributed to Server 1 by a single arriving pair.
Using Lemma~\ref{lem:moments} and the sum moments ($\mathbb{E}[X_1+X_2] = 2/\mu$, $\mathbb{E}[(X_1+X_2)^2] = 6/\mu^2$), the load $\rho_1$ and second-moment flux $K_1 = \lambda \mathbb{E}[Z_1^2]$ at Server 1 are:
\begin{align}
\rho_1 &= \lambda (\alpha \cdot \tfrac{1}{2\mu} + \beta \cdot \tfrac{3}{2\mu} + \gamma \cdot \tfrac{2}{\mu}), \label{eq:rho1}\\
K_1 &= \lambda (\alpha \cdot \tfrac{1}{2\mu^2} + \beta \cdot \tfrac{7}{2\mu^2} + \gamma \cdot \tfrac{6}{\mu^2}). \label{eq:K1}
\end{align}

\textbf{Step 1: Work conservation.}
Every arriving pair contributes total expected work $\mathbb{E}[X_1 + X_2] = 2/\mu$ to the system, regardless of the routing action.
Thus $\rho_1 + \rho_2 = 2\lambda/\mu = 2\bar{\rho}$ for all policies, where $\bar{\rho} = \lambda/\mu$.

\textbf{Step 2: Load-balanced policies exist and achieve $\rho_1 = \rho_2$.}
Define a policy to be \emph{load-balanced} if $\alpha = \beta$ and $\gamma = \delta$, i.e., when splitting, each server is equally likely to receive the larger job, and when bunching, each server is equally likely to receive the pair.
For a target splitting probability $p$, the load-balanced policy sets $\alpha = \beta = p/2$ and $\gamma = \delta = (1-p)/2$.
Substituting into \eqref{eq:rho1}:
\begin{equation}
\rho_1 = \lambda \left( \frac{p}{2} \cdot \frac{1}{2\mu} + \frac{p}{2} \cdot \frac{3}{2\mu} + \frac{1-p}{2} \cdot \frac{2}{\mu} \right) = \frac{\lambda}{2\mu}(p + (1-p) \cdot 2) = \frac{\lambda}{\mu} = \bar{\rho}.
\end{equation}
By an analogous calculation, $\rho_2 = \bar{\rho}$.
Thus load-balanced policies achieve $\rho_1 = \rho_2 = \bar{\rho}$ for any splitting probability $p \in [0,1]$.

In contrast, policies violating $\alpha = \beta$ or $\gamma = \delta$ can induce load imbalance.
For example, with $p = 1$, the policy $\alpha = 1, \beta = 0$ (always send the smaller job to Server 1) yields $\rho_1 = \lambda/(2\mu) = \bar{\rho}/2 \neq \bar{\rho}$.

\textbf{Step 3: Load imbalance increases waiting time.}
Since the map $(\alpha, \beta, \gamma, \delta) \mapsto (\rho_1, K_1)$ is linear and the domain is a simplex, the set of achievable pairs $(\rho_1, K_1)$ is a convex polytope.
Define the efficient frontier $K^*(\rho)$ as the minimum $K_1$ achievable for a given load $\rho_1 = \rho$.

\begin{lemma}[Convexity of efficient frontier]
\label{lem:frontier}
The efficient frontier $K^*(\rho)$ is piecewise linear and convex, with $K^{*\prime}(\rho) \geq 1/\mu > 0$.
\end{lemma}
\begin{proof}
The lower boundary of the polytope is traced through vertices $V_0$ (load $0$), $V_1$ (load $\lambda/2\mu$), $V_2$ (load $3\lambda/2\mu$), and $V_3$ (load $2\lambda/\mu$), with slopes:
\begin{equation}
\frac{dK^*}{d\rho}\bigg|_{V_0 \to V_1} = \frac{1}{\mu}, \quad \frac{dK^*}{d\rho}\bigg|_{V_1 \to V_2} = \frac{3}{\mu}, \quad \frac{dK^*}{d\rho}\bigg|_{V_2 \to V_3} = \frac{5}{\mu}.
\end{equation}
Since the slopes are strictly increasing and bounded below by $1/\mu$, $K^*(\rho)$ is piecewise linear and convex.
\end{proof}

The system-average waiting time is $W = \frac{1}{2} (g(\rho_1) + g(\rho_2))$, where $g(\rho) = K^*(\rho)/[2(1-\rho)]$.
The curvature of $g$ is:
\begin{equation}
g''(\rho) = \frac{K^{*\prime\prime}(\rho)(1-\rho)^2 + 2K^{*\prime}(\rho)(1-\rho) + 2K^*(\rho)}{2(1-\rho)^3}.
\end{equation}
For stable systems ($1-\rho > 0$), we have $K^{*\prime\prime}(\rho) \geq 0$, $K^{*\prime}(\rho) \geq 1/\mu > 0$ (Lemma~\ref{lem:frontier}), and $K^*(\rho) > 0$ for $\rho > 0$.
Thus $g''(\rho) > 0$, so $g$ is strictly convex.

Parametrize load imbalance by $\epsilon = \rho_1 - \bar{\rho}$, so $\rho_1 = \bar{\rho} + \epsilon$ and $\rho_2 = \bar{\rho} - \epsilon$ by work conservation.
The waiting time $W(\epsilon) = \frac{1}{2}[g(\bar{\rho}+\epsilon) + g(\bar{\rho}-\epsilon)]$ is symmetric in $\epsilon$ and strictly convex.
Therefore $W$ is uniquely minimized at $\epsilon = 0$, i.e., at balanced load $\rho_1 = \rho_2 = \bar{\rho}$.

\textbf{Step 4: Load-balanced policies are optimal for any fixed $p$.}
Fix a splitting probability $p$.
Among all policies with this $p$, non-load-balanced policies can induce load imbalance $\epsilon \neq 0$, while load-balanced policies achieve $\epsilon = 0$ (Step 2).
Since load imbalance strictly increases waiting time (Step 3), load-balanced policies achieve weakly lower waiting time.
\end{proof}

\begin{corollary}[Servers see identical arrivals]
\label{cor:servers_identical_arrivals}
Under load-balanced policies, both servers experience statistically identical $M^X/G/1$ arrival processes.
All analysis may therefore be conducted from the perspective of a single server without loss of generality.
\end{corollary}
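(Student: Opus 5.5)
The corollary follows by combining Step 2 of the preceding proof with the pair-level description of the single-server arrival stream from Appendix~\ref{app:queueing_model}. The plan is to fix one server, say Server~1, and describe exactly what it receives when a pair arrives. The description should be written so that it is visibly invariant under swapping the labels of the two servers.

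\textbf{Step 1: per-server marks for one pair.} Under a load-balanced policy with $\alpha=\beta=r/2$ and $\gamma=\delta=(1-r)/2$ conditional on $(x_1,x_2)$, I would list what Server~1 receives from a pair with service times $(x_1,x_2)$:
\begin{itemize}
\item a size-2 batch $\{x_1,x_2\}$ with probability $(1-r)/2$;
\item the single job $\min(x_1,x_2)$ with probability $r/2$;
\item the single job $\max(x_1,x_2)$ with probability $r/2$;
\item nothing with probability $(1-r)/2$.
\end{itemize}
Exchanging Server~1 and Server~2 permutes $A_{\min}\leftrightarrow A_{\max}$ and $B_1\leftrightarrow B_2$. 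Since $\alpha=\beta$ and $\gamma=\delta$, this relabelling leaves the conditional law above unchanged. Because $(X_1,X_2)$ is exchangeable, it is also harmless which router's customer is called ``1''. So the mark distribution is the same for both servers.

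\textbf{Step 2: independent thinning.} Pair arrivals form a Poisson process of rate $\lambda$, and each pair's marks are drawn i.i.d.\ given the pair. The routing randomization is independent across pairs. Hence the stream of nonempty deliveries to a server is an independently thinned, marked Poisson process. It is therefore compound Poisson with rate $\Lambda=\lambda(1+p)/2$ and the batch-size and batch-content laws derived in Appendix~\ref{app:queueing_model}.

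\textbf{Step 3: conclusion.} Each server sees the same $M^X/G/1$ input: the same rate, the same joint law of batch contents, and utilization $\rho$ by Step~2 of the previous proof. Since each queue is FCFS and its evolution depends only on its own input, the stationary per-server quantities coincide. These include the waiting-time distribution, the idle periods $I\sim\mathrm{Exp}(\Lambda)$, and the busy periods. System averages therefore equal single-server averages.

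\textbf{Main subtlety.} The two servers' input processes are not independent of each other: a split pair sends the min to one server and the max to the other. The claim asserts only identical marginal laws, and every performance measure used in the paper, $W_q$ and $\mathcal{T}$, is an average of per-server marginals. I would state explicitly that no joint-independence claim is needed, so the symmetry of Step~1 suffices.
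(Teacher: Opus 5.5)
Your proposal is correct and follows essentially the paper's own (implicit) justification. The corollary rests on the relabelling symmetry built into load-balanced assignment, together with the compound-Poisson single-server description in the model appendix; your marking/thinning step and your remark that only marginal laws are needed simply make this explicit. One small refinement: rather than citing Step~2 for the utilization, whose formula is written as if the action probabilities $(\alpha,\beta,\gamma,\delta)$ did not depend on the pair, use your own Step~1 symmetry to get $\rho_1=\rho_2$, and then work conservation $\rho_1+\rho_2=2\bar\rho$ gives $\rho_1=\rho_2=\bar\rho$ directly.
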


This completes the technical foundations for the coordination challenge established in Section~\ref{sec:coordination} of the main text.

\section{Mapping to a Non-Local Game}
\label{sec:game-mapping}

In this appendix, we show that the distributed routing problem introduced in the main text and analyzed in detail in Appendix~\ref{sec:delta_threshold} can be formulated as a non-local game.
With this formulation, we establish an exact correspondence between performance in the non-local game and excess waiting time relative to the $w$-threshold policy.

\subsection{Local Strategies}

The analysis in Appendix~\ref{sec:delta_threshold} characterized the $w$-threshold policy, which is optimal when both service times $(X_1, X_2)$ are available for routing decisions.
This policy splits pairs with high splitting benefit $w(x_1, x_2)$ and bunches pairs with low splitting benefit, achieving the minimum waiting time at any given splitting probability.

In practice, each router observes only its local customer's service time.
Router~A (B) observes $x_1$ ($x_2$) and must choose an output based solely on this value.
This is precisely the setting of a non-local game, where non-communicating players receive local inputs and must produce outputs.

The key question becomes: how well can local strategies approximate the $w$-threshold policy?
Classical strategies (with shared randomness) and quantum strategies (with shared entanglement) may achieve different performance levels.
The non-local game formulation allows us to quantify this gap precisely.

\subsection{Setup}
Fix a splitting probability $p \in [0,1]$ and let $\tau_p$ be the unique threshold satisfying $\Pr[w(X_1,X_2) > \tau_p] = p$.
Define the decision function
\begin{equation}
\sigma^*(x_1,x_2) = \mathrm{sign}(\tau_p - w(x_1,x_2)),
\end{equation}
which takes value $+1$ (bunch) when $w < \tau_p$ and $-1$ (split) when $w > \tau_p$.
This is the $w$-threshold policy established in Appendix~\ref{sec:delta_threshold}: split when the splitting benefit exceeds the threshold, and bunch otherwise.

A local routing strategy is specified by output functions $o_A: \mathbb{R}_+ \to \{+1, -1\}$ and $o_B: \mathbb{R}_+ \to \{+1, -1\}$, where router~A observes $X_1$ and outputs $o_A(X_1)$, and router~B observes $X_2$ and outputs $o_B(X_2)$.
The product $o_A(X_1) o_B(X_2) = +1$ indicates bunching (both routers choose the same server) and $o_A(X_1) o_B(X_2) = -1$ indicates splitting (routers choose different servers).
We define the split indicator $s = (1 - o_A o_B)/2 \in \{0,1\}$, which equals $1$ when the pair is split and $0$ when bunched.
Note that $s \in \{0,1\}$ is a realized outcome for each pair, whereas $r(x_1, x_2) \in [0,1]$ in Appendix~\ref{sec:delta_threshold} denotes a splitting probability.

We restrict attention to strategies achieving splitting probability $p$:
\begin{equation}
\Pr[o_A(X_1) o_B(X_2) = -1] = p.
\end{equation}
This constraint is natural because, as established in Appendix~\ref{sec:baseline_throughput}, baseline throughput depends only on the splitting probability $p$.
Comparing strategies at fixed $p$ therefore isolates the waiting time objective.

\begin{remark}[Load-balanced server assignment]
\label{rem:load-balanced}
The waiting time analysis in Appendix~\ref{sec:delta_threshold} assumes load-balanced server assignment: when splitting, each server is equally likely to receive either job; when bunching, the destination server is chosen uniformly at random.
For local strategies, load balancing is achievable via shared randomness: players share a random bit $b \in \{0,1\}$, and if $b = 1$, both flip their outputs (i.e., output $-o_A(X_1)$ and $-o_B(X_2)$ instead of $o_A(X_1)$ and $o_B(X_2)$).
This transformation preserves the product $o_A o_B$ (hence the split/bunch decision and game payoff) while ensuring symmetric server assignment.
Throughout this appendix, we assume all strategies employ load-balanced server assignment, so that the waiting time results from Appendix~\ref{sec:delta_threshold} apply.
\end{remark}

\subsection{The Queueing--Game Correspondence}

\begin{lemma}[Waiting time gap]
\label{lem:wt-gap}
Let $(o_A, o_B)$ be any local routing strategy with splitting probability $p$, employing load-balanced server assignment.
The excess waiting time relative to the $w$-threshold policy is
\begin{equation}
\label{eq:delta-wq}
\Delta W_q := \mathbb{E}[W_q](o_A, o_B) - \mathbb{E}[W_q]^* = \frac{1}{2}\,\mathbb{E}\bigl[(o_A(X_1) o_B(X_2) - \sigma^*(X_1,X_2))\, w(X_1,X_2)\bigr].
\end{equation}
\end{lemma}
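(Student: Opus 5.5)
The plan is to express both waiting times through the affine representation derived in Appendix~\ref{sec:form_fixed_prob}, namely $\mathbb{E}[W_q] = \text{const} - \mathbb{E}[r\, w(X_1,X_2)]$. The constant there depends only on $\lambda$, $\rho$ and the service-time distribution, not on the policy. We then substitute the realized split indicator of each strategy for $r$.

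\textbf{Step 1: the local strategy fits the oracle framework.} A local strategy $(o_A,o_B)$, made load-balanced by the shared-bit symmetrization of Remark~\ref{rem:load-balanced}, is a special case of the load-balanced oracle policies of Appendix~\ref{sec:delta_threshold}. Its splitting probability is
\[
r(x_1,x_2) = s(x_1,x_2) = \tfrac{1}{2}\bigl(1 - o_A(x_1)o_B(x_2)\bigr).
\]
This is deterministic given $(x_1,x_2)$; if $o_A,o_B$ were randomized, we would average over the internal randomness. The derivation of $\mathbb{E}[W_q]$ in Appendices~\ref{app:single_server} and~\ref{sec:decomposition} needs only the load-balanced assignment and the function $r$. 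The Pollaczek--Khinchine term and the within-batch delay term are both linear in $r$. Hence
\[
\mathbb{E}[W_q](o_A,o_B) = \text{const} - \mathbb{E}\bigl[\tfrac{1}{2}\bigl(1 - o_A(X_1)o_B(X_2)\bigr)\, w(X_1,X_2)\bigr].
\]

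\textbf{Step 2: the $w$-threshold policy in the same form.} The $w$-threshold policy has $r^* = \mathbf{1}\{w>\tau_p\}$. Since $W$ is continuously distributed, $\Pr[w=\tau_p]=0$, so almost surely $r^* = \tfrac{1}{2}(1-\sigma^*)$. Therefore
\[
\mathbb{E}[W_q]^* = \text{const} - \mathbb{E}\bigl[\tfrac{1}{2}(1-\sigma^*)\, w\bigr].
\]

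\textbf{Step 3: subtract.} Both expressions share the same constant, and the two $\tfrac{1}{2}\mathbb{E}[w]$ terms cancel; $\mathbb{E}[w]$ is finite because $w\in L^1$. What remains is
\[
\Delta W_q = \tfrac{1}{2}\,\mathbb{E}\bigl[(o_A o_B - \sigma^*)\, w\bigr],
\]
which is the claim. Lemma~\ref{lem:agreement_queue_main} then follows by substituting the definitions of $A$ and $A^*$. The fixed-$p$ constraint is not needed for the identity itself. It matters only for interpretation, since it guarantees equal throughput.

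\textbf{Main obstacle.} The one nontrivial point is Step 1: justifying that the oracle formula applies to local strategies. Two things must be checked. First, load balancing via the shared bit leaves the product $o_Ao_B$ unchanged while making server assignment symmetric, so the single-server $M^X/G/1$ reduction of Corollary~\ref{cor:servers_identical_arrivals} holds. Second, the queueing derivation used $r$ only through expectations of the form $\mathbb{E}[r\,f(X_1,X_2)]$, so a $\{0,1\}$-valued $r$ is admissible.
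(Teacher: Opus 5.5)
Your proposal is correct and follows the paper's proof. Both write $\mathbb{E}[W_q] = C - \mathbb{E}[s\,w]$, substitute $s = \tfrac{1}{2}(1 - o_Ao_B)$ and $s^* = \tfrac{1}{2}(1-\sigma^*)$, and subtract. Your extra points are valid refinements that the paper leaves implicit: the identification of $s^*$ is only almost sure because of the null set $\{w=\tau_p\}$, $w\in L^1$ justifies the cancellation, and the identity itself does not use the fixed-$p$ constraint.
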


\begin{proof}
From Appendix~\ref{sec:form_fixed_prob}, under load-balanced server assignment the expected waiting time under any routing policy with split indicator $s: \mathbb{R}_+^2 \to \{0,1\}$ is
\begin{equation}
\mathbb{E}[W_q] = C - \mathbb{E}[s \cdot w],
\end{equation}
where $C = \frac{\lambda \mathbb{E}[S^2]}{4(1-\rho)} + \frac{\mathbb{E}[S]}{4}$ is independent of the routing policy and $w(x_1,x_2) = c_1 x_1 x_2 + c_2(x_1+x_2)$ is the splitting benefit function.

The $w$-threshold policy achieves $s^*(x_1,x_2) = \mathbf{1}\{w(x_1,x_2) > \tau_p\}$, which can be written as
\begin{equation}
s^* = \frac{1 - \sigma^*}{2}
\end{equation}
since $\sigma^* = +1$ when $w < \tau_p$ (bunch, $s^* = 0$) and $\sigma^* = -1$ when $w > \tau_p$ (split, $s^* = 1$).

Similarly, for a local strategy $(o_A, o_B)$ with split indicator $s = (1 - o_A o_B)/2$, we have
\begin{equation}
s^* - s = \frac{1 - \sigma^*}{2} - \frac{1 - o_A o_B}{2} = \frac{o_A o_B - \sigma^*}{2}.
\end{equation}

The waiting time gap is therefore
\begin{align}
\Delta W_q &= \mathbb{E}[W_q](o_A, o_B) - \mathbb{E}[W_q]^* \\
&= \bigl(C - \mathbb{E}[s \cdot w]\bigr) - \bigl(C - \mathbb{E}[s^* w]\bigr) \\
&= \mathbb{E}[s^* w] - \mathbb{E}[s \cdot w] \\
&= \mathbb{E}[(s^* - s) \cdot w] \\
&= \frac{1}{2}\,\mathbb{E}[(o_A o_B - \sigma^*) \cdot w]. \qedhere
\end{align}
\end{proof}

\begin{remark}[Interpretation]
The waiting time gap decomposes into contributions from two types of disagreement with $\sigma^*$:
\begin{itemize}
\item \emph{Under-splitting}: bunching when $\sigma^* = -1$ (i.e., $o_A o_B = +1$), contributing $+w$ to $\Delta W_q$.
\item \emph{Over-splitting}: splitting when $\sigma^* = +1$ (i.e., $o_A o_B = -1$), contributing $-w$ to $\Delta W_q$.
\end{itemize}
Since strategies are constrained to the same splitting probability $p$, any over-splitting must be compensated by under-splitting elsewhere.
However, under-splitting occurs in the high-$w$ region (above threshold) while over-splitting occurs in the low-$w$ region (below threshold), so under-splitting is more costly.
This asymmetry ensures $\Delta W_q \geq 0$ with equality if and only if $o_A o_B = \sigma^*$ almost surely.
\end{remark}

\subsection{Non-Local Game Formulation}

We cast the routing problem as a non-local game (Definition~\ref{def:nonlocal-game}).
Players A and B receive service times $X_1, X_2 \sim \mathrm{Exp}(\mu)$ independently and produce outputs $o_A(X_1), o_B(X_2) \in \{+1, -1\}$ without communication.
The game payoff is
\begin{equation}
\label{eq:game-payoff}
A(o_A, o_B) = -\mathbb{E}[o_A(X_1) o_B(X_2) \cdot w(X_1,X_2)],
\end{equation}
subject to the constraint $\Pr[o_A(X_1) o_B(X_2) = -1] = p$.
The payoff $A$ rewards splitting (which corresponds to $o_A o_B = -1$) in proportion to the splitting benefit $w$, directly encoding the queueing objective $\mathbb{E}[s \cdot w]$ that determines waiting time.

\begin{proposition}[Game--queueing equivalence; restates Lemma~\ref{lem:agreement_queue_main}]
\label{prop:game-queueing}
Let $A^* = -\mathbb{E}[\sigma^* \cdot w]$ denote the $w$-threshold payoff.
For any local strategy $(o_A, o_B)$ with splitting probability $p$ and load-balanced server assignment:
\begin{equation}
\Delta W_q = \frac{A^* - A(o_A, o_B)}{2}.
\end{equation}
Consequently, maximizing the game payoff $A(o_A, o_B)$ is equivalent to minimizing the excess waiting time $\Delta W_q$.
\end{proposition}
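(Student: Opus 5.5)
The plan is to derive Proposition~\ref{prop:game-queueing} directly from Lemma~\ref{lem:wt-gap} by linearity of expectation. No further queueing analysis should be needed.

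First, I would start from the identity established in Lemma~\ref{lem:wt-gap}:
\[
\Delta W_q = \tfrac{1}{2}\,\mathbb{E}\bigl[(o_A(X_1)o_B(X_2) - \sigma^*(X_1,X_2))\,w(X_1,X_2)\bigr].
\]
This identity already assumes load-balanced server assignment, and it assumes the strategy achieves splitting probability $p$, so that the $w$-threshold policy with threshold $\tau_p$ is the correct comparator. Both hypotheses appear verbatim in the statement.

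Second, I would split the expectation into two terms. This requires each term to be finite. Since $|o_A o_B| = |\sigma^*| = 1$, each integrand is bounded in absolute value by $w(X_1,X_2)$. Appendix~\ref{sec:optimal_policy} noted that $w \in L^1$ under exponential service times, so both terms are integrable. This integrability check is the only step that needs any care.

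Linearity then gives
\[
\Delta W_q = \tfrac{1}{2}\bigl(\mathbb{E}[o_A o_B\, w] - \mathbb{E}[\sigma^* w]\bigr).
\]
I would then identify the two terms with game payoffs:
\begin{itemize}
\item by \eqref{eq:game-payoff}, $\mathbb{E}[o_A o_B\, w] = -A(o_A,o_B)$;
\item by the definition of $A^*$, $-\mathbb{E}[\sigma^* w] = A^*$.
\end{itemize}
Substituting these yields $\Delta W_q = (A^* - A(o_A,o_B))/2$.

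For the consequence, note that $A^*$ depends only on $p$ (through $\tau_p$) and not on the local strategy. Among strategies with splitting probability $p$, $\Delta W_q$ is therefore a strictly decreasing affine function of $A$. Hence maximizing $A$ is equivalent to minimizing $\Delta W_q$.

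If one also wants the quantum or shared-randomness case, where $o_A o_B$ is random given $(X_1,X_2)$, the same argument applies after replacing $o_A o_B$ by its conditional expectation $\mathbb{E}[o_A o_B \mid X_1,X_2] = 1 - 2r(X_1,X_2)$. This is still bounded by $1$, and the formula for $\mathbb{E}[W_q]$ in Appendix~\ref{sec:form_fixed_prob} is linear in $r$, so the argument is unchanged. I do not expect a genuine obstacle; the only point needing attention is the integrability justification for splitting the expectation.
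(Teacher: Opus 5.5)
Your proposal is correct and follows the paper's proof essentially line by line: you start from Lemma~\ref{lem:wt-gap}, split the expectation by linearity, and identify $\mathbb{E}[o_A o_B\, w] = -A(o_A,o_B)$ and $-\mathbb{E}[\sigma^* w] = A^*$. Your integrability check via $w \in L^1$ and your remark that $A^*$ depends only on $p$, so that $\Delta W_q$ is a decreasing affine function of $A$, are small justifications the paper leaves implicit.
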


\begin{proof}
From Lemma~\ref{lem:wt-gap}:
\begin{align}
\Delta W_q &= \frac{1}{2}\,\mathbb{E}[(o_A o_B - \sigma^*) \cdot w] \\
&= \frac{1}{2}\bigl(\mathbb{E}[o_A o_B \cdot w] - \mathbb{E}[\sigma^* \cdot w]\bigr) \\
&= \frac{1}{2}\bigl(-A(o_A, o_B) - (-A^*)\bigr) \\
&= \frac{A^* - A(o_A, o_B)}{2}. \qedhere
\end{align}
\end{proof}

\begin{corollary}[Quantum advantage implies waiting time reduction]
\label{cor:quantum-advantage}
Let $A_{\mathrm{cl}}^*(p)$ denote the maximum payoff achievable by a classical strategy at splitting probability $p$, and let $A_{\mathrm{qu}}(p)$ denote the payoff achievable by a quantum strategy at the same splitting probability.
If $A_{\mathrm{qu}}(p) > A_{\mathrm{cl}}^*(p)$, then
\begin{equation}
\Delta W_q^{\mathrm{qu}}(p) < \Delta W_q^{\mathrm{cl}}(p),
\end{equation}
i.e., the quantum strategy achieves strictly lower excess waiting time at splitting probability $p$.
\end{corollary}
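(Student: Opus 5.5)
This is a direct consequence of Proposition~\ref{prop:game-queueing}. The plan is to apply that identity twice and subtract, so the one thing to check carefully is that both strategies fall under its hypotheses.

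\emph{Step 1: check the hypotheses.} Both the optimal classical strategy and the quantum strategy must operate at the same splitting probability $p$ and use load-balanced server assignment. For the classical strategy, load balancing is obtained from the shared random bit of Remark~\ref{rem:load-balanced}, which flips both outputs and leaves $o_A o_B$ unchanged. For the quantum strategy, the same remark applies, since a shared random bit can be appended to the entanglement. Alternatively, for the singlet, each marginal outcome is already uniform, so the assignment is automatically balanced.

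There is one subtlety. Proposition~\ref{prop:game-queueing} is stated for deterministic output functions, whereas quantum (and shared-randomness) strategies produce random outputs. I would therefore extend Lemma~\ref{lem:wt-gap} to randomized split indicators. The extension is straightforward because the waiting-time formula $\mathbb{E}[W_q]=C-\mathbb{E}[r\,w]$ from Appendix~\ref{sec:form_fixed_prob} holds for any splitting probability function $r(x_1,x_2)$. Take $r(x_1,x_2)=\Pr[o_Ao_B=-1\mid x_1,x_2]=(1-\mathbb{E}[o_Ao_B\mid x_1,x_2])/2$. The same algebra as in the lemma then gives
\[
\Delta W_q=\tfrac12\,\mathbb{E}\bigl[(\mathbb{E}[o_Ao_B\mid X_1,X_2]-\sigma^*)\,w\bigr]=\frac{A^*-A}{2},
\]
where $A=-\mathbb{E}[o_Ao_B\,w]$ with the expectation taken jointly over the inputs and the measurement outcomes.

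\emph{Step 2: subtract the two identities.} Since $A^*$ depends only on $p$, it is common to both strategies. Subtracting,
\[
\Delta W_q^{\mathrm{cl}}(p)-\Delta W_q^{\mathrm{qu}}(p)=\frac{A_{\mathrm{qu}}(p)-A_{\mathrm{cl}}(p)}{2}.
\]
Any classical strategy at splitting probability $p$ has $A_{\mathrm{cl}}\le A^*_{\mathrm{cl}}(p)$. Interpreting $\Delta W_q^{\mathrm{cl}}(p)$ as the best (smallest) classical excess waiting time, it equals $(A^*-A^*_{\mathrm{cl}}(p))/2$, so the right-hand side is strictly positive whenever $A_{\mathrm{qu}}(p)>A^*_{\mathrm{cl}}(p)$. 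The conclusion therefore holds against every classical strategy at that $p$.

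\emph{Main obstacle.} The delicate point is the randomized-strategy extension in Step 1, together with confirming that the load-balancing symmetrization does not alter $p$ or the payoff. Both follow from the linearity of $\mathbb{E}[r\,w]$ in $r$ and from the invariance of the product $o_Ao_B$ under a joint output flip. I would write this extension out explicitly rather than rely on the deterministic statement.
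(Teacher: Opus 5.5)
Your proposal is correct and follows the paper's route: the paper leaves the corollary unproved, and the proof of Theorem~\ref{thm:pareto-dominance} gives the same one-line argument, $\Delta W_q^{\mathrm{qu}}(p) = (A^*-A_{\mathrm{qu}}(p))/2 < (A^*-A_{\mathrm{cl}}^*(p))/2 = \Delta W_q^{\mathrm{cl}}(p)$, directly from Proposition~\ref{prop:game-queueing}. Your extension of Lemma~\ref{lem:wt-gap} to randomized outputs via $r(x_1,x_2)=(1-\mathbb{E}[o_Ao_B\mid x_1,x_2])/2$, and your observation that uniform binary marginals make the singlet strategy automatically load-balanced, spell out a step the paper leaves implicit, since its lemma is stated only for deterministic $o_A,o_B$.
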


The following theorem extends this pointwise comparison to Pareto dominance across a range of splitting probabilities.

\begin{theorem}[Pareto dominance; restates Theorem~\ref{thm:main}]
\label{thm:pareto-dominance}
Let $\mathcal{P} \subseteq [0,1]$ be the set of splitting probabilities for which $A_{\mathrm{qu}}(p) > A_{\mathrm{cl}}^*(p)$.
Then quantum strategies Pareto dominate classical strategies over $\mathcal{P}$: for all $p \in \mathcal{P}$,
\begin{equation}
\Delta W_q^{\mathrm{qu}}(p) < \Delta W_q^{\mathrm{cl}}(p)
\end{equation}
at identical baseline throughput $\mathcal{T}(p)$.
\end{theorem}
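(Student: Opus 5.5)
The plan is to reduce the statement to two facts already established: Proposition~\ref{prop:game-queueing}, which converts payoffs into excess waiting times, and Proposition~\ref{prop:baseline-monotonicity}, which shows throughput depends on a strategy only through its splitting probability. The comparison is made pointwise in $p$ on $\mathcal{P}$.

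Fix $p\in\mathcal{P}$ and a quantum strategy achieving $A_{\mathrm{qu}}(p)$ at splitting probability $p$. Also fix an arbitrary classical local strategy (deterministic or using shared randomness) with the same splitting probability $p$.

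\textbf{Load balancing.} Both strategies are first made load-balanced, using the shared-random-bit output flip of Remark~\ref{rem:load-balanced}. This flip leaves $o_Ao_B$ unchanged, so it changes neither the payoff nor $p$.

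\textbf{Waiting-time gaps.} Next I apply Proposition~\ref{prop:game-queueing} to each strategy, which gives $\Delta W_q=(A^*-A)/2$. For a classical strategy with shared randomness I apply this conditionally on each value of the shared randomness and average. This requires the gap formula to be affine in the policy: $\mathbb{E}[W_q]=C-\mathbb{E}[s\,w]$ is linear in the split indicator, so averaging is legitimate. One caveat is that for randomized mixtures the splitting probability constraint holds only on average. I handle this by expressing $\mathbb{E}[W_q]=C-\mathbb{E}[r w]$ directly with $r$ the overall split probability, which is exactly Problem~\ref{prob:fixed_p}'s formulation, so only $\mathbb{E}[r]=p$ matters. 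For the quantum strategy the same linearity argument applies, with $r(x_1,x_2)=\Pr[o_Ao_B=-1\mid x_1,x_2]$.

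\textbf{Classical bound.} The classical payoff is at most $A^*_{\mathrm{cl}}(p)$ (or its concave envelope, in the shared-randomness version). Hence $\Delta W_q^{\mathrm{cl}}(p)\ge (A^*-A^*_{\mathrm{cl}}(p))/2$, and I take $\Delta W_q^{\mathrm{cl}}(p)$ to mean the best classical value.

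\textbf{Strict comparison.} Since $A_{\mathrm{qu}}(p)>A^*_{\mathrm{cl}}(p)$, we get $\Delta W_q^{\mathrm{qu}}(p)=(A^*-A_{\mathrm{qu}}(p))/2<(A^*-A^*_{\mathrm{cl}}(p))/2\le\Delta W_q^{\mathrm{cl}}(p)$.

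\textbf{Equal throughput.} By Proposition~\ref{prop:baseline-monotonicity}, throughput is $\mathcal{T}(p)$ for both strategies, since it depends only on the batch rate $\Lambda=\lambda(1+p)/2$. Thus the quantum strategy has strictly smaller waiting time at identical throughput, which is Pareto dominance at that operating point.

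\textbf{Strong reading.} Under strict convexity of $T$, $\mathcal{T}$ is injective in $p$. So any classical strategy matching the quantum throughput must have the same $p$, and the comparison above covers all classical competitors at that throughput.

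\textbf{Main obstacle.} The only delicate step is justifying that the queueing analysis of Appendix~\ref{sec:delta_threshold}, derived for oracle policies $r(x_1,x_2)$, applies verbatim to randomized and quantum local strategies. The argument needs three things: the per-pair routing decision depends only on $(x_1,x_2)$ and on fresh, independent randomness or fresh entangled pairs; it is independent across pairs; and it is load-balanced. Given these, the induced process is exactly the $M^X/G/1$ model with effective $r(x_1,x_2)=\Pr[o_Ao_B=-1\mid x_1,x_2]$. I would state this reduction explicitly before invoking Proposition~\ref{prop:game-queueing}.
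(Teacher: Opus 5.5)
Your proposal is correct and follows the paper's proof. Proposition~\ref{prop:game-queueing} turns $A_{\mathrm{qu}}(p) > A_{\mathrm{cl}}^*(p)$ into $\Delta W_q^{\mathrm{qu}}(p) < \Delta W_q^{\mathrm{cl}}(p)$, and Proposition~\ref{prop:baseline-monotonicity} gives identical throughput at identical $p$. Your extra steps (load balancing via the shared output flip, and the effective $r(x_1,x_2)=\Pr[o_Ao_B=-1\mid x_1,x_2]$ for randomized and quantum strategies) make explicit what the paper leaves implicit.

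One small point of scope. Here $\mathcal{P}$ is defined through the deterministic value $A_{\mathrm{cl}}^*$. The comparison therefore covers deterministic classical strategies and fixed-$p$ mixtures (Proposition~\ref{prop:fixed-p-mixture}). Dominance over arbitrary shared-randomness strategies needs $\mathcal{P}$ defined through $\mathrm{conc}(A_{\mathrm{cl}}^*)$, as in Theorem~\ref{thm:main}.
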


\begin{proof}
Fix $p \in \mathcal{P}$.
By Proposition~\ref{prop:game-queueing}, the excess waiting times satisfy
\begin{equation}
\Delta W_q^{\mathrm{qu}}(p) = \frac{A^* - A_{\mathrm{qu}}(p)}{2} < \frac{A^* - A_{\mathrm{cl}}^*(p)}{2} = \Delta W_q^{\mathrm{cl}}(p),
\end{equation}
where the inequality follows from $A_{\mathrm{qu}}(p) > A_{\mathrm{cl}}^*(p)$.

By Proposition~\ref{prop:baseline-monotonicity} in Appendix~\ref{sec:baseline_throughput}, baseline throughput depends only on the splitting probability $p$, not on which specific pairs are split.
Therefore both strategies achieve the same baseline throughput $\mathcal{T}(p)$.

Since quantum strategies achieve strictly lower waiting time at identical throughput for all $p \in \mathcal{P}$, they Pareto dominate classical strategies over this range.
\end{proof}

\begin{remark}
Theorem~\ref{thm:pareto-dominance} establishes that demonstrating quantum advantage in the non-local game immediately implies Pareto dominance in the waiting time--throughput trade-off.
It remains to characterize what classical strategies can achieve (Appendix~\ref{app:classical_strategies}) and compute classical and quantum performance numerically (Appendix~\ref{sec:numerical_methods}).
\end{remark}

\section{Classical Strategies}\label{app:classical_strategies}

This appendix characterizes what classical strategies can achieve in the routing game defined in Appendix~\ref{sec:game-mapping}.
We first show that optimal deterministic strategies take a threshold form, then establish the correct classical benchmark when shared randomness is allowed.

\subsection{Optimal Deterministic Strategies}
\label{sec:classical-threshold}

We show that optimal deterministic strategies take a simple threshold form, reducing the infinite-dimensional optimization to a two-dimensional problem.

\begin{definition}[Threshold strategy]
\label{def:threshold}
A strategy $f: \mathbb{R}_+ \to \{+1, -1\}$ is a \emph{threshold strategy} with threshold $\theta \geq 0$ if
\begin{equation}
f(x) = \begin{cases}
+1 & x < \theta, \\
-1 & x \geq \theta.
\end{cases}
\end{equation}
We write $f_\theta$ to denote the threshold strategy with threshold $\theta$.
\end{definition}

The main result of this section is the following.

\begin{theorem}[Optimal deterministic strategies are threshold strategies; restates Theorem~\ref{thm:classical-threshold-main}]
\label{thm:classical-threshold}
For the routing game at any splitting probability $p \in (0,1)$, there exists an optimal deterministic strategy $(o_A^*, o_B^*)$ where both $o_A^*$ and $o_B^*$ are threshold strategies.
More precisely, the optimal strategy takes the form of a threshold pair up to a global sign flip $(o_A, o_B) \mapsto (-o_A, -o_B)$, which preserves the payoff and splitting constraint; we may therefore always choose the standard orientation of Definition~\ref{def:threshold}.
\end{theorem}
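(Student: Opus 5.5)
The plan is to exploit the bilinear structure of $w$. For deterministic local strategies $f=o_A$, $g=o_B$ and independent $X_1,X_2\sim\mathrm{Exp}(\mu)$, expanding $w(x_1,x_2)=c_1x_1x_2+c_2(x_1+x_2)$ gives
\begin{equation}
\mathbb{E}[f(X_1)g(X_2)w(X_1,X_2)] = c_1\, m n + c_2\,(m b + a n),
\end{equation}
where $a=\mathbb{E}[f(X)]$, $m=\mathbb{E}[Xf(X)]$, $b=\mathbb{E}[g(X)]$ and $n=\mathbb{E}[Xg(X)]$. The splitting constraint becomes $\Pr[fg=-1]=(1-ab)/2=p$, i.e.\ $ab=1-2p$. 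Both the payoff $A$ and the constraint therefore depend on each player's strategy only through a two-dimensional moment vector: $(a,m)$ for player A and $(b,n)$ for player B.

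\textbf{Step 1 (best response to a fixed opponent).} Fix $g$, and hence $(b,n)$. Also fix $a$, which is forced by the constraint when $b\neq 0$; this is equivalent to fixing the probability mass $\Pr[f=-1]=(1-a)/2$. The payoff is then affine in $m$, namely $A=-m(c_1n+c_2b)-c_2an$. So player A wants $m$ as small as possible if $c_1n+c_2b>0$, and as large as possible if it is negative.

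\textbf{Step 2 (bathtub argument).} Among all measurable sets $S$ with $\Pr[X\in S]$ fixed, the quantity $m=\mathbb{E}[X]-2\mathbb{E}[X\mathbf{1}_S]$ is extremised when $S$ is an upper tail $[\theta,\infty)$ (minimal $m$) or a lower tail $[0,\theta)$ (maximal $m$). This is the same pointwise argument as in Theorem~\ref{thm:optimal_threshold}, with $x$ playing the role of $w$ and the mass constraint playing the role of the budget; continuity of the exponential distribution makes the tails exact and unique a.e.

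Consequently, replacing $f$ by the appropriate tail strategy weakly improves the payoff while keeping $a$, and hence the constraint, unchanged. The new $f$ is either $f_{\theta_A}$ or $-f_{\theta_A}$. Applying the same replacement to $g$ turns any optimal pair into one of the form $(\epsilon_A f_{\theta_A},\epsilon_B f_{\theta_B})$ with signs $\epsilon_A,\epsilon_B\in\{\pm1\}$ and the same payoff. Existence of an optimum follows because the achievable $(a,m)$ set is compact: it is the region between the two tail curves. The degenerate case $b=0$ (so $p=1/2$) is handled identically, since the constraint then holds for any $a$.

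\textbf{Step 3 (equal signs).} If $\epsilon_A=\epsilon_B$, the global flip $(o_A,o_B)\mapsto(-o_A,-o_B)$, which preserves $o_Ao_B$ and thus both the payoff and the constraint, yields the standard orientation. The main obstacle is the mixed-sign case $\epsilon_A\neq\epsilon_B$. There the product $o_Ao_B$ splits exactly when both inputs are large or both are small, which is not \emph{a priori} excluded.

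I would first try to rule it out through the best-response consistency of Step 1. For $g=-f_{\theta_B}$ one computes $n=\mathbb{E}[X\mathbf{1}\{X\ge\theta_B\}]-\mathbb{E}[X\mathbf{1}\{X<\theta_B\}]$ and $b=2e^{-\mu\theta_B}-1$. One then checks that the sign of $c_1n+c_2b$ forces player A's best response to have the \emph{same} orientation as $g$, so a mixed pair cannot be a mutual best response unless the coefficient vanishes, in which case either orientation is optimal and we may pick the matching one.

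If this sign check does not close in general, the fallback is a direct comparison. Parametrise mixed-orientation pairs by $(\theta_A,\theta_B)$ on the constraint curve $ab=1-2p$, and show that the standard pair with the same $(a,b)$ achieves a payoff at least as large. This uses the fact that $x\mapsto\mathbb{E}[X\mid X\ge\theta]$ exceeds the corresponding lower-tail conditional mean, and that both $c_1$ and $c_2$ are positive.
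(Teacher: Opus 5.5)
\textbf{Verdict: there is a genuine gap in Step 3, and the statement cannot be completed as posed for $p>1/2$.}

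Your Steps 1--2 are sound. They are essentially the paper's best-response lemma, with a bathtub argument in place of the Lagrangian. Your compactness remark also supplies the existence of an optimum, which the paper simply assumes. The gap is the mixed-orientation case in Step 3, and neither of your routes closes it.

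\textbf{Why both routes fail.} Put $h(\theta):=c_1D_1(\theta)+c_2D_0(\theta)$, where $D_0,D_1$ are the moments of $f_\theta$. Then $h$ is strictly increasing from $-(c_1+c_2)$ to $c_1+c_2$, with a root $\theta^\dagger$ ($\approx1.57$ for $\mu=1$, $\lambda=0.8$).
\begin{itemize}
\item Against $o_B=-f_{\theta_B}$, the Step-1 coefficient is $c_1n+c_2b=-h(\theta_B)$. So for $\theta_B<\theta^\dagger$, A's best response is the \emph{standard} tail, the opposite orientation to $o_B$.
\item Against $o_A=f_{\theta_A}$, B faces $c_1m+c_2a=h(\theta_A)$. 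So for $\theta_A<\theta^\dagger$, B's best response is the \emph{reversed} tail.
\item Hence every mixed pair with both thresholds below $\theta^\dagger$ is a mutual best response. Such pairs exist on the constraint curve for $p>e^{-\mu\theta^\dagger}\approx0.21$. Your sign check therefore fails.
\item The fallback fails for the same reason. At fixed $(a,m,b)$ the payoff is affine in $n$ with slope $-h(\theta_A)$. Replacing B's lower tail by the upper tail with the same $b$ therefore strictly \emph{lowers} $A$ whenever $h(\theta_A)<0$.
\item Positivity of $c_1,c_2$ does not fix the sign, because $a$ and $m$ can be negative.
\end{itemize}
The paper's own final line, which invokes the global flip, does not change relative orientation either. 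So you have correctly spotted a real hole; you just have not filled it.

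\textbf{The claim is false for $p>1/2$.} Take $\mu=1$, $\lambda=0.8$ (so $c_1=2$, $c_2=1/4$) and $p=0.9$.
\begin{itemize}
\item The mixed pair $(f_t,-f_t)$ with $2e^{-t}(1-e^{-t})=0.1$ ($t\approx0.054$) bunches only the two axis strips $\{x_1\ge t,\,x_2<t\}\cup\{x_1<t,\,x_2\ge t\}$. It has $A\approx2.435$.
\item Any standard pair at this $p$ needs $|D_0(\theta_A)|,|D_0(\theta_B)|\ge0.8$. A one-parameter scan of that curve shows the best one is the limiting one-sided pair $(f_{\ln(10/9)},+1)$, with $A\approx2.427$.
\item An expansion in $1-p$ gives a gap of about $(c_1-c_2)(1-p)^2/2$. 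So the failure persists near $p=1$ whenever $c_1>c_2$.
\end{itemize}

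\textbf{How to rescue $p\le1/2$.} The missing idea is to compare a mixed pair with a pair of \emph{different} $(a,b)$, namely the one-sided strategy.
\begin{enumerate}
\item After a global flip and a player swap, assume $o_A=-1$ exactly on $U=[\theta_A,\infty)$ and $o_B=-1$ exactly on $L=[0,\theta_B)$, with $q_A=\Pr[U]\le1/2$ and $q_B=\Pr[L]\le1/2$.
\item Then $p=q_A+q_B(1-2q_A)\ge q_A$. So there is an upper tail $U'\supseteq U$ of mass $p$, with left endpoint $\theta'$.
\item The pair with $o_A'=-1$ on $U'$ and $o_B'\equiv+1$ has the same $p$.
\item Let $\Sigma'$ be its split set and $\Sigma$ the mixed split set. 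Then $\Sigma'\setminus\Sigma=(U\times L)\cup((U'\setminus U)\times L^c)$ and $\Sigma\setminus\Sigma'=(U')^c\times L$, and these have equal mass.
\item On a product set the conditional mean of $w$ is $c_1m_1m_2+c_2(m_1+m_2)$, which is increasing in the coordinate means $m_1,m_2\ge0$.
\item Both pieces of $\Sigma'\setminus\Sigma$ have first-coordinate mean at least $\theta'>\mathbb{E}[X\mid (U')^c]$, and second-coordinate mean at least $\mathbb{E}[X\mid L]$.
\item Hence $\mathbb{E}[w;\Sigma']\ge\mathbb{E}[w;\Sigma]$. So the limiting standard pair with $\theta_B=\infty$ is at least as good as the mixed pair.
\end{enumerate}
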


The proof proceeds by showing that, for any fixed strategy of one player, the optimal response of the other player is a threshold strategy.

\begin{lemma}[Optimal response is a threshold]
\label{lem:optimal-response}
Fix a strategy $o_B: \mathbb{R}_+ \to \{+1, -1\}$ with $E_0 := \mathbb{E}[o_B(X)] \in (-1, 1)$, $E_0 \neq 0$, and $E_1 := \mathbb{E}[o_B(X) \cdot X]$.
Among all strategies $o_A$ satisfying the splitting constraint $\Pr[o_A(X_1) o_B(X_2) = -1] = p$, the payoff $A(o_A, o_B)$ is uniquely maximized by a threshold strategy.
The case $E_0 = 0$ is handled in Remark~\ref{rem:E0-zero}.
\end{lemma}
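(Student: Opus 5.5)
The plan is to condition on router~A's input and reduce the problem to a one-dimensional "bathtub" allocation with a linear objective. Let $X_2 \sim \mathrm{Exp}(\mu)$ be independent of $X_1$. For a fixed $x$, the conditional contribution of router~B is
\[
h(x) := \mathbb{E}[o_B(X_2)\, w(x,X_2)] = c_1 x E_1 + c_2 x E_0 + c_2 E_1 = a\,x + b,
\]
where $a = c_1 E_1 + c_2 E_0$ and $b = c_2 E_1$. The key point is that $w$ is bilinear-plus-linear, so $h$ is affine in $x$. Fubini then gives
\[
A(o_A,o_B) = -\mathbb{E}[o_A(X_1)\,h(X_1)] = -a\,\mathbb{E}[o_A(X_1)X_1] - b\,\mathbb{E}[o_A(X_1)].
\]

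Next I will rewrite the splitting constraint. By independence, $\Pr[o_A o_B = -1] = (1 - \mathbb{E}[o_A]\,E_0)/2$, so the constraint is equivalent to $\mathbb{E}[o_A(X_1)] = (1-2p)/E_0$. Here $E_0 \neq 0$ is used, and feasibility requires $|1-2p| \le |E_0|$. This fixes $q := \Pr[o_A(X_1) = -1] = \tfrac12\bigl(1 - (1-2p)/E_0\bigr)$. As a result the term $-b\,\mathbb{E}[o_A]$ is a constant. Writing $o_A = 1 - 2\cdot\mathbf{1}_D$ with $D = \{o_A = -1\}$, the problem becomes: maximize $a\,\mathbb{E}[X_1 \mathbf{1}_D]$ (up to an additive constant and a factor $2$) over measurable sets $D$ with $\Pr[X_1 \in D] = q$.

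For $a > 0$ I will apply the same exchange argument as in the proof of Theorem~\ref{thm:optimal_threshold}, with $x$ in place of $w$ and $\theta$ in place of $\tau_p$. Choose $\theta$ with $\Pr[X_1 \ge \theta] = q$; this is possible and unique because the exponential law is continuous with full support on $\mathbb{R}_+$. The pointwise inequality $(\mathbf{1}_{[\theta,\infty)} - \mathbf{1}_D)(x - \theta) \ge 0$ shows that $D = [\theta,\infty)$ is optimal, which is the threshold strategy $f_\theta$. For uniqueness, equality in the expectation forces $(\mathbf{1}_{[\theta,\infty)} - \mathbf{1}_D)(x-\theta) = 0$ almost everywhere. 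Since $\{x=\theta\}$ is null, $D$ agrees with $[\theta,\infty)$ almost surely, so the optimum is unique up to null sets. For $a < 0$ the same argument puts $D$ on the lowest values of $x$, giving $D = [0,\theta')$. This is a threshold strategy in the reversed orientation, which is absorbed by the global sign flip mentioned in Theorem~\ref{thm:classical-threshold}.

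The main obstacle is the degenerate case $a = c_1 E_1 + c_2 E_0 = 0$. There the objective is constant on the feasible set, so the threshold is optimal but not uniquely. I would handle this by stating uniqueness under $a \neq 0$ and noting that when $a = 0$ a threshold response is still among the maximizers. That weaker form suffices for the existence statement in Theorem~\ref{thm:classical-threshold}. Alternatively, one could show that this case cannot occur at an optimum by perturbing $o_B$. The feasibility condition on $q \in [0,1]$ also needs to be checked, but it is automatic whenever the constraint set is nonempty.
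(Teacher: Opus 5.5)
Your proof is correct and follows the paper's route. It uses the same affine effective weight $\tilde w(x)=(c_1E_1+c_2E_0)x+c_2E_1$ and the same reduction of the constraint to $\mathbb{E}[o_A]=(1-2p)/E_0$. Your exchange inequality is the paper's Lagrangian relaxation with the multiplier fixed explicitly at $\lambda=-\tilde w(\theta)$, and it also gives you the almost-sure uniqueness that the paper leaves implicit.

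Your treatment of the degenerate slope $a=0$ is more accurate than the paper's. There the payoff is constant on the feasible set, so a threshold is a maximizer but not the unique one. The constant response $o_A\equiv\pm1$ that the paper names violates the constraint whenever $(1-2p)/E_0\in(-1,1)$.
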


\begin{proof}
We first rewrite the payoff in terms of $o_A$ alone.
Using independence of $X_1$ and $X_2$:
\begin{align}
A(o_A, o_B) &= -\mathbb{E}[o_A(X_1) o_B(X_2) \cdot w(X_1, X_2)] \\
&= -\mathbb{E}_{X_1}\bigl[o_A(X_1) \cdot \mathbb{E}_{X_2}[o_B(X_2) \cdot w(X_1, X_2)]\bigr].
\end{align}

Computing the inner expectation with $w(x_1, x_2) = c_1 x_1 x_2 + c_2(x_1 + x_2)$:
\begin{align}
\mathbb{E}_{X_2}[o_B(X_2) \cdot w(x_1, X_2)] &= c_1 x_1 \mathbb{E}[o_B(X) \cdot X] + c_2 x_1 \mathbb{E}[o_B(X)] + c_2 \mathbb{E}[o_B(X) \cdot X] \\
&= (c_1 E_1 + c_2 E_0)\, x_1 + c_2 E_1.
\end{align}

Define the effective weight function $\tilde{w}: \mathbb{R}_+ \to \mathbb{R}$ by
\begin{equation}
\tilde{w}(x) := (c_1 E_1 + c_2 E_0)\, x + c_2 E_1.
\end{equation}
Then the payoff becomes
\begin{equation}
A(o_A, o_B) = -\mathbb{E}[o_A(X) \cdot \tilde{w}(X)].
\end{equation}

The splitting constraint $\Pr[o_A(X_1) o_B(X_2) = -1] = p$ can be rewritten using independence:
\begin{equation}
p = \Pr[o_A o_B = -1] = \frac{1 - \mathbb{E}[o_A(X)]\mathbb{E}[o_B(X)]}{2} = \frac{1 - D_0 E_0}{2},
\end{equation}
where $D_0 := \mathbb{E}[o_A(X)]$.
This yields the constraint
\begin{equation}
D_0 = \frac{1 - 2p}{E_0}.
\end{equation}

The optimization over $o_A$ is therefore:
\begin{equation}
\begin{aligned}
\text{maximize} \quad & -\mathbb{E}[o_A(X) \cdot \tilde{w}(X)] \\
\text{subject to} \quad & \mathbb{E}[o_A(X)] = D_0^{\mathrm{target}} := \frac{1 - 2p}{E_0}, \\
& o_A(x) \in \{+1, -1\} \text{ for all } x \geq 0.
\end{aligned}
\end{equation}

This is a linear functional optimization with a linear constraint.
We solve it via Lagrangian relaxation.
The Lagrangian is
\begin{equation}
\mathcal{L}(o_A, \lambda) = -\mathbb{E}[o_A(X) \cdot \tilde{w}(X)] - \lambda\bigl(\mathbb{E}[o_A(X)] - D_0^{\mathrm{target}}\bigr) = -\mathbb{E}[o_A(X) \cdot (\tilde{w}(X) + \lambda)] + \lambda D_0^{\mathrm{target}}.
\end{equation}

For any fixed $\lambda$, the Lagrangian is maximized pointwise by choosing
\begin{equation}
o_A(x) = \begin{cases}
+1 & \text{if } \tilde{w}(x) + \lambda < 0, \\
-1 & \text{if } \tilde{w}(x) + \lambda > 0.
\end{cases}
\end{equation}

Since $\tilde{w}(x) = (c_1 E_1 + c_2 E_0)\, x + c_2 E_1$ is affine in $x$, the condition $\tilde{w}(x) + \lambda \lessgtr 0$ defines a half-line whenever the slope $c_1 E_1 + c_2 E_0$ is nonzero.

If $c_1 E_1 + c_2 E_0 > 0$, then $\tilde{w}$ is strictly increasing, so $\tilde{w}(x) + \lambda < 0$ if and only if $x < \theta$ for some threshold $\theta$.
The optimal response satisfies $o_A(x) = +1$ for $x < \theta$ and $o_A(x) = -1$ for $x \geq \theta$.

If $c_1 E_1 + c_2 E_0 < 0$, then $\tilde{w}$ is strictly decreasing, so $\tilde{w}(x) + \lambda < 0$ if and only if $x > \theta$ for some threshold $\theta$.
The optimal response satisfies $o_A(x) = -1$ for $x < \theta$ and $o_A(x) = +1$ for $x \geq \theta$, which is the negation of a threshold strategy: $o_A = -f_\theta$.

In both cases, the optimal $o_A$ is a threshold strategy up to a global sign flip.
Since the game payoff depends on the product $o_A o_B$, replacing $(o_A, o_B)$ with $(-o_A, -o_B)$ leaves the payoff and splitting constraint unchanged.
Thus any optimum with reversed-orientation strategies corresponds to an equivalent optimum with standard-orientation strategies.

In the degenerate case $c_1 E_1 + c_2 E_0 = 0$, the effective weight is constant and the optimal response is $o_A \equiv \pm 1$, which can be viewed as a limiting threshold strategy with $\theta \in \{0, \infty\}$.
For the numerical optimization, we verify that optimal threshold pairs $(\theta_A, \theta_B)$ lie in the interior of the feasible region where the non-degeneracy condition holds.

The threshold $\theta$ is determined by the constraint $\mathbb{E}[o_A(X)] = D_0^{\mathrm{target}}$; since $\mathbb{E}[f_\theta(X)]$ is continuous and strictly monotonic in $\theta$, a unique such $\theta$ exists for any feasible $D_0^{\mathrm{target}} \in (-1, 1)$.
\end{proof}

\begin{remark}[The case $E_0 = 0$]
\label{rem:E0-zero}
When $E_0 = 0$, the splitting constraint $D_0 E_0 = 1 - 2p$ forces $p = 1/2$, and the constraint on $D_0 = \mathbb{E}[o_A(X)]$ becomes vacuous.
The optimization over $o_A$ reduces to the unconstrained pointwise problem $\max_{o_A} -\mathbb{E}[o_A(X) \cdot \tilde{w}(X)]$, solved by $o_A(x) = -\operatorname{sign}(\tilde{w}(x))$.
With $E_0 = 0$, the effective weight simplifies to $\tilde{w}(x) = E_1(c_1 x + c_2)$, which has constant sign on $\mathbb{R}_+$ (since $c_1, c_2 > 0$).
The optimal response is therefore a constant function $o_A \equiv -\operatorname{sign}(E_1)$, corresponding to a limiting threshold strategy with $\theta = 0$ or $\theta = \infty$.
In the fully degenerate case $E_0 = E_1 = 0$, $\tilde{w} \equiv 0$ and any strategy $o_A$ is optimal; a threshold strategy may again be chosen.
Thus the conclusion of Lemma~\ref{lem:optimal-response} extends to $E_0 = 0$.
\end{remark}

\begin{proof}[Proof of Theorem~\ref{thm:classical-threshold}]
Suppose $(o_A^*, o_B^*)$ is a globally optimal deterministic strategy.
We show both $o_A^*$ and $o_B^*$ must be threshold strategies (up to global sign flip).

Suppose $o_A^*$ is not a threshold strategy (in either orientation).
By Lemma~\ref{lem:optimal-response} (or Remark~\ref{rem:E0-zero} when $E_0 = 0$), there exists a threshold strategy $\tilde{o}_A$ (up to sign flip) such that $A(\tilde{o}_A, o_B^*) > A(o_A^*, o_B^*)$, contradicting global optimality of $(o_A^*, o_B^*)$.
Therefore $o_A^*$ is a threshold strategy up to sign flip.

By an identical argument with the roles of $o_A$ and $o_B$ exchanged (the problem is symmetric under player exchange), $o_B^*$ is also a threshold strategy up to sign flip.
By the $(o_A, o_B) \mapsto (-o_A, -o_B)$ symmetry, the pair can always be taken to have standard orientation.
\end{proof}

\begin{remark}[Reduction to finite-dimensional optimization]
Theorem~\ref{thm:classical-threshold} reduces the search for optimal deterministic strategies from an infinite-dimensional space of function pairs $(o_A, o_B)$ to a two-dimensional optimization over threshold pairs $(\theta_A, \theta_B) \in \mathbb{R}_+^2$.
The explicit parameterization and certified computation procedure are given in Appendix~\ref{sec:numerical_methods}.
\end{remark}

\subsection{Shared Randomness and the Classical Benchmark}
\label{sec:shared-randomness}

We now consider classical strategies augmented with shared randomness.
A shared-randomness strategy uses a common random variable $\kappa$ (independent of the inputs $X_1, X_2$) to select a deterministic local strategy $(o_{A,\kappa}, o_{B,\kappa})$ for each arriving pair.
The resulting payoff is $\mathbb{E}_\kappa[A(o_{A,\kappa}, o_{B,\kappa})]$, and the splitting constraint requires $\mathbb{E}_\kappa[p_\kappa] = p$, where $p_\kappa = \Pr[o_{A,\kappa}(X_1) o_{B,\kappa}(X_2) = -1]$.

\begin{proposition}[Fixed-$p$ mixtures]
\label{prop:fixed-p-mixture}
Since the payoff $A(o_A, o_B)$ is linear in the strategy, shared randomness provides no advantage at fixed splitting probability: any convex mixture of deterministic strategies all achieving splitting probability $p$ has payoff at most $A_{\mathrm{cl}}^*(p)$.
\end{proposition}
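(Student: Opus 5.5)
The argument is a one-line averaging bound: the payoff of a mixture is the average of the payoffs of its components, and each component is bounded by $A_{\mathrm{cl}}^*(p)$.

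Fix $p$ and a shared-randomness strategy in which the common variable $\kappa$, with distribution $\nu$ independent of $(X_1,X_2)$, selects a deterministic pair $(o_{A,\kappa}, o_{B,\kappa})$. Assume every pair in the support satisfies $p_\kappa = p$, where $p_\kappa = \Pr[o_{A,\kappa}(X_1) o_{B,\kappa}(X_2) = -1]$.

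\textbf{Step 1: the payoff is affine in the mixing distribution.} Since $\kappa$ is independent of the inputs, conditioning on $\kappa$ gives
\begin{equation}
A_{\mathrm{mix}} = -\mathbb{E}_{\kappa}\bigl[\mathbb{E}_{X_1,X_2}[o_{A,\kappa}(X_1)o_{B,\kappa}(X_2)\, w(X_1,X_2)]\bigr] = \mathbb{E}_\kappa\bigl[A(o_{A,\kappa},o_{B,\kappa})\bigr].
\end{equation}
Measurability is not an issue here.
\begin{itemize}
\item The integrand is bounded in absolute value by $w(X_1,X_2)$.
\item $w \in L^1$, as noted in Appendix~\ref{sec:optimal_policy}.
\item Hence Fubini applies and the expectations may be exchanged.
\end{itemize}
The same conditioning shows that the induced splitting probability is $\mathbb{E}_\kappa[p_\kappa] = p$. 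This confirms that the mixture is itself a feasible strategy at splitting probability $p$.

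\textbf{Step 2: bound each component.} Every $(o_{A,\kappa}, o_{B,\kappa})$ in the support is a deterministic local strategy with splitting probability exactly $p$. By definition of $A_{\mathrm{cl}}^*(p)$ as the maximum (supremum) over such strategies, we have $A(o_{A,\kappa},o_{B,\kappa}) \le A_{\mathrm{cl}}^*(p)$ for $\nu$-almost every $\kappa$. Taking expectations in $\kappa$ gives $A_{\mathrm{mix}} \le A_{\mathrm{cl}}^*(p)$, which proves the claim.

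\textbf{Main obstacle: the hypothesis, not the inequality.} The mixture must be restricted to components that each achieve splitting probability $p$. The general shared-randomness constraint only requires $\mathbb{E}_\kappa[p_\kappa]=p$. In that case Step 2 only gives $A(o_{A,\kappa},o_{B,\kappa}) \le A_{\mathrm{cl}}^*(p_\kappa)$, and averaging yields a bound by the concave envelope $\mathrm{conc}(A_{\mathrm{cl}}^*)(p)$ rather than by $A_{\mathrm{cl}}^*(p)$. I would state this explicitly after the proof to connect with the benchmark $A_{\mathrm{cl,SR}}^*$ from the main text.

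A minor point to check is whether $A_{\mathrm{cl}}^*(p)$ is attained. It is a supremum in any case. Theorem~\ref{thm:classical-threshold} together with continuity of the threshold-pair payoff on the constraint curve gives attainment, but the argument above does not need it.
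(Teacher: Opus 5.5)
Your proof is correct and uses the same averaging argument as the paper: the mixture's payoff is the weighted average of the component payoffs, and each component payoff is at most $A_{\mathrm{cl}}^*(p)$. The paper states this only for finite mixtures, $\sum_i q_i A(o_{A,i},o_{B,i}) \le \max_i A(o_{A,i},o_{B,i})$, so your Fubini step is a harmless extension to general mixing distributions $\nu$; it implicitly assumes that $(\kappa,x)\mapsto o_{A,\kappa}(x)$ is jointly measurable, which is part of what defines a shared-randomness strategy, rather than being automatic.
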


\begin{proof}
Let $(o_{A,i}, o_{B,i})$ for $i = 1, \ldots, n$ be deterministic strategies each achieving splitting probability $p$, and let $q_i \geq 0$ with $\sum_i q_i = 1$.
The mixture achieves payoff
\begin{equation}
A_{\mathrm{SR}} = \sum_i q_i A(o_{A,i}, o_{B,i}) \leq \max_i A(o_{A,i}, o_{B,i}) \leq A_{\mathrm{cl}}^*(p). \qedhere
\end{equation}
\end{proof}

However, shared randomness \emph{can} improve performance by mixing strategies at \emph{different} splitting probabilities $p_1, \ldots, p_n$ with $\sum_i q_i p_i = p$.
This motivates the following definition.

\begin{definition}[Shared-randomness classical value]
The shared-randomness classical value at splitting probability $p$ is
\begin{equation}
A_{\mathrm{cl,SR}}^*(p) := \sup \bigl\{ \mathbb{E}_\kappa[A(o_{A,\kappa}, o_{B,\kappa})] : \mathbb{E}_\kappa[p_\kappa] = p \bigr\},
\end{equation}
the supremum over all shared-randomness strategies with mean splitting probability $p$.
\end{definition}

\begin{proposition}[Concavity]
\label{prop:concavity}
The shared-randomness classical value $A_{\mathrm{cl,SR}}^*(p)$ is concave in $p$.
\end{proposition}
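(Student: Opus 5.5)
The plan is to prove concavity directly from the definition of $A_{\mathrm{cl,SR}}^*$ as a supremum over mixtures: a mixture of two mixtures is again a mixture, so the value can only grow under convex combination. Fix $p_1, p_2 \in [0,1]$ and $t \in [0,1]$, and set $p = t p_1 + (1-t) p_2$. We will show
\begin{equation}
A_{\mathrm{cl,SR}}^*(p) \geq t\, A_{\mathrm{cl,SR}}^*(p_1) + (1-t)\, A_{\mathrm{cl,SR}}^*(p_2).
\end{equation}

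The first step is to pick near-optimal strategies at the endpoints. Given $\epsilon > 0$, choose shared-randomness strategies $\kappa_1$ and $\kappa_2$ with mean splitting probabilities $p_1$ and $p_2$ respectively. Their payoffs should satisfy $\mathbb{E}_{\kappa_i}[A(o_{A,\kappa_i}, o_{B,\kappa_i})] \geq A_{\mathrm{cl,SR}}^*(p_i) - \epsilon$. Such strategies exist by the definition of the supremum. We must first check that each supremum is taken over a nonempty set, which holds because threshold pairs realize every splitting probability. We should also check that it is finite, which holds because $|A| \leq \mathbb{E}[w] < \infty$ since $w \in L^1$.

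The second step is to build the combined strategy. Introduce an independent shared coin $\eta$ with $\Pr[\eta = 1] = t$. If $\eta = 1$ the players run $\kappa_1$, and otherwise they run $\kappa_2$. The composite variable $\kappa = (\eta, \kappa_\eta)$ is again shared randomness independent of $(X_1, X_2)$, so this is a legitimate shared-randomness strategy.

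The third step is to compute its two relevant quantities using linearity of expectation over $\kappa$ together with the tower property. The mean splitting probability is $t p_1 + (1-t) p_2 = p$, so the strategy is feasible at $p$. Its payoff is at least $t A_{\mathrm{cl,SR}}^*(p_1) + (1-t) A_{\mathrm{cl,SR}}^*(p_2) - \epsilon$. Taking the supremum at $p$ and letting $\epsilon \to 0$ yields the displayed inequality.

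The only potential obstacle is measure-theoretic. If $\kappa$ ranges over general probability spaces, we need the composite $\kappa$ to define a measurable selection of deterministic strategies. This is immediate if one works with product probability spaces $\{0,1\} \times \Omega_1 \times \Omega_2$, so I would state that convention explicitly.

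As a remark afterwards, the same argument applied to deterministic strategies shows that $A_{\mathrm{cl,SR}}^*$ dominates the concave envelope $\mathrm{conc}(A_{\mathrm{cl}}^*)$. The reverse inequality uses Proposition~\ref{prop:fixed-p-mixture} together with Carathéodory-type reasoning. This gives the identification claimed in the main text, but it is not needed for concavity itself.
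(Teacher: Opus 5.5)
Your proof is correct and is essentially the paper's argument: you pick $\epsilon$-optimal shared-randomness strategies at $p_1$ and $p_2$, select between them with an independent Bernoulli coin, use linearity in $\kappa$ to get mean splitting probability $p$ and payoff at least the convex combination minus $\epsilon$, and let $\epsilon\to 0$. The extra checks you add are left implicit in the paper and are harmless: finiteness via $|A|\le\mathbb{E}[w]$, nonemptiness, and measurability of the composite randomness. The only nit is that at $p=1$ nonemptiness needs the constant pair $o_A\equiv -1$, $o_B\equiv +1$ rather than a standard-orientation threshold pair with finite thresholds.
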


\begin{proof}
Fix $p_1, p_2 \in [0,1]$, $\alpha \in (0,1)$, and let $p = \alpha p_1 + (1-\alpha) p_2$.
For any $\epsilon > 0$, let $S_1, S_2$ be strategies achieving payoffs within $\epsilon$ of $A_{\mathrm{cl,SR}}^*(p_1)$ and $A_{\mathrm{cl,SR}}^*(p_2)$ respectively.
Consider the strategy that selects $S_1$ with probability $\alpha$ and $S_2$ with probability $1-\alpha$ via an independent Bernoulli$(\alpha)$ coin flip.
This strategy has mean splitting probability $\alpha p_1 + (1-\alpha) p_2 = p$ and achieves expected payoff at least $\alpha (A_{\mathrm{cl,SR}}^*(p_1) - \epsilon) + (1-\alpha)(A_{\mathrm{cl,SR}}^*(p_2) - \epsilon)$.
Taking $\epsilon \to 0$ yields $A_{\mathrm{cl,SR}}^*(p) \geq \alpha A_{\mathrm{cl,SR}}^*(p_1) + (1-\alpha) A_{\mathrm{cl,SR}}^*(p_2)$.
\end{proof}

The key result is that $A_{\mathrm{cl,SR}}^*(p)$ equals the concave envelope of the deterministic value.

\begin{theorem}[Concavification]
\label{thm:concavification}
The shared-randomness classical value equals the concave envelope of the deterministic classical value:
\begin{equation}
A_{\mathrm{cl,SR}}^*(p) = \mathrm{conc}(A_{\mathrm{cl}}^*)(p),
\end{equation}
where $\mathrm{conc}(f)$ denotes the upper concave envelope of $f$.
\end{theorem}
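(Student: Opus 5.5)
We prove the two inequalities $A_{\mathrm{cl,SR}}^* \ge \mathrm{conc}(A_{\mathrm{cl}}^*)$ and $A_{\mathrm{cl,SR}}^* \le \mathrm{conc}(A_{\mathrm{cl}}^*)$ separately. Throughout, $\mathrm{conc}(f)$ is the pointwise infimum of concave functions dominating $f$ on $[0,1]$. Equivalently, it is the supremum of $\sum_i q_i f(p_i)$ over finite convex combinations with $\sum_i q_i p_i = p$. We use both descriptions. Here $A_{\mathrm{cl}}^*(p)$ denotes the best deterministic payoff at splitting probability $p$, which is finite since $|A|\le \mathbb{E}[w]<\infty$.

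\emph{Lower bound.} We first show $A_{\mathrm{cl,SR}}^* \ge A_{\mathrm{cl}}^*$. A deterministic strategy is the special case of a shared-randomness strategy with degenerate $\kappa$, so $A_{\mathrm{cl,SR}}^*(p)\ge A_{\mathrm{cl}}^*(p)$ for every $p$. By Proposition~\ref{prop:concavity}, $A_{\mathrm{cl,SR}}^*$ is concave. It is therefore a concave majorant of $A_{\mathrm{cl}}^*$, and hence dominates the smallest such majorant, $\mathrm{conc}(A_{\mathrm{cl}}^*)$.

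One could also argue directly. Take any finite mixture with weights $q_i$ of near-optimal deterministic strategies at splitting probabilities $p_i$ satisfying $\sum_i q_i p_i=p$. Realizing this mixture with a shared random index shows $A_{\mathrm{cl,SR}}^*(p)\ge\sum_i q_i A_{\mathrm{cl}}^*(p_i)-\epsilon$.

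\emph{Upper bound.} Fix a shared-randomness strategy with mean splitting probability $p$. For each $\kappa$, the deterministic pair $(o_{A,\kappa},o_{B,\kappa})$ has splitting probability $p_\kappa$, so by definition $A(o_{A,\kappa},o_{B,\kappa})\le A_{\mathrm{cl}}^*(p_\kappa)\le \mathrm{conc}(A_{\mathrm{cl}}^*)(p_\kappa)$. Taking expectations over $\kappa$ and applying Jensen's inequality to the concave function $\mathrm{conc}(A_{\mathrm{cl}}^*)$ gives
\begin{equation}
\mathbb{E}_\kappa[A(o_{A,\kappa},o_{B,\kappa})]\le \mathbb{E}_\kappa[\mathrm{conc}(A_{\mathrm{cl}}^*)(p_\kappa)]\le \mathrm{conc}(A_{\mathrm{cl}}^*)(\mathbb{E}_\kappa[p_\kappa])=\mathrm{conc}(A_{\mathrm{cl}}^*)(p).
\end{equation}
Taking the supremum over strategies yields $A_{\mathrm{cl,SR}}^*\le\mathrm{conc}(A_{\mathrm{cl}}^*)$.

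\emph{Main obstacle: measurability in the Jensen step.} When $\kappa$ is a general random variable rather than finitely supported, we need the following facts.
\begin{itemize}
\item The map $\kappa\mapsto p_\kappa$ is measurable.
\item The map $\kappa\mapsto A(o_{A,\kappa},o_{B,\kappa})$ is measurable and integrable.
\item Jensen's inequality applies to $\mathrm{conc}(A_{\mathrm{cl}}^*)$ on $[0,1]$.
\end{itemize}
For the first two, we assume the strategy is specified by jointly measurable maps $(\kappa,x)\mapsto o_{\cdot,\kappa}(x)$. Fubini then gives measurability of both $p_\kappa$ and the payoff, and integrability follows from the bound $|A|\le\mathbb{E}[w]$.

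For Jensen's inequality, $\mathrm{conc}(A_{\mathrm{cl}}^*)$ is concave and bounded on the compact interval $[0,1]$. It may fail to be upper semicontinuous only at the endpoints. To handle this, we instead dominate it by an affine majorant $\ell$ that is tight at $p$: such an $\ell$ exists at interior points by concavity. This gives $\mathbb{E}[\mathrm{conc}(A_{\mathrm{cl}}^*)(p_\kappa)]\le\mathbb{E}[\ell(p_\kappa)]=\ell(p)$ without any continuity assumption. The endpoints $p\in\{0,1\}$ are trivial: there $p_\kappa$ must equal $p$ almost surely, so the bound reduces to the deterministic one.
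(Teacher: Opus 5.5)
Your proposal is correct and follows the paper's proof. The lower bound is identical: concavity of $A_{\mathrm{cl,SR}}^*$ plus domination of $A_{\mathrm{cl}}^*$. Your upper bound is the paper's convex-combination step, phrased via Jensen.

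Your supporting-line treatment of non-finitely-supported $\kappa$ is extra care the paper omits, since it tacitly treats finite mixtures. That argument needs no continuity or measurability of $\mathrm{conc}(A_{\mathrm{cl}}^*)$: you can bound $A(o_{A,\kappa},o_{B,\kappa}) \le \ell(p_\kappa)$ pointwise and integrate directly.
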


\begin{proof}
\emph{Upper bound.}
Any shared-randomness strategy is a convex mixture of deterministic strategies.
Each deterministic component with splitting probability $p_i$ achieves payoff at most $A_{\mathrm{cl}}^*(p_i)$.
The mixture therefore achieves payoff at most the corresponding convex combination of $A_{\mathrm{cl}}^*(p_i)$ values, which lies on or below the concave envelope.

\emph{Lower bound.}
By Proposition~\ref{prop:concavity}, $A_{\mathrm{cl,SR}}^*$ is concave.
Since deterministic strategies are a special case of shared-randomness strategies, $A_{\mathrm{cl,SR}}^*(p) \geq A_{\mathrm{cl}}^*(p)$ for all $p$.
A concave function that dominates $A_{\mathrm{cl}}^*$ pointwise must dominate $\mathrm{conc}(A_{\mathrm{cl}}^*)$.
\end{proof}

\begin{remark}[The correct classical benchmark]
\label{rem:classical-benchmark}
For certifying quantum advantage, the correct classical benchmark is $A_{\mathrm{cl,SR}}^*(p)$, not $A_{\mathrm{cl}}^*(p)$.
If $A_{\mathrm{cl}}^*(p)$ happens to be concave (so the concave envelope coincides with the original function), then shared randomness provides no advantage at any splitting probability.
\end{remark}

This completes the characterization of classical strategies.
The numerical methods for computing $A_{\mathrm{cl}}^*(p)$ and its concave envelope are presented in Appendix~\ref{sec:numerical_methods}.

\section{Numerical Methods}
\label{sec:numerical_methods}

This appendix describes the numerical methods used to compute the waiting time--throughput trade-off curves for classical and quantum strategies presented in the main text.

\subsection{Overview}

To construct the Pareto frontier comparison between classical and quantum strategies, we compute three quantities at each splitting probability $p \in (0, 1/2)$:
\begin{enumerate}
    \item The $w$-threshold payoff $A^*(p)$, achieved by the optimal policy with full access to both service times;
    \item The optimal classical payoff $A_{\mathrm{cl}}^*(p)$, achieved by the best local threshold strategy;
    \item The quantum payoff $A_{\mathrm{qu}}(p)$, achieved by a numerically optimized quantum strategy.
\end{enumerate}
The waiting time gap relative to the $w$-threshold policy is then obtained via the game--queueing correspondence (Proposition~\ref{prop:game-queueing}):
\begin{equation}
    \Delta W_q = \frac{A^*(p) - A(p)}{2},
\end{equation}
where $A(p)$ is the payoff achieved by either the classical or quantum strategy.

\subsection{Classical Upper Bound via Threshold Strategies}

By Theorem~\ref{thm:classical-threshold}, optimal deterministic classical strategies take the form of threshold strategies.
This section provides the explicit parameterization and certified computation procedure.

\subsubsection{Explicit Parameterization}

For $X \sim \mathrm{Exp}(\mu)$ and threshold strategy $f_\theta$, the relevant moments are:
\begin{align}
D_0(\theta) &:= \mathbb{E}[f_\theta(X)] = \Pr[X < \theta] - \Pr[X \geq \theta] = 1 - 2e^{-\mu\theta}, \\
D_1(\theta) &:= \mathbb{E}[f_\theta(X) \cdot X] = \int_0^\theta x \mu e^{-\mu x}\, dx - \int_\theta^\infty x \mu e^{-\mu x}\, dx.
\end{align}

Evaluating the integrals:
\begin{align}
\int_0^\theta x \mu e^{-\mu x}\, dx &= \frac{1}{\mu}\bigl(1 - e^{-\mu\theta}(1 + \mu\theta)\bigr), \\
\int_\theta^\infty x \mu e^{-\mu x}\, dx &= \frac{1}{\mu}e^{-\mu\theta}(1 + \mu\theta).
\end{align}

Therefore:
\begin{equation}
D_1(\theta) = \frac{1}{\mu}\bigl(1 - 2e^{-\mu\theta}(1 + \mu\theta)\bigr) = \frac{1}{\mu}D_0(\theta) - 2\theta e^{-\mu\theta}.
\end{equation}

For threshold strategies with thresholds $\theta_A$ and $\theta_B$, the game payoff becomes:
\begin{equation}
\label{eq:payoff-thresholds}
A(\theta_A, \theta_B) = -c_1 D_1(\theta_A) D_1(\theta_B) - c_2\bigl(D_1(\theta_A) D_0(\theta_B) + D_0(\theta_A) D_1(\theta_B)\bigr),
\end{equation}
and the splitting constraint becomes:
\begin{equation}
\label{eq:constraint-thresholds}
D_0(\theta_A) \cdot D_0(\theta_B) = 1 - 2p.
\end{equation}

The optimal classical value at splitting probability $p$ is:
\begin{equation}
A_{\mathrm{cl}}^*(p) = \max_{\substack{\theta_A, \theta_B \geq 0 \\ D_0(\theta_A) D_0(\theta_B) = 1 - 2p}} A(\theta_A, \theta_B).
\end{equation}

\subsubsection{Reduction to One-Dimensional Optimization}

The constraint $D_0(\theta_A) \cdot D_0(\theta_B) = 1 - 2p$ defines a one-dimensional curve in $(\theta_A, \theta_B)$ space.
We parameterize this curve by $\theta_A$ and solve for $\theta_B$.

Recall that $D_0(\theta) = 1 - 2e^{-\mu\theta}$ is strictly increasing in $\theta$, with $D_0(0) = -1$ and $\lim_{\theta \to \infty} D_0(\theta) = 1$.
Given $\theta_A$, the constraint requires
\begin{equation}
D_0(\theta_B) = \frac{1 - 2p}{D_0(\theta_A)}.
\end{equation}
For $\theta_B$ to exist (i.e., for the right-hand side to lie in $(-1, 1)$), we need
\begin{equation}
-1 < \frac{1 - 2p}{D_0(\theta_A)} < 1.
\end{equation}

For $p \in (0, 1/2)$, we have $1 - 2p > 0$.
The constraint $\frac{1-2p}{D_0(\theta_A)} < 1$ requires $D_0(\theta_A) > 1 - 2p$, i.e., $\theta_A > \theta_{\min}$ where
\begin{equation}
\theta_{\min} = -\frac{\ln p}{\mu}.
\end{equation}
The constraint $\frac{1-2p}{D_0(\theta_A)} > -1$ imposes no additional restriction beyond $\theta_A > \theta_{\min}$: once $D_0(\theta_A) > 1-2p > 0$, the lower bound is automatically satisfied since $\frac{1-2p}{D_0(\theta_A)} > 0 > -1$.

For $p \in (1/2, 1)$, we have $1 - 2p < 0$, and a symmetric analysis applies with the roles of upper and lower bounds exchanged.

Henceforth, we focus on $p \in (0, 1/2)$; the case $p > 1/2$ is analogous.
The feasible range is $\theta_A \in (\theta_{\min}, \infty)$, and by symmetry of the problem, we may restrict to $\theta_A \in (\theta_{\min}, \theta_{\mathrm{sym}}]$ where $\theta_{\mathrm{sym}}$ is the symmetric point satisfying $\theta_A = \theta_B$.

Given feasible $\theta_A$, we compute
\begin{equation}
\theta_B(\theta_A) = -\frac{1}{\mu} \ln\left(\frac{1}{2}\left(1 - \frac{1 - 2p}{D_0(\theta_A)}\right)\right).
\end{equation}

Define the reduced objective function $\tilde{A}: (\theta_{\min}, \infty) \to \mathbb{R}$ by
\begin{equation}
\tilde{A}(\theta_A) := A(\theta_A, \theta_B(\theta_A)).
\end{equation}

The optimal classical value is
\begin{equation}
A_{\mathrm{cl}}^*(p) = \sup_{\theta_A > \theta_{\min}} \tilde{A}(\theta_A).
\end{equation}

\subsubsection{Certified Grid Search}

To compute $A_{\mathrm{cl}}^*(p)$ with certified error bounds, we discretize the feasible interval and bound the error due to discretization.

Fix a compact search interval $[\theta_{\min} + \epsilon, \theta_{\max}]$ for some small $\epsilon > 0$ and large $\theta_{\max}$.
We separately bound the contribution from the excluded regions $(\theta_{\min}, \theta_{\min} + \epsilon)$ and $(\theta_{\max}, \infty)$.

\begin{definition}[Grid approximation]
For grid spacing $\delta > 0$, define the grid points
\begin{equation}
\theta_A^{(k)} = \theta_{\min} + \epsilon + k\delta, \quad k = 0, 1, \ldots, K,
\end{equation}
where $K = \lfloor (\theta_{\max} - \theta_{\min} - \epsilon) / \delta \rfloor$.
The grid approximation to the optimal value is
\begin{equation}
A_{\mathrm{grid}} := \max_{k \in \{0, \ldots, K\}} \tilde{A}(\theta_A^{(k)}).
\end{equation}
\end{definition}

\paragraph{Lipschitz Bound.}

To control the discretization error, we bound the derivative of $\tilde{A}$.

\begin{lemma}[Lipschitz bound]
\label{lem:lipschitz}
On any compact interval $[\theta_a, \theta_b] \subset (\theta_{\min}, \infty)$, the reduced objective $\tilde{A}$ is Lipschitz continuous with constant
\begin{equation}
L = \sup_{\theta_A \in [\theta_a, \theta_b]} \left| \frac{d\tilde{A}}{d\theta_A} \right|.
\end{equation}
\end{lemma}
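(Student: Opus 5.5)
The plan is to show that $\tilde{A}$ is continuously differentiable on $(\theta_{\min},\infty)$. Once that is established, the claim is the mean value theorem on the compact interval $[\theta_a,\theta_b]$. For any $\theta,\theta'\in[\theta_a,\theta_b]$ there is $\xi$ between them with $\tilde{A}(\theta)-\tilde{A}(\theta')=\tilde{A}'(\xi)(\theta-\theta')$, so $|\tilde{A}(\theta)-\tilde{A}(\theta')|\le L|\theta-\theta'|$. The remaining point is that $L<\infty$. This follows because a continuous function on a compact set attains a finite supremum, so I need continuity of $\tilde{A}'$, not just existence.

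For smoothness I would write $\tilde{A}(\theta_A)=A(\theta_A,\theta_B(\theta_A))$, using the explicit form \eqref{eq:payoff-thresholds}. Here $D_0(\theta)=1-2e^{-\mu\theta}$ and $D_1(\theta)=\frac{1}{\mu}D_0(\theta)-2\theta e^{-\mu\theta}$ are entire functions of $\theta$. The payoff $A$ is then a polynomial in $D_0(\theta_A)$, $D_1(\theta_A)$, $D_0(\theta_B)$ and $D_1(\theta_B)$, hence $C^\infty$ in $(\theta_A,\theta_B)$. It remains to show $\theta_B(\theta_A)=-\frac1\mu\ln\bigl(\tfrac12(1-\tfrac{1-2p}{D_0(\theta_A)})\bigr)$ is $C^1$ on $(\theta_{\min},\infty)$. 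There are two conditions to check. First, $D_0(\theta_A)>1-2p>0$, so the quotient is smooth and has no pole. Second, the logarithm's argument $\tfrac12(1-(1-2p)/D_0(\theta_A))$ lies in $(0,1)$, which is exactly the feasibility condition derived above. By the chain rule, $\tilde{A}$ is $C^1$ (indeed $C^\infty$), and
\begin{equation}
\tilde{A}'(\theta_A)=\partial_{\theta_A}A+\partial_{\theta_B}A\cdot\theta_B'(\theta_A),
\end{equation}
with $D_0'(\theta)=2\mu e^{-\mu\theta}$ and $D_1'(\theta)=2\mu\theta e^{-\mu\theta}$.

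The main obstacle is the boundary behaviour, and I want to confirm that it does not touch the compact interval. As $\theta_A\downarrow\theta_{\min}$, $D_0(\theta_A)\to1-2p$, so the logarithm's argument tends to $0$. Then $\theta_B\to\infty$ and $\theta_B'$ blows up. This is why the lemma is only stated on $[\theta_a,\theta_b]$ with $\theta_a>\theta_{\min}$: there $D_0(\theta_A)\ge D_0(\theta_a)>1-2p$, the logarithm's argument is bounded away from $0$ and from $1$, and every factor in $\tilde{A}'$ is a continuous function on a compact set.

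For the later certification step it would also help to record a computable explicit upper bound on $L$. I would get it by bounding $|D_0|\le1$, $|D_1|\le 1/\mu$, $|D_0'|\le 2\mu$ and $|D_1'|\le 2/e$, and by bounding $|\theta_B'|$ through its closed form evaluated at the worst endpoint $\theta_a$. This is optional for the lemma itself, which only asserts finiteness of the supremum.
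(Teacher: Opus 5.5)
Your proposal is correct and takes the same route as the paper. The paper argues that smoothness of $D_0$, $D_1$, of the constraint map $\theta_A \mapsto \theta_B(\theta_A)$ on $(\theta_{\min},\infty)$, and of the polynomial payoff, combined via the chain rule, gives Lipschitz continuity on compact subintervals; you additionally spell out steps the paper leaves implicit, namely the mean value theorem producing exactly $L=\sup|\tilde{A}'|$, finiteness of $L$ from continuity of $\tilde{A}'$ on a compact set, and the check that the logarithm's argument stays bounded away from $0$ when $\theta_a>\theta_{\min}$.
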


\begin{proof}
The functions $D_0(\theta)$ and $D_1(\theta)$ are smooth (in fact, analytic) on $(0, \infty)$.
The constraint mapping $\theta_A \mapsto \theta_B(\theta_A)$ is smooth on $(\theta_{\min}, \infty)$.
The payoff $A(\theta_A, \theta_B)$ is a polynomial in $D_0, D_1$ values, hence smooth.
By the chain rule, $\tilde{A}(\theta_A) = A(\theta_A, \theta_B(\theta_A))$ is smooth, and in particular Lipschitz on compact subintervals.
\end{proof}

To compute $L$ explicitly, we use the chain rule:
\begin{equation}
\frac{d\tilde{A}}{d\theta_A} = \frac{\partial A}{\partial \theta_A} + \frac{\partial A}{\partial \theta_B} \cdot \frac{d\theta_B}{d\theta_A}.
\end{equation}

The partial derivatives are:
\begin{align}
\frac{\partial A}{\partial \theta_A} &= -c_1 D_1'(\theta_A) D_1(\theta_B) - c_2 \bigl( D_1'(\theta_A) D_0(\theta_B) + D_0'(\theta_A) D_1(\theta_B) \bigr), \\
\frac{\partial A}{\partial \theta_B} &= -c_1 D_1(\theta_A) D_1'(\theta_B) - c_2 \bigl( D_1(\theta_A) D_0'(\theta_B) + D_0(\theta_A) D_1'(\theta_B) \bigr),
\end{align}
where
\begin{align}
D_0'(\theta) &= 2\mu e^{-\mu\theta}, \\
D_1'(\theta) &= 2\mu \theta e^{-\mu\theta}.
\end{align}

The derivative of the constraint mapping is obtained by implicit differentiation of $D_0(\theta_A) D_0(\theta_B) = 1 - 2p$:
\begin{equation}
\frac{d\theta_B}{d\theta_A} = -\frac{D_0'(\theta_A) D_0(\theta_B)}{D_0(\theta_A) D_0'(\theta_B)} = -\frac{D_0(\theta_B)}{D_0(\theta_A)} \cdot \frac{e^{-\mu\theta_A}}{e^{-\mu\theta_B}}.
\end{equation}

On a compact interval $[\theta_a, \theta_b]$, all quantities $D_0, D_1, D_0', D_1', \theta_B, \frac{d\theta_B}{d\theta_A}$ are bounded.
The Lipschitz constant $L$ can be computed numerically by evaluating $|d\tilde{A}/d\theta_A|$ on a fine grid and taking the maximum, or bounded analytically by bounding each factor.

\paragraph{Boundary Contributions.}

We must also bound the objective on the excluded regions.

\begin{lemma}[Boundary bounds]
\label{lem:boundary}
The reduced objective $\tilde{A}(\theta_A)$ extends continuously to finite limits at the boundaries:
\begin{align}
\lim_{\theta_A \to \theta_{\min}^+} \tilde{A}(\theta_A) &= A_{\min}(p), \\
\lim_{\theta_A \to \infty} \tilde{A}(\theta_A) &= A_{\infty}(p),
\end{align}
where $A_{\min}(p)$ and $A_{\infty}(p)$ are finite values.
\end{lemma}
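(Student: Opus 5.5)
The plan is to compute both limits directly. The reduced objective $\tilde{A}(\theta_A)=A(\theta_A,\theta_B(\theta_A))$ is a polynomial in the four quantities $D_0(\theta_A),D_1(\theta_A),D_0(\theta_B),D_1(\theta_B)$, by \eqref{eq:payoff-thresholds}. So it is enough to show that each of these quantities has a finite limit at each endpoint, and then pass the limit through the polynomial by continuity.

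\textbf{Behaviour of each factor.} First, $D_0(\theta)=1-2e^{-\mu\theta}$ and $D_1(\theta)=\tfrac{1}{\mu}D_0(\theta)-2\theta e^{-\mu\theta}$ are continuous on $[0,\infty)$. Moreover $D_0(\theta)\to 1$ and $D_1(\theta)\to 1/\mu$ as $\theta\to\infty$, since $\theta e^{-\mu\theta}\to 0$. So $D_0,D_1$ extend continuously to the compactified half-line $[0,\infty]$ with bounded values in $[-1,1]$ and $[-1/\mu,1/\mu]$.

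Second, I track $\theta_B$ through the constraint $D_0(\theta_B)=(1-2p)/D_0(\theta_A)$, for $p\in(0,1/2)$.
\begin{itemize}
\item As $\theta_A\to\theta_{\min}^+$, we have $D_0(\theta_A)\to 1-2p$ from above. Hence $D_0(\theta_B)\to 1^-$, so $\theta_B\to\infty$, and then $D_0(\theta_B)\to1$ and $D_1(\theta_B)\to 1/\mu$.
\item As $\theta_A\to\infty$, we have $D_0(\theta_A)\to1$. Hence $D_0(\theta_B)\to 1-2p$, so $\theta_B\to\theta_{\min}=-\ln p/\mu$, which is finite. Then $D_1(\theta_B)\to D_1(\theta_{\min})$ by continuity.
\end{itemize}
The explicit formula for $\theta_B(\theta_A)$ makes both limits transparent, because the argument of the logarithm tends to $0^+$ and $p$ respectively.

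\textbf{Assembling the limits.} Substituting into \eqref{eq:payoff-thresholds} gives explicit values. At the lower endpoint,
\[
A_{\min}(p)=-c_1 D_1(\theta_{\min})\tfrac1\mu - c_2\bigl(D_1(\theta_{\min})+(1-2p)\tfrac1\mu\bigr).
\]
At the upper endpoint, $A_\infty(p)$ is the same expression with the roles of $A$ and $B$ swapped. By the symmetry of $A$ in its two arguments, the two values coincide. Both are finite because every factor is bounded.

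\textbf{Main obstacle.} The only delicate point is that $\theta_B$ diverges, so continuity of $A$ in $(\theta_A,\theta_B)$ on $\mathbb{R}_+^2$ is not enough. I handle this by working in the variables $(D_0,D_1)$, which stay bounded and converge even when the threshold itself diverges, rather than in the thresholds. I will also note the remaining excluded-region step: since $\tilde{A}$ is continuous on $[\theta_{\min},\infty]$ after this extension, its supremum over $(\theta_{\min},\theta_{\min}+\epsilon)\cup(\theta_{\max},\infty)$ is controlled by the endpoint values plus a Lipschitz or monotonicity estimate on the small and large tails. Bounding the latter tail quantitatively (for example via $|D_1(\theta)-1/\mu|\le 2(\theta+1/\mu)e^{-\mu\theta}$) is what the certification needs beyond the bare limits.
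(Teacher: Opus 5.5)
Your proposal is correct and matches the paper's proof: you pass the limits through the payoff polynomial in $(D_0,D_1)$, using $\theta_B\to\infty$ at the lower endpoint and $\theta_B\to\theta_{\min}$ at the upper endpoint, and symmetry of $A$ gives $A_{\min}(p)=A_\infty(p)$. Your explicit formula for $A_{\min}(p)$ goes beyond what the paper writes, and so does your caution that finite endpoint limits alone do not control $\tilde{A}$ on the excluded tails; that caution bears on how Proposition~\ref{prop:certified-bound} uses this lemma.
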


\begin{proof}
As $\theta_A \to \theta_{\min}^+$, we have $D_0(\theta_A) \to (1 - 2p)^+$, so $D_0(\theta_B) \to 1^-$, which requires $\theta_B \to \infty$.
In this limit:
\begin{itemize}
\item $D_0(\theta_A) \to 1 - 2p$ (finite),
\item $D_1(\theta_A) \to D_1(\theta_{\min})$ (finite),
\item $D_0(\theta_B) \to 1$ (finite),
\item $D_1(\theta_B) \to 1/\mu$ (finite).
\end{itemize}
Since the payoff $A(\theta_A, \theta_B)$ is a polynomial in $D_0, D_1$ values, the limit $A_{\min}(p)$ is finite.

As $\theta_A \to \infty$, we have $D_0(\theta_A) \to 1$, so $D_0(\theta_B) \to 1 - 2p$, giving $\theta_B \to \theta_{\min}$.
By a symmetric argument, all quantities remain finite and $A_{\infty}(p)$ is finite.

By symmetry of the problem under $\theta_A \leftrightarrow \theta_B$, we have $A_{\min}(p) = A_{\infty}(p)$.
\end{proof}

\begin{remark}
Although the boundary limits are finite, numerical evaluation confirms that for $p$ in the range of interest, these limits lie below the interior maximum.
The derivative $|d\tilde{A}/d\theta_A|$ diverges as $\theta_A \to \theta_{\min}^+$ (since $d\theta_B/d\theta_A$ grows like $e^{\mu\theta_B}$), indicating that the payoff changes rapidly near the boundary, but this does not imply the limit is $-\infty$.
\end{remark}

\paragraph{Certified Upper Bound.}

Combining the above, we obtain:

\begin{proposition}[Certified classical bound]
\label{prop:certified-bound}
Let $[\theta_{\min} + \epsilon, \theta_{\max}]$ be a compact interval such that
\begin{equation}
\tilde{A}(\theta_{\min} + \epsilon) < A_{\mathrm{grid}} \quad \text{and} \quad \tilde{A}(\theta_{\max}) < A_{\mathrm{grid}}.
\end{equation}
Let $L$ be a Lipschitz constant for $\tilde{A}$ on $[\theta_{\min} + \epsilon, \theta_{\max}]$.
Then
\begin{equation}
A_{\mathrm{grid}} \leq A_{\mathrm{cl}}^*(p) \leq A_{\mathrm{grid}} + \frac{L\delta}{2}.
\end{equation}
\end{proposition}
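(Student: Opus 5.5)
The plan is to split the feasible set $(\theta_{\min},\infty)$ for $\theta_A$ into three parts: the compact search interval $I=[\theta_{\min}+\epsilon,\theta_{\max}]$ and the two excluded tails. We bound $\tilde A$ on each part. By Theorem~\ref{thm:classical-threshold}, $A_{\mathrm{cl}}^*(p)=\sup_{\theta_A>\theta_{\min}}\tilde A(\theta_A)$, so it suffices to show that $\tilde A\le A_{\mathrm{grid}}+L\delta/2$ everywhere on the feasible set. The lower bound $A_{\mathrm{grid}}\le A_{\mathrm{cl}}^*(p)$ is immediate, since every grid point is a feasible threshold pair and hence a valid deterministic strategy at splitting probability $p$.

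\textbf{Compact interval.} Take any $\theta\in I$ lying between consecutive grid points $\theta_A^{(k)}$ and $\theta_A^{(k+1)}$, with spacing $\delta$. The Lipschitz bound (Lemma~\ref{lem:lipschitz}) gives
\[
\tilde A(\theta)\le \min\bigl(\tilde A(\theta_A^{(k)})+L(\theta-\theta_A^{(k)}),\ \tilde A(\theta_A^{(k+1)})+L(\theta_A^{(k+1)}-\theta)\bigr)\le A_{\mathrm{grid}}+L\min(t,\delta-t),
\]
where $t=\theta-\theta_A^{(k)}$. The right-hand side is at most $A_{\mathrm{grid}}+L\delta/2$. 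One subtlety is the last cell $[\theta_A^{(K)},\theta_{\max}]$, which may be shorter than $\delta$ and has no grid point at its right end. To handle it, I would treat $\theta_{\max}$ as an additional evaluation point. The hypothesis $\tilde A(\theta_{\max})<A_{\mathrm{grid}}$ plays the role of a grid value there, so the same two-sided estimate still holds. Alternatively, one can assume $\theta_{\max}$ is chosen so that it lands on the grid.

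\textbf{Tails.} This is the main obstacle. The stated hypotheses control $\tilde A$ only at the two endpoints, so by themselves they do not bound $\tilde A$ on $(\theta_{\min},\theta_{\min}+\epsilon)$ or on $(\theta_{\max},\infty)$. My first attempt would be to show that $\tilde A$ is monotone on each tail, increasing toward the interior. Then the endpoint values dominate each tail, and both tails are bounded by $A_{\mathrm{grid}}$. To establish monotonicity, I would use the explicit derivative $d\tilde A/d\theta_A$ from the chain-rule formula. Near $\theta_{\min}$, the term $\frac{\partial A}{\partial\theta_B}\frac{d\theta_B}{d\theta_A}$ dominates. Here $d\theta_B/d\theta_A\to-\infty$ like $e^{\mu\theta_B}$, while $D_1'(\theta_B),D_0'(\theta_B)$ decay like $\theta_Be^{-\mu\theta_B}$. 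I would therefore show that the product tends to a limit whose sign is fixed, giving a definite sign for the derivative on a neighbourhood of $\theta_{\min}$, and then pick $\epsilon$ inside that neighbourhood. The region $\theta_A\to\infty$ follows from the $\theta_A\leftrightarrow\theta_B$ symmetry noted in Lemma~\ref{lem:boundary}.

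If monotonicity proves hard to establish analytically, I would fall back on a weaker statement. Lemma~\ref{lem:boundary} gives the limits $A_{\min}(p)=A_\infty(p)$, and these can be checked to lie strictly below $A_{\mathrm{grid}}$. Combined with continuity, I would then choose $\epsilon$ small and $\theta_{\max}$ large enough that $\tilde A$ stays below $A_{\mathrm{grid}}$ on each tail. Under this reading, the endpoint hypotheses in the proposition are shorthand for that tail condition. In either case the tails contribute at most $A_{\mathrm{grid}}$, and together with the compact-interval estimate this yields the upper bound $A_{\mathrm{grid}}+L\delta/2$.
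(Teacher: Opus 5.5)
Your decomposition (trivial lower bound, Lipschitz estimate on $[\theta_{\min}+\epsilon,\theta_{\max}]$, separate control of the tails) is the same as the paper's. Your treatment of the last cell $[\theta_A^{(K)},\theta_{\max}]$, using $\theta_{\max}$ as an extra evaluation point with $\tilde A(\theta_{\max})<A_{\mathrm{grid}}$, is correct. It also repairs a slip in the paper, which asserts that every point of the interval lies within $\delta/2$ of a grid point; that fails in the last cell.

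On the tails, the paper gives one sentence: the endpoint conditions, continuity and the finite limits of Lemma~\ref{lem:boundary} ``ensure that the supremum is attained'' on the compact interval. That is exactly the inference you rightly refuse. A continuous function with finite limits can exceed $A_{\mathrm{grid}}$ inside $(\theta_{\min},\theta_{\min}+\epsilon)$ while lying below it at $\theta_{\min}+\epsilon$. So the gap you identify is not one the paper fills. The statement as written needs either specific information about $\tilde A$ on the tails or a stronger hypothesis, and neither argument proves it verbatim.

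Your route (1) does not supply that information. With $\mu=1$, along the constraint curve
\[
\frac{\partial A}{\partial\theta_B}\frac{d\theta_B}{d\theta_A}=\frac{2e^{-\theta_A}D_0(\theta_B)}{D_0(\theta_A)}\Bigl[\theta_B\bigl(c_1D_1(\theta_A)+c_2D_0(\theta_A)\bigr)+c_2D_1(\theta_A)\Bigr].
\]
As $\theta_A\to\theta_{\min}^+$ this grows like $\theta_B$ with the sign of $c_1(1-2p+2p\ln p)+c_2(1-2p)$. Meanwhile $\partial A/\partial\theta_A\to-2p\bigl[(c_1+c_2)\theta_{\min}+c_2\bigr]<0$. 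The bracket changes sign in $p$ (for $\lambda=0.8$, near $p\approx 0.21$). When it is small and positive, it beats the negative bounded term only for very large $\theta_B$, i.e.\ on an extremely thin neighbourhood of $\theta_{\min}$.

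Concretely, at $\lambda=0.8$, $p=0.3$ (so $\theta_{\min}\approx1.204$, inside the claimed advantage region) one gets
\[
\lim_{\theta_A\to\theta_{\min}^+}\tilde A=-(c_1+c_2)D_1(\theta_{\min})-c_2D_0(\theta_{\min})\approx0.625,
\]
with $\tilde A(1.25)\approx0.41$ and $\tilde A(\theta_{\mathrm{sym}})\approx-0.003$. So $\tilde A$ increases toward the boundary, and its supremum is the degenerate limit. This does not contradict the proposition, whose endpoint hypotheses then cannot both hold. It does show that the monotonicity you propose to prove is false in general. Moreover, even where a monotone neighbourhood exists, $\epsilon$ is fixed by the statement, so ``choose $\epsilon$ inside that neighbourhood'' proves a different result.

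Route (2) is the right repair. Replace the endpoint conditions by
\[
\sup\{\tilde A(\theta):\theta\in(\theta_{\min},\theta_{\min}+\epsilon]\cup[\theta_{\max},\infty)\}\le A_{\mathrm{grid}}.
\]
Via the involution $\theta_A\mapsto\theta_B(\theta_A)$, this reduces to a condition near $\theta_{\min}$ only, and with it your argument is complete. Be aware, though, that your remark that $A_{\min}(p)$ ``can be checked'' to lie below $A_{\mathrm{grid}}$ is not automatic. The example above shows it fails for part of the range of interest.
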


\begin{proof}
The lower bound $A_{\mathrm{grid}} \leq A_{\mathrm{cl}}^*(p)$ is immediate since $A_{\mathrm{grid}}$ is achieved by a feasible strategy.

For the upper bound, the boundary conditions $\tilde{A}(\theta_{\min} + \epsilon) < A_{\mathrm{grid}}$ and $\tilde{A}(\theta_{\max}) < A_{\mathrm{grid}}$, combined with continuity of $\tilde{A}$ and the finite boundary limits from Lemma~\ref{lem:boundary}, ensure that the supremum of $\tilde{A}$ is attained on $[\theta_{\min} + \epsilon, \theta_{\max}]$.

For any $\theta_A$ in this interval, there exists a grid point $\theta_A^{(k)}$ with $|\theta_A - \theta_A^{(k)}| \leq \delta/2$.
By the Lipschitz condition:
\begin{equation}
\tilde{A}(\theta_A) \leq \tilde{A}(\theta_A^{(k)}) + L \cdot \frac{\delta}{2} \leq A_{\mathrm{grid}} + \frac{L\delta}{2}.
\end{equation}
Taking the supremum over $\theta_A$ yields the result.
\end{proof}

\begin{remark}[Practical computation]
In practice, we:
\begin{enumerate}
\item Choose $\epsilon$ small enough and $\theta_{\max}$ large enough that the boundary conditions in Proposition~\ref{prop:certified-bound} are satisfied (verified numerically).
\item Compute $A_{\mathrm{grid}}$ on a grid with spacing $\delta$.
\item Compute the Lipschitz constant $L$ numerically by evaluating $|d\tilde{A}/d\theta_A|$ on a finer grid.
\item Report the certified upper bound $A_{\mathrm{cl}}^*(p) \leq A_{\mathrm{grid}} + L\delta/2$.
\end{enumerate}
For demonstrating quantum advantage, it suffices to show $A_{\mathrm{qu}}(p) > A_{\mathrm{grid}} + L\delta/2$.
\end{remark}

\subsubsection{Concavification}

By Theorem~\ref{thm:concavification}, the correct classical benchmark is $A_{\mathrm{cl,SR}}^*(p) = \mathrm{conc}(A_{\mathrm{cl}}^*)(p)$.
The concave envelope is computed via the upper hull algorithm in $O(n)$ time for sorted input.

\subsection{Quantum Lower Bound via Polynomial Measurement Angles}

For quantum strategies, we consider players sharing a maximally entangled two-qubit state and performing local measurements parameterized by angles $\theta_A(x_1)$ and $\theta_B(x_2)$.
We are free to choose any Bell state and measurement angle convention; our numerical implementation uses a convention such that the resulting correlation is
\begin{equation}
    \mathbb{E}[o_A(X_1) o_B(X_2) \mid X_1 = x_1, X_2 = x_2] = \cos\bigl(2(\theta_A(x_1) - \theta_B(x_2))\bigr).
\end{equation}

We parameterize the measurement angles as polynomials of degree $n$:
\begin{equation}
    \theta_A(x) = \sum_{k=0}^{n} a_k x^k, \qquad \theta_B(x) = \sum_{k=0}^{n} b_k x^k.
\end{equation}
The game payoff and splitting probability become
\begin{align}
    A(a, b) &= -\mathbb{E}\bigl[\cos\bigl(2(\theta_A(X_1) - \theta_B(X_2))\bigr) \cdot w(X_1, X_2)\bigr], \\
    p(a, b) &= \mathbb{E}\left[\frac{1 - \cos\bigl(2(\theta_A(X_1) - \theta_B(X_2))\bigr)}{2}\right].
\end{align}

For computational efficiency, we evaluate these expectations using Gauss--Laguerre quadrature.
For $X \sim \mathrm{Exp}(\mu)$, we use the substitution $y = \mu x$ to obtain quadrature points $\{x_i\}$ and weights $\{w_i\}$ such that
\begin{equation}
    \mathbb{E}[f(X)] \approx \sum_{i=1}^{m} w_i f(x_i).
\end{equation}
We precompute the weight matrices
\begin{equation}
    W_{ij} = w(x_i, x_j) \cdot w_i w_j, \qquad \pi_{ij} = w_i w_j,
\end{equation}
enabling rapid evaluation of the payoff and constraint for any polynomial coefficients via array operations.

We maximize the payoff subject to the splitting probability constraint using sequential least-squares programming (SLSQP), with multiple random restarts to avoid local optima.
The result is a lower bound on the quantum value, as we are exhibiting a specific (polynomial) quantum strategy.

\subsection{Threshold Payoff Computation}

The $w$-threshold payoff $A^*(p)$ corresponds to the optimal policy established in Appendix~\ref{sec:delta_threshold}: split when $w(X_1, X_2) > \tau_p$ and bunch otherwise, where $\tau_p$ is chosen such that $\Pr[w(X_1, X_2) > \tau_p] = p$.

We compute $A^*(p)$ via Monte Carlo estimation:
\begin{enumerate}
    \item Sample $N$ independent pairs $(X_1^{(i)}, X_2^{(i)}) \sim \mathrm{Exp}(\mu)^{\otimes 2}$.
    \item Compute $w^{(i)} = w(X_1^{(i)}, X_2^{(i)})$ for each sample.
    \item Determine $\tau_p$ as the $(1-p)$-quantile of $\{w^{(i)}\}$.
    \item Compute the $w$-threshold decision $\sigma^{*(i)} = \mathrm{sign}(\tau_p - w^{(i)})$.
    \item Estimate the payoff as $A^*(p) \approx -\frac{1}{N}\sum_{i=1}^{N} \sigma^{*(i)} w^{(i)}$.
\end{enumerate}

\subsection{Throughput Model}

The throughput $\mathcal{T}(p)$ as a function of splitting probability is determined by the baseline task model described in Appendix~\ref{sec:baseline_throughput}.
For the numerical results presented in the main text, we use the exponential saturation warm-up model from Section~\ref{sec:warmup_example}, giving
\begin{equation}
\mathcal{T}(p) = \phi_{\max}(1-\rho) \cdot \frac{2\alpha}{\lambda(1+p) + 2\alpha}
\end{equation}
with $\phi_{\max} = 1$ and $\alpha = 0.5$.

\subsection{Numerical Parameters}
Table~\ref{tab:numerical-params} summarizes the numerical parameters used to generate the results in the main text.
\begin{table}[h]
\centering
\begin{tabular}{lll}
\hline
\textbf{Parameter} & \textbf{Value} & \textbf{Description} \\
\hline
$\lambda$ & 0.8 & Customer pair arrival rate \\
$\mu$ & 1.0 & Service rate \\
$\alpha$ & 0.5 & Warm-up rate \\
$\phi_{\max}$ & 1.0 & Steady-state productivity \\
\hline
\multicolumn{3}{l}{\textit{Classical optimization}} \\
Grid points & 500 & Number of $\theta_A$ grid points \\
$\theta_{\max}$ & 12.0 & Maximum threshold considered \\
\hline
\multicolumn{3}{l}{\textit{Quantum optimization}} \\
Polynomial degree & 2 & Degree of measurement angle polynomials \\
Quadrature points & 60 & Gauss--Laguerre quadrature points \\
Restarts & 20 & Random restarts for optimization \\
Seed & 1 & Random seed for reproducibility \\
\hline
\multicolumn{3}{l}{\textit{$w$-threshold computation}} \\
Monte Carlo samples & $10^5$ & Samples for payoff estimation \\
\hline
\end{tabular}
\caption{Numerical parameters used for computing classical and quantum performance.}
\label{tab:numerical-params}
\end{table}


\end{document}